\documentclass[12pt]{article}
\usepackage{amsmath,amsthm,amssymb,mathtools}

\usepackage{array,hyperref,graphicx}
\hypersetup{colorlinks=true,allcolors=blue}
\usepackage{xurl}   
\usepackage{enumitem}
\usepackage{color}
\usepackage{tikz}
\usetikzlibrary{matrix,arrows.meta,positioning}
\tikzset{gn/.style={circle,draw,minimum size=6.5mm,inner sep=0pt,font=\small},
  rt/.style={circle,fill=black!60,minimum size=2.5mm,inner sep=0pt},
  bad/.style={draw=red!70!black,fill=red!12},
  pth/.style={line width=1.6pt,blue!60!black},
  mat/.style={matrix of math nodes,nodes={minimum width=15mm,minimum height=6.5mm,anchor=center,font=\small},
    column sep=1pt,row sep=1pt}}

\makeatletter
\@addtoreset{equation}{section}
\makeatother

\newtheorem{theorem}{Theorem}[section]
\newtheorem{proposition}[theorem]{Proposition}
\newtheorem{lemma}[theorem]{Lemma}

\newtheorem{definition}[theorem]{Definition}
\newtheorem{corollary}[theorem]{Corollary}
\theoremstyle{definition}
\newtheorem{remark}[theorem]{Remark}

\newcommand{\C}{\mathbb C}
\newcommand{\N}{\mathbb N}
\def\cA{{\cal A}} \def\cM{{\cal M}} \def\cN{{\cal N}} \def\cS{{\cal S}}  \def\cY{{\cal Y}}
\newcommand{\Ch}{\operatorname{Ch}}
\newcommand{\Monom}{\mathbf{Monom}}
\newcommand{\Exp}{\operatorname{Exp}}

\newcommand{\cZ}{\mathcal{Z}}

\newcommand{\Sstrut}{\rule[-1.3ex]{0pt}{3.8ex}}
\colorlet{YAtop}{blue!12}\colorlet{YAleft}{blue!35}\colorlet{YAright}{blue!55}
\colorlet{YBtop}{orange!15}\colorlet{YBleft}{orange!45}\colorlet{YBright}{orange!70}
\colorlet{YCtop}{black!8}\colorlet{YCleft}{black!25}\colorlet{YCright}{black!40}

\title{ \hfill { \normalsize QMUL-PH-26-35} \\[1em]
{\Large { \bf Degeneracy graph determinants\\
and higher-dimensional Young diagrams } } }

\author{
	Sanjaye Ramgoolam$^{1}$\thanks{\href{mailto:s.ramgoolam@qmul.ac.uk}{s.ramgoolam@qmul.ac.uk}},
	Micha{\l} Studzi{\'n}ski$^{2}$\thanks{\href{mailto:michal.studzinski@ug.edu.pl}{michal.studzinski@ug.edu.pl}}\\[1em]
	\normalsize $^{1}$Centre for Theoretical Physics, Department of Physics and Astronomy,\\
	\normalsize  Queen Mary University of London, UK\\
	\normalsize $^{2}$International Centre for Theory of Quantum Technologies, \\
	\normalsize  University of Gda{\'n}sk, Poland
}

\begin{document}

\date{}

\maketitle
\vspace{-2.0em}
\begin{center}
{ \bf ABSTRACT  }
\end{center}

\vspace{-0.5em}

\normalsize{
Kemp and Ramgoolam proposed a basis of monomials for a commutative semisimple
algebra with generators \(C_1,\ldots, C_L\), constructed from a degeneracy
graph which records how the eigenspaces of \( C_1\) are successively split by
\( C_2,\ldots, C_L\).  They showed that the basis property is equivalent to the
non-vanishing of the determinant of the matrix of values of these monomials
on the projector basis, and conjectured a factorised formula for this
determinant, as a product of powers of eigenvalue differences.  We prove this
formula for any number of generators.  The proof uses graph analogues of
Newton polynomials.  Their evaluation matrix is not triangular in general,
but expanding the rows of the children of each node of the penultimate layer
in a Newton basis of their eigenvalues makes it block triangular, with diagonal blocks given by smaller degeneracy
graphs in which the final layer is removed.  The formula then follows by
induction on the number of layers.
We show that the exponents in the determinant are overlap volumes of
higher-dimensional Young diagrams attached to the nodes of the graph:
intervals, ordinary Young diagrams, plane partitions and their higher
analogues.  Finally, the block triangular structure gives an explicit
recursive inversion of the evaluation matrix, expressing the projectors in the
monomial basis.
}

\newpage

\tableofcontents

\section{ Introduction }

The duality between \(\mathcal N=4\) super Yang--Mills theory with gauge
group \(U(N)\) and type IIB string theory on \(AdS_5\times S^5\) is the
paradigmatic example of the AdS/CFT correspondence
\cite{Maldacena,GKP,Witten}. The half-BPS sector includes quantum states describing ordinary gravitons as well as giant gravitons \cite{MST,GMT,HHI}.   The quantum states in this sector  correspond to local operators which are holomorphic gauge invariant polynomials in a single complex matrix \(Z\), and
at dimension \(n\) an orthogonal basis, in the free field inner product, is
given by Schur polynomial operators labelled by Young diagrams \(R\) with \(n\)
boxes and at most \(N\) rows \cite{CJR}.  For large \(n\), of order \(N^2\),
these operators provide duals to the smooth
bubbling geometries of Lin, Lunin and Maldacena \cite{LLM}.  A natural question, with a bearing on information loss
in black hole physics, is how much of the microscopic data \(R\) can be
recovered by a probe which only has access to a finite number of conserved
charges, such as the asymptotic multipole moments of the gravitational field
\cite{BCLS}, while related asymptotic information is in the one-point functions in the CFT \cite{SkenderisTaylor}.  The
relevant charges are Casimirs of \(U(N)\), and by Schur--Weyl duality they
correspond to central elements of the symmetric group algebra
\(\C(S_n)\).  In \cite{KR2019}, this led to the study of the centre
\(\cZ(\C(S_n))\) through the class sums \(T_k\) of permutations with a single
non-trivial cycle of length \(k\). We refer to these as cycle class sums.  The normalised characters of \(T_2\) alone distinguish all Young diagrams only for a few small \(n\), those of \(T_2,T_3\) do so up to
\(n=14\), and in general one asks for the smallest \(k_*(n)\) such that
\(T_2,\ldots,T_{k_*(n)}\) distinguish all Young diagrams with \(n\) boxes.
Probes of increasing classical dimension correspond to adjoining these
generators one at a time.  The sequences, and the associated Shannon
entropies, were studied in \cite{KR2019}.  Lower bounds on the growth of
\(k_*(n)\) were obtained in \cite{BGR-detection}, and upper bounds, based on a
conjectured generalised dominance ordering, were proposed in
\cite{Kemp-dominance}.

Distinguishing the Young diagrams is equivalent to the statement that the
projectors \(P_R\) which form a basis for the centre can be written as polynomials in
\(T_2,\ldots,T_{k_*(n)} \) \cite{KR2019}: the class sums form a \emph{non-linear
generating set} of the centre.  This connects the physics question to
quantum information and complexity.  Quantum algorithms which detect the
projectors \(P_R\) using a minimal generating set have complexity polynomial
in \(n\), even though the number of Young diagrams grows as
\(e^{\sqrt n}\) \cite{BGR-detection}; related questions for Kronecker
coefficients were studied in \cite{BGR-ribbon,Bravyi}.  Minimal generating sets
also enter the eigenvalue systems used to construct integer orthogonal bases
of multi-matrix invariants at finite \(N\) \cite{PRS-eigen}, bases which are useful for studying one-loop finite $N$ physics \cite{PRS-OneLoop}, and they enter the
integrality questions of combinatorial topological field theories and
topological strings built from finite groups \cite{dMKHKR,RamgoolamSharpe}.

A complementary question concerns the explicit \emph{construction} of the
projectors: given a non-linear generating set, can one write down a basis of
the algebra consisting of monomials in the generators, and express the
projectors in it?  In \cite{KempRamgoolam2026}, Kemp and Ramgoolam formulated
this question for an arbitrary finite-dimensional commutative associative
semisimple (CASS) algebra \(\cA\) with an ordered generating set
\(C_1,\ldots,C_L\).  Adjoining the generators one at a time gives a sequence of
subalgebras \(\cA_1\to\cA_2\to\cdots\to\cA_L=\cA\), and the way the projectors
of each subalgebra split in the next is recorded in a layered tree, the
\emph{degeneracy graph}, whose nodes carry the eigenvalues of the generators.
From the combinatorics of this graph, Kemp and Ramgoolam proposed an explicit
set of graph monomials in \(C_1,\ldots,C_L\), proved that their number is the
dimension of \(\cA\), and proved that they form a basis for one and two
layers.  For any number of layers, the basis property is equivalent to the
non-vanishing of the determinant of the matrix of values of the graph monomials
on the projectors, and they conjectured a factorised formula for this
determinant, as a product of powers of differences of eigenvalues of sibling
nodes, with exponents computed from truncated graphs.  The conjecture was
supported by extensive computer checks.  Examples include the centres of
symmetric group algebras generated by the \(T_k\), and the maximal commutative
subalgebras generated by Jucys--Murphy elements, whose degeneracy graphs are
Gelfand--Tsetlin patterns; applications to matrix units of semisimple
algebras and to multi-matrix invariants are described in
\cite{KempRamgoolam2026}.

Schur--Weyl duality gives these questions a direct interpretation in quantum
information. On \(n\) quantum systems of local dimension \(N\), the commuting
actions of \(U(N)\) and \(S_n\) decompose \((\C^N)^{\otimes n}\) into sectors
labelled by Young diagrams \(R\) with \(n\) boxes and at most \(N\) rows. The
images of the central projectors \(P_R\) select these sectors. Measuring them
on independently prepared copies \(\rho^{\otimes n}\) produces a Young diagram
whose normalised row lengths estimate the eigenvalues of \(\rho\)~\cite{KW}.
The same decomposition underlies the quantum Schur transform, for which
efficient quantum circuits are known~\cite{BCH,SchurHighDim2025}.

The cycle class sums considered here also have a direct interpretation in
terms of spectral moments. We also write \(T_k\) for the operator representing
the unnormalised class sum by permutations of the tensor factors. For a density
operator \(\rho\) on \(\C^N\),
\begin{equation}
\begin{aligned}
\operatorname{Tr}\!\left(\rho^{\otimes n}T_k\right)
=\frac{n!}{k(n-k)!}\operatorname{Tr}(\rho^k),
\qquad 2\leq k\leq n.
\end{aligned}
\label{eq:cycle-class-moments}
\end{equation}
The combinatorial factor counts the permutations with one non-trivial cycle
of length \(k\). Thus the generators considered here give access to spectral
moments of a quantum state. The underlying cyclic-permutation identity is
used in quantum procedures for estimating purity and higher moments without
reconstructing the full density matrix~\cite{E}.

The results obtained here establish the algebraic connection between these commuting
observables and the projectors representing their spectral decompositions. The
application of the determinant formula to a family of sums of cycle classes that
distinguish all Young diagrams with \(n\) cells yields the graph monomial basis for
the centre, and the recursion formula gives each projector \(P_R\) in that basis.
Expectation values of these projectors yield probabilities of the Schur--Weyl measurement
outcomes as expectation values of the graph monomials. The approach is also applicable to
the commutative subalgebra generated by the Jucys--Murphy elements whose spectral
projectors describe the branching paths of the symmetric group representations.

The same questions arise for other algebras of interest in physics.  Walled
Brauer algebras \(B_N(m,n)\) encode the combinatorics of mixed tensors of
\(U(N)\) with \(m\) fundamental and \(n\) antifundamental indices.  They
describe brane--anti-brane systems in gauge--gravity duality \cite{KimuraRamgoolam}.
In this setting, port-based teleportation provides a further application in
quantum information. In this protocol, the receiver selects one of
several output systems according to the sender's measurement outcome, without
applying a correcting unitary. The measurement operators belong to the algebra
of partially transposed permutations, which is the image of a walled Brauer
algebra on mixed tensor space~\cite{PBT1,GrinkoOzols2024,StudzinskiWalledBrauer2025}.
Resolving these operators into irreducible sectors gives the spectral data
needed to evaluate the teleportation fidelity~\cite{PBT1}.
The same algebraic methods enter the optimisation of the shared resource state
and the measurement~\cite{PBT2}. Efficient algorithms for deterministic and
probabilistic PBT with qubits have also been developed~\cite{WillsHsiehStrelchuk2024}.

More generally, quantum channels covariant under tensor powers of \(U(N)\) have
Choi operators commuting with the corresponding mixed tensor action
\(U^{\otimes m}\otimes\overline{U}^{\otimes n}\), where \(m\) and \(n\) count the
output and input systems, respectively. Mixed Schur--Weyl duality therefore
places these operators in the image of \(B_N(m,n)\). Efficient quantum circuits
for the mixed Schur transform have been constructed~\cite{GrinkoBurchardtOzols2023,Nguyen2023};
a related twisted Schur transform is used in an efficient algorithm for
port-based teleportation~\cite{FeiTimmermanHayden2023}. Projectors adapted to the
branching of these algebras have been used to reduce quantum optimisation
problems with additional symmetries, including problems in asymmetric
cloning~\cite{GrinkoOzols2024}. These applications provide an additional setting
for constructing projectors from commuting generators in commutative
semisimple subalgebras.

Recently, the regime \(N<m+n\), below the stable range and including
non-semisimple specialisations, was studied
using restricted Bratteli diagrams \cite{RS-walledBrauer}, and it was found that the counting  of nodes in these graphs is  governed by an infinite tower of simple harmonic oscillators.  Constructing the projectors of the semisimple images of such algebras on mixed tensor space, and of their commutative semisimple
subalgebras, from a small set of generators is a problem of exactly the kind
considered here, and is one of the motivations for the present work.

In this paper we prove the determinant conjecture of \cite{KempRamgoolam2026}
for degeneracy graphs with any number of layers
(Theorem~\ref{thm:KR}), and hence that the Kemp--Ramgoolam graph monomials
form a basis of every CASS algebra with a given ordered generating set.  Our
results are as follows.
\begin{itemize}
\item In Section~\ref{sec:proof} we give a direct bijection between the graph
  monomials and the projectors of the algebra (Proposition~\ref{prop:bijection}),
  and introduce graph analogues of Newton polynomials.  In one
  variable, the Vandermonde determinant is most easily computed by changing
  from the monomials \(1,x,x^2,\ldots\) to the Newton polynomials
  \(1,\,x-x_1,\,(x-x_1)(x-x_2),\ldots\), whose evaluation matrix is triangular.
  The graph Newton polynomials are related to the graph monomials by a
  unitriangular change of basis, but their evaluation matrix is in general
  \emph{not} triangular.  Nevertheless its determinant is the product of
  its diagonal entries (Theorem~\ref{thm:newtondet}).  To prove this, we
  expand the rows of sibling nodes in the Newton basis of their eigenvalues
  (Lemma~\ref{lem:newton}).  This factorises the evaluation matrix as a block
  diagonal matrix of one-layer Newton--Vandermonde matrices, one for each set
  of siblings, times a matrix which, by a degree argument
  (Lemma~\ref{lem:degree}), is block triangular, with diagonal blocks given by
  smaller degeneracy graphs, obtained by removing the final layer.  The one-variable
  Newton construction thus enters twice, in the graph Newton polynomials and in
  the expansion among siblings.  The determinant formula follows by induction
  on the number of layers.
\item In Section~\ref{sec:young} we give a geometric description of the exponents.  To each
  node we attach a higher-dimensional Young diagram, which records the
  continuations of the graph through that node: an interval, an ordinary
  Young diagram, a plane partition, and higher-dimensional analogues, as the
  depth increases.  The exponent of a sibling difference is the overlap
  volume of the Young diagrams of the two siblings, i.e.\ the number of points
  in their intersection.  Plane partitions and
  their four-dimensional analogues are familiar in physics from the melting
  crystal picture of topological strings \cite{ORV} and from the
  solid partitions of the \emph{magnificent four} model \cite{Nekrasov4}.
\item In Section~\ref{sec:projectors} we return to the motivating question of
  \cite{KempRamgoolam2026}, the expansion of the projectors of the algebra in
  the monomial basis.  The determinant formula guarantees that this expansion
  exists.  The block triangular structure found in the proof, together with
  the divided differences, which invert the Newton expansion among siblings,
  also gives an explicit recursive procedure for computing it.
\end{itemize}

\section{Newton--Vandermonde proof of the graph determinant formula}
\label{sec:proof}

In this section we explain the mechanism behind the Kemp--Ramgoolam graph
determinant formula.  The argument is the graph analogue of the usual
Newton-basis proof of the Vandermonde determinant.

Recall first the ordinary one-variable situation.  Let \(x_1,\ldots,x_n\) be
distinct numbers, listed in a fixed order.  The Vandermonde determinant
\(\prod_{1\le k<m\le n}(x_m-x_k)\) is often written as the determinant of the
matrix of values at \(x_1,\ldots,x_n\) of the monomial basis
\begin{equation}
1,x,x^2,\ldots,x^{n-1}.
\end{equation}
It can equally be computed using the \emph{Newton basis} of
\(x_1,\ldots,x_n\), whose \(d\)-th element is
\(\pi_d(x)=\prod_{m<d}(x-x_m)\), for \(1\le d\le n\):
\begin{equation}
1,\qquad
x-x_1,\qquad
(x-x_1)(x-x_2),\qquad
\ldots,\qquad
\prod_{m=1}^{n-1}(x-x_m).
\label{eq:newtonbasis1var}
\end{equation}
The polynomial \(\pi_d\) is monic of degree \(d-1\), depends only on
\(x_1,\ldots,x_{d-1}\), and vanishes at these points.  Hence the change from
monomials to Newton polynomials is triangular with diagonal entries equal to
\(1\), while the matrix of values of the Newton basis, with entry
\(\pi_d(x_k)\) in row \(k\) and column \(d\), is lower triangular, with
diagonal entries \(\pi_k(x_k)=\prod_{m<k}(x_k-x_m)\), the Vandermonde factors.
We call it the \emph{Newton--Vandermonde matrix} of \(x_1,\ldots,x_n\).

We follow the same strategy for the Kemp--Ramgoolam set of graph
monomials.  Sections~\ref{sec:ordconv} and~\ref{sec:admchld} set
up the data of the graph from which the graph monomials are built, and
Section~\ref{sec:admchld} establishes a bijection between the graph monomials
and the final nodes of the graph (Proposition~\ref{prop:bijection}).  In
Section~\ref{sec:graphNewton} we use the same data to define graph analogues
of the Newton polynomials, and relate them to the graph monomials.  In
Section~\ref{sec:evalNewton} we evaluate them at the final nodes of the graph
and show that the determinant of the resulting evaluation matrix is the
product of its diagonal entries (Theorem~\ref{thm:newtondet}), by induction on
the number of layers.  The induction step expands the rows of siblings in the
Newton basis \eqref{eq:newtonbasis1var} of their eigenvalues (Lemmas~\ref{lem:newton}
and~\ref{lem:degree}), so the one-variable Newton construction is used a second
time.  In Section~\ref{sec:exponents} the result is put in the form of the
Kemp--Ramgoolam determinant formula.

\subsection{The ordering convention}
\label{sec:ordconv} 

Let
\begin{equation}
V_0,V_1,\ldots,V_L
\end{equation}
be the layers of the Kemp--Ramgoolam degeneracy graph
\(\Gamma\).  Here \(V_0\) consists of the root, and the
elements of \(V_L\) are called final nodes.  The final nodes label the
projector basis of the commutative algebra \(\cA_L\) generated by
\(C_1,\ldots,C_L\) (more generally, \(\cA_i\) denotes the subalgebra generated
by \(C_1,\ldots,C_i\)), and each final node carries the eigenvalues of
\(C_1,\ldots,C_L\) on its projector (Section~\ref{sec:projectors}). 

\paragraph{Parents, children and siblings.}  Edges of the graph join nodes in
consecutive layers only.  Every node \(u\in V_i\) with \(i\geq1\) is joined to
exactly one node of \(V_{i-1}\), called its parent, and \(u\) is called a
child of its parent (\cite{KempRamgoolam2026} uses ``daughter'').  For
\(b\in V_{i-1}\) we write
\begin{equation}
\Ch_i(b)\subset V_i
\end{equation}
for the set of children of \(b\).  The root has all of \(V_1\) as its
children, \(\Ch_1(v_0)=V_1\).  Two nodes with the same parent are siblings; in
particular all nodes of \(V_1\) are siblings.  Following edges to the right
from a node reaches its descendants, and following them to the left reaches
its ancestors, ending at the root.  Every node of \(V_0,\ldots,V_{L-1}\) has
at least one child, so every node has at least one descendant in \(V_L\).  For
example, in the graph of Figure~\ref{fig:D24}, \(T_1\) and \(T_2\) are the
children of node \(1\in V_1\) and are siblings, while \(T_2\) and \(T_3\) are
not siblings.

In the systematic construction used for the degeneracy graphs in \cite{KempRamgoolam2026}, the vertices at
layer \(i\) are labelled by integers
\begin{equation}
V_i=[D_i]=\{1,2,\ldots,D_i\}.
\end{equation}
The constructions below (rank data, graph Newton polynomials, divided
differences and the determinant formula) compare only siblings, and use the
integer order restricted to each set of children. Thus, if \(u,v\) are children of the same parent, then \(u<v\)
means that the integer label of \(u\) is smaller than that of \(v\).

Following \cite{KempRamgoolam2026}, we record the shape of a graph by
\(D=(D_1,\ldots,D_L)\), together with the lists
\(p_1=[p_{1,1},\ldots,p_{1,D_1}]\) of the numbers of children of the nodes of
\(V_1\), and \(c_i=[c_{i,1},\ldots,c_{i,D_i}]\) of the numbers of children of
the nodes of \(V_i\), \(2\le i\le L-1\).  For example, \(D=(2,4,7)\),
\(p_1=[2,2]\), \(c_2=[2,1,2,2]\) describes a graph whose two nodes in \(V_1\)
have two children each, and whose four nodes in \(V_2\) have \(2,1,2,2\)
children.

As in \cite{KempRamgoolam2026}, the parts of \(p_1\) are in weakly increasing
order, \(p_{1,1}\le p_{1,2}\le\cdots\le p_{1,D_1}\) (eq.~(2.8) there), and the
children of the nodes of layer \(i-1\) form successive blocks of
\([D_i]\) (eqs.~(2.14)--(2.15) there).  Thus the children of each node carry
consecutive labels, assigned in the order of their parents: if \(u<u'\) in
\(V_{i-1}\), every child of \(u\) has a smaller label than every child of
\(u'\).  Subsets of the children of a node, such as the admissible child sets
below, are ordered by the same integer order.

The weakly increasing order of \(p_1\) only fixes the labelling of the
original graph, and is not used in the arguments below.  All constructions and
results of this paper hold for any degeneracy graph in which an arbitrary
order is fixed among the children of each node, the labels at each layer being
assigned in successive blocks as above.  This larger class is the one needed
in the induction of Section~\ref{sec:general}: it is preserved by the
final-layer-excised graphs \(\Gamma_{[d]}\) of
Definition~\ref{def:final-layer-excised}, whereas the weakly increasing order
is not.  For example, for \(D=(2,4,7)\), \(p_1=[2,2]\), \(c_2=[2,2,2,1]\), the
two nodes of \(V_1\) keep \(2\) and \(1\) children in \(\Gamma_{[2]}\).

In figures we follow the convention of \cite{KempRamgoolam2026}: the layers
are drawn as columns from left to right, starting with the root, and within
each column the labels increase upwards.

As in \cite{KempRamgoolam2026}, we work with a concrete labelling as in
the computational construction.  The constructions below hold for any such
labelling; how the determinant depends on the labelling is discussed after
Theorem~\ref{thm:KR}.

\subsection{Rank data: a bijection between graph monomials and final nodes}
\label{sec:admchld}

In this subsection we recall the Kemp--Ramgoolam graph monomials and establish
a bijection between the final nodes \(V_L\) and the graph monomials
(Proposition~\ref{prop:bijection}).  The bijection is described by assigning
to each final node what we call its {\it rank data}.

Every final node \(T\in V_L\) determines a unique path
\begin{equation}
v_0(T)\longrightarrow v_1(T)\longrightarrow\cdots\longrightarrow v_L(T)=T,
\qquad v_i(T)\in V_i.
\label{eq:path}
\end{equation}
It is obtained by reading the graph from right to left: \(v_{i-1}(T)\) is the
parent of \(v_i(T)\).

\paragraph{Path order.} 

 For \(T,T'\in V_L\), we say that \(T'\) precedes
\(T\) in path order if, at the first layer \(i\) where their paths differ,
\(v_i(T')<v_i(T)\).  The nodes \(v_i(T')\) and \(v_i(T)\) are then siblings.
Path order is the integer order on \(V_L=\{1,\ldots,D_L\}\).  Indeed, by the
successive-block labelling of Section~\ref{sec:ordconv}, \(v_i(T')<v_i(T)\) implies that
every child of \(v_i(T')\) precedes every child of \(v_i(T)\), so
\(v_{i+1}(T')<v_{i+1}(T)\), and so on up to \(T'<T\).  For example, in the
graph of Figure~\ref{fig:D24}, \(T_1,T_2\) descend from node \(1\in V_1\) and
\(T_3,T_4\) from node \(2\), so \(T_1,T_2\) precede \(T_3,T_4\) in path order.

\paragraph{The sets \(S^{(i)}\).}  The graph monomials are built from tuples
of positive integers \((d_2,\ldots,d_L)\).  For such a tuple and
\(1\le i\le L-1\), the set \(S^{(i)}_{[d_{i+1},\ldots,d_L]}\subset V_i\) is
defined from right to left, as in eqs.~(3.1)--(3.3) of
\cite{KempRamgoolam2026}:
\begin{align}
S^{(L-1)}_{[d_L]}
&=\bigl\{u\in V_{L-1}:\ |\Ch_L(u)|\ge d_L\bigr\},
\label{eq:SLm1}\\
S^{(i)}_{[d_{i+1},\ldots,d_L]}
&=\bigl\{u\in V_i:\ \bigl|\Ch_{i+1}(u)\cap S^{(i+1)}_{[d_{i+2},\ldots,d_L]}\bigr|
\ge d_{i+1}\bigr\},\qquad 1\le i\le L-2.
\label{eq:Si}
\end{align}
In words, \(u\in V_i\) lies in \(S^{(i)}_{[d_{i+1},\ldots,d_L]}\) if it has at
least \(d_{i+1}\) children that have at least \(d_{i+2}\) children that have at
least \(d_{i+3}\) children, and so on, up to nodes of layer \(L-1\) with at
least \(d_L\) children.  The conditions to the right restrict which children
of \(u\) are counted; \(u\) may have other children that do not satisfy them.
For
example, in the graph of Figure~\ref{fig:D24} both nodes of \(V_1\) have two
children, so \(S^{(1)}_{[1]}=S^{(1)}_{[2]}=\{1,2\}\); in the graph of
Figure~\ref{fig:L2}, \(S^{(1)}_{[1]}=\{1,2\}\) and \(S^{(1)}_{[2]}=\{2\}\).
Decreasing any \(d_j\) can only enlarge these sets.

\paragraph{Admissible child sets.}  The sets \(S^{(i)}\) select, among the
children of a node, those which can continue a path with the prescribed
integers to their right.

At the first layer define
\begin{equation}
R_1(d_2,\ldots,d_L):=
S^{(1)}_{[d_2,\ldots,d_L]}.
\label{eq:R1}
\end{equation}
For \(2\leq i\leq L-1\), if \(v_{i-1}\in V_{i-1}\), define
\begin{equation}
R_i(v_{i-1};d_{i+1},\ldots,d_L)
:=
\Ch_i(v_{i-1})\cap
S^{(i)}_{[d_{i+1},\ldots,d_L]}.
\label{eq:Ri}
\end{equation}
At the final layer there is no further descendant condition, so we set
\begin{equation}
R_L(v_{L-1}):=\Ch_L(v_{L-1}).
\label{eq:RL}
\end{equation}
For uniformity of notation we also write
\(R_1(v_0;d_2,\ldots,d_L)=R_1(d_2,\ldots,d_L)\), where \(v_0\) is the root.

\paragraph{Graph monomials.}  Following eq.~(3.5) of
\cite{KempRamgoolam2026}, for a tuple \((d_2,\ldots,d_L)\) of positive integers
define the set of monomials in the commuting variables \(C_1,\ldots,C_L\)
\begin{equation}
\Monom(d_2,\ldots,d_L)
=
\bigl\{1,C_1,C_1^2,\ldots,C_1^{|S^{(1)}_{[d_2,\ldots,d_L]}|-1}\bigr\}
\times (C_2)^{d_2-1}(C_3)^{d_3-1}\cdots(C_L)^{d_L-1},
\label{eq:Monom}
\end{equation}
which is empty when \(S^{(1)}_{[d_2,\ldots,d_L]}\) is empty; we call
\((d_2,\ldots,d_L)\) admissible when it is not.  The Kemp--Ramgoolam graph
monomials (eq.~(3.6) of \cite{KempRamgoolam2026}) form the disjoint union
\begin{equation}
\begin{aligned}
\cS(\cA_1\to\cdots\to\cA_L)
&=
\bigsqcup_{d_2,\ldots,d_L}\Monom(d_2,\ldots,d_L)\\
&=
\Bigl\{(C_1)^a(C_2)^{d_2-1}\cdots(C_L)^{d_L-1}:\ (d_2,\ldots,d_L)
\text{ admissible},\\
&\hspace{6em} 0\le a\le \bigl|S^{(1)}_{[d_2,\ldots,d_L]}\bigr|-1\Bigr\}.
\end{aligned}
\label{eq:Mset}
\end{equation}
We abbreviate this set to \(\cS\).

\paragraph{From a monomial to a final node.}  Given such
\((a,d_2,\ldots,d_L)\), a final path is constructed from left to right as
follows:
\begin{align}
v_1&=\text{the }(a+1)\text{-st element of }R_1(d_2,\ldots,d_L),\label{eq:v1}\\
v_i&=\text{the }d_i\text{-th element of }
R_i(v_{i-1};d_{i+1},\ldots,d_L),
\qquad 2\leq i\leq L-1,\label{eq:vi}\\
v_L&=\text{the }d_L\text{-th element of }R_L(v_{L-1}).\label{eq:vL}
\end{align}
Each set \(R_i\) here depends only on the node \(v_{i-1}\) already chosen and
on the integers to the right of position \(i\).

\paragraph{From a final node to a monomial: rank data.}  Conversely, from a
final node \(T\) we read off a tuple by reading its path from right to left:
\begin{itemize}
\item \(d_L\) is the position of \(T\) among the children of \(v_{L-1}(T)\);
\item for \(i=L-1,\ldots,2\), \(d_i\) is the position of \(v_i(T)\) in
  \(R_i(v_{i-1}(T);d_{i+1},\ldots,d_L)\);
\item \(a+1\) is the position of \(v_1(T)\) in \(R_1(d_2,\ldots,d_L)\).
\end{itemize}
Each step uses the integers already found to its right.  We call
\((a,d_2,\ldots,d_L)\) the rank data of \(T\), and
\(m_T=(C_1)^a(C_2)^{d_2-1}\cdots(C_L)^{d_L-1}\) its graph monomial.  Note that
the admissible sets used for \(T\) are computed from its own rank data; this
will be important in Section~\ref{sec:evalNewton}.

\begin{proposition}\label{prop:bijection}
The construction \eqref{eq:v1}--\eqref{eq:vL}, from a graph monomial to a
final node, and the rank data, from a final node to a graph monomial, are
well defined and inverse to each other.  They give a bijection
\begin{equation}
V_L\longleftrightarrow \cS,
\qquad
T\;\longmapsto\;(a,d_2,\ldots,d_L)\;\longmapsto\;
m_T=(C_1)^a(C_2)^{d_2-1}\cdots(C_L)^{d_L-1},
\label{eq:bijection}
\end{equation}
and in particular \(|\cS|=D_L=\dim\cA_L\).
\end{proposition}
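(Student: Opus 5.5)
The plan is to check well-definedness of each direction separately, and then show the two compositions are identities. The key structural fact is the following. A node \(u\in V_i\) lies in \(S^{(i)}_{[d_{i+1},\ldots,d_L]}\) exactly when \(|R_{i+1}(u;d_{i+2},\ldots,d_L)|\ge d_{i+1}\). This holds because \(R_{i+1}(u;\ldots)=\Ch_{i+1}(u)\cap S^{(i+1)}_{[d_{i+2},\ldots]}\), and for \(i=L-1\) because \(R_L(u)=\Ch_L(u)\). It is simply a restatement of \eqref{eq:SLm1}--\eqref{eq:Ri} via \eqref{eq:Ri}--\eqref{eq:RL}.

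\emph{Monomial to node.} Let \((a,d_2,\ldots,d_L)\) be admissible with \(0\le a\le |S^{(1)}_{[d_2,\ldots,d_L]}|-1\). Then \(R_1=S^{(1)}\) has an \((a+1)\)-st element \(v_1\). I will show by induction on \(i\) that \(v_i\in S^{(i)}_{[d_{i+1},\ldots,d_L]}\) for \(1\le i\le L-1\).

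Suppose this holds for \(v_{i-1}\). By the key fact, \(R_i(v_{i-1};d_{i+1},\ldots)\) has at least \(d_i\) elements, so its \(d_i\)-th element \(v_i\) exists. It lies in \(S^{(i)}\) because \(R_i\subset S^{(i)}\) by definition. At the last step, \(v_{L-1}\in S^{(L-1)}_{[d_L]}\) gives \(|\Ch_L(v_{L-1})|\ge d_L\), so \(v_L\) exists. Each \(v_i\) is a child of \(v_{i-1}\), so this is a genuine path and \(v_L\) is a final node.

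\emph{Node to monomial.} Let \(T\in V_L\). Define \(d_L\) as the position of \(T\) in \(\Ch_L(v_{L-1}(T))\). Then \(|\Ch_L(v_{L-1}(T))|\ge d_L\), so \(v_{L-1}(T)\in S^{(L-1)}_{[d_L]}\). Since \(v_{L-1}(T)\) is also a child of \(v_{L-2}(T)\), it lies in \(R_{L-1}(v_{L-2}(T);d_L)\), and its position \(d_{L-1}\) in that set is defined.

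Descending induction continues this. Having \(v_i(T)\in R_i(v_{i-1}(T);d_{i+1},\ldots)\) at position \(d_i\), the set \(\Ch_i(v_{i-1}(T))\cap S^{(i)}\) has at least \(d_i\) elements. Hence \(v_{i-1}(T)\in S^{(i-1)}_{[d_i,\ldots,d_L]}\), and the next position is defined. At the end, \(v_1(T)\in S^{(1)}_{[d_2,\ldots,d_L]}=R_1\). So the tuple is admissible and \(0\le a\le |S^{(1)}|-1\), which means \(m_T\in\cS\).

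\emph{Inverse.} Start from \(T\), compute its rank data, and then rebuild the path. The sets \(R_i\) in \eqref{eq:v1}--\eqref{eq:vL} are computed from the same integers \(d_{i+1},\ldots,d_L\) and from the same parent, by induction from the left. So the \(d_i\)-th element is exactly \(v_i(T)\), and the path reproduces \(T\).

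Conversely, start from \((a,d_2,\ldots,d_L)\) and build \(v_L\). Reading the rank data of \(v_L\) from right to left, each \(d_i\) is recovered as the position of \(v_i\) in the same set \(R_i(v_{i-1};d_{i+1},\ldots)\). This uses the fact that \(v_{i-1}(v_L)=v_{i-1}\), by uniqueness of parents. The recovery is a descending induction, since \(R_i\) depends only on the integers to the right, which have already been recovered.

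The second composition also needs injectivity of \((a,d_2,\ldots,d_L)\mapsto\) monomial. This holds because the exponents determine \(a\) and the \(d_j\) (as \(d_j-1\)), so the union in \eqref{eq:Mset} is genuinely disjoint. Hence \(|\cS|=|V_L|=D_L\), and \(D_L=\dim\cA_L\) since the final nodes label the projector basis.

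The main obstacle is the bookkeeping in the node-to-monomial direction. There one must ensure the admissible sets used for \(T\) are the same ones used when rebuilding the path, even though they are defined via the tuple being reconstructed. The right-to-left dependence of \(R_i\) only on \(d_{i+1},\ldots,d_L\) is what makes both inductions consistent.
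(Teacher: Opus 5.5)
Your proof is correct and follows the paper's argument essentially step by step. It proves well-definedness of the construction from $R_{i-1}\subseteq S^{(i-1)}_{[d_i,\ldots,d_L]}$, proves well-definedness of the rank data by reading the path from right to left, and obtains mutual inversion because each $R_i$ depends only on $v_{i-1}$ and $d_{i+1},\ldots,d_L$. Your remark that the exponent vector $(a,d_2-1,\ldots,d_L-1)$ determines the tuple, so that $|\cS|$ counts admissible tuples, is the same point the paper makes when it says that the second arrow simply records the rank data as an exponent vector.
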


\begin{proof}
\emph{The construction \eqref{eq:v1}--\eqref{eq:vL} is well defined.}  The
bound on \(a\) in \eqref{eq:Mset} guarantees that \(R_1\) has an
\((a+1)\)-st element.  For \(i\ge2\), \(v_{i-1}\) was chosen in
\(R_{i-1}\subseteq S^{(i-1)}_{[d_i,\ldots,d_L]}\), and by
\eqref{eq:SLm1}--\eqref{eq:Si} this membership means that
\(R_i(v_{i-1};d_{i+1},\ldots,d_L)\) has at least \(d_i\) elements.

\emph{The rank data are well defined and give a graph monomial.}  At each
step, \(v_i(T)\) is the \(d_i\)-th element of \(R_i\), so \(R_i\) has at least
\(d_i\) elements, which says that \(v_{i-1}(T)\) lies in the admissible set
used at the next step to the left.  At the last step \(v_1(T)\in R_1\), so
\(R_1=S^{(1)}_{[d_2,\ldots,d_L]}\) is non-empty, the tuple
\((d_2,\ldots,d_L)\) is admissible, and \(a\le|R_1|-1\); thus \(m_T\in\cS\).

\emph{The two maps are inverse.}  At each layer both directions use the same
admissible set, since \(R_i\) depends only on \(v_{i-1}\) and on
\(d_{i+1},\ldots,d_L\).  Starting from a tuple, the construction chooses
\(v_i\) as the \(d_i\)-th element of \(R_i\) (the \((a+1)\)-st for \(i=1\)), and
reading off the position of \(v_i\) in the same \(R_i\) returns \(d_i\) (resp.\
\(a+1\)); reading from right to left, each set is recovered from the integers
already returned.  Conversely, starting from \(T\), the construction applied to
its rank data chooses, layer by layer from the left, the nodes
\(v_1(T),\ldots,v_L(T)\).  Hence the maps are mutually inverse, and the second
arrow in \eqref{eq:bijection} simply records the rank data as the exponent
vector \((a,d_2-1,\ldots,d_L-1)\).  Finally \(|V_L|=D_L\), and \(D_L=\dim\cA_L\)
since the final nodes label the projector basis of \(\cA_L\).
\end{proof}

This gives a direct proof that the number of graph monomials is \(D_L\), which
was proved in \cite{KempRamgoolam2026} by a counting argument with truncated
graphs.  For example, in the graph of
Figure~\ref{fig:D24}, \(T_3\) is the first child of node \(2\in V_1\), so
\(d_2=1\), and node \(2\) is the second element of \(R_1(1)=\{1,2\}\), so
\(a=1\); hence \(m_{T_3}=C_1\).  The table in Section~\ref{sec:evalNewton} lists all four final
nodes.

The set \eqref{eq:Mset} depends only on the sizes of the sets \(S^{(1)}\).
Which final node is paired with which monomial depends on the order inside the
admissible sets, i.e.\ on the labelling of Section~\ref{sec:ordconv} (see
the discussion after Theorem~\ref{thm:KR}).

\subsection{Graph Newton polynomials}
\label{sec:graphNewton}

Let \(T\in V_L\) have path
\begin{equation}
v_0\longrightarrow v_1\longrightarrow\cdots\longrightarrow v_L=T
\end{equation}
and rank data \((a,d_2,\ldots,d_L)\).  The associated graph monomial is
\begin{equation}
m_T=(C_1)^a(C_2)^{d_2-1}\cdots(C_L)^{d_L-1}.
\end{equation}
For each \(v_i\in V_i\), let \(x^{(i)}_{v_i}\) be the eigenvalue of \(C_i\) on
the corresponding projector.

Define the graph Newton polynomial, a polynomial in the commuting variables \(C_1,\ldots,C_L\), for the final node \(T\in V_L\)
\begin{align}
N_T
&:=
\prod_{\substack{w<v_1\\ w\in R_1(d_2,\ldots,d_L)}}
\bigl(C_1-x_w^{(1)}\bigr)
\nonumber\\
&\quad\times
\prod_{i=2}^{L-1}
\prod_{\substack{w<v_i\\
w\in R_i(v_{i-1};d_{i+1},\ldots,d_L)}}
\bigl(C_i-x_w^{(i)}\bigr)
\nonumber\\
&\quad\times
\prod_{\substack{w<v_L\\ w\in R_L(v_{L-1})}}
\bigl(C_L-x_w^{(L)}\bigr).
\label{eq:graphNewton}
\end{align}
This display is written for \(L\ge2\).  For \(L=1\) its first and last
products coincide, and \(N_T\) is the single product over \(R_1=V_1\) (see the
one-layer case in Section~\ref{sec:evalNewton}).

Since \(v_1\) is the \((a+1)\)-st element of \(R_1\), the degree of \(N_T\)
in \(C_1\) is \(a\).  For \(2\leq i\leq L\), the node \(v_i\) is the
\(d_i\)-th admissible child, so the degree in \(C_i\) is \(d_i-1\).
Consequently, the leading term (LT) of the graph Newton polynomial is 
\begin{equation}
\operatorname{LT}(N_T)
=
(C_1)^a(C_2)^{d_2-1}\cdots(C_L)^{d_L-1}
=
m_T,
\label{eq:leadingterm}
\end{equation}
with leading coefficient \(1\).

Every other term in the expansion of \(N_T\) has exponents bounded above by
those of \(m_T\).  The monotonicity
\begin{equation}
d'_j\leq d_j
\quad\Longrightarrow\quad
S^{(i)}_{[d'_{i+1},\ldots,d'_L]}
\supseteq
S^{(i)}_{[d_{i+1},\ldots,d_L]}
\label{eq:Smonotonicity}
\end{equation}
shows that every such lower monomial is again a graph monomial. 
Therefore
\begin{equation}
N_T=\sum_{S\in V_L}U_{S,T}\,m_S,
\label{eq:Umatrix}
\end{equation}
where \(U_{T,T}=1\), and \(U_{S,T}\neq0\) with \(S\neq T\) only if \(m_S\) has
smaller total degree than \(m_T\).  With the final nodes listed by the total
degree of \(m_T\), the matrix \(U\) is upper triangular with diagonal entries
\(1\), so \(\det U=1\); since simultaneous reordering of rows and columns does
not change a determinant, this holds in any ordering of \(V_L\).  Write
\begin{equation}
\cM_{T',T}=m_T(T'),\qquad T',T\in V_L,
\label{eq:Mmatrix}
\end{equation}
for the graph monomial evaluation matrix, whose entry in row \(T'\) and column
\(T\) is the value of \(m_T\) at the final node \(T'\), i.e.\ with each \(C_i\)
replaced by the eigenvalue \(x^{(i)}_{v_i(T')}\).  Similarly, the Newton
evaluation matrix is
\begin{equation}
\cN_{T',T}=N_T(T'),\qquad T',T\in V_L,
\label{eq:Nmatrix}
\end{equation}
whose entry in row \(T'\) and column \(T\) is the value of the graph Newton
polynomial \(N_T\) at the final node \(T'\).  We write \(N_T(T')\) when the
point at issue is whether the polynomial \(N_T\) vanishes at \(T'\), and
\(\cN_{T',T}\) or \(\cN\) when the point at issue is the matrix and its
determinant.  Evaluating \eqref{eq:Umatrix} at \(T'\) gives
\(\cN_{T',T}=\sum_{S}\cM_{T',S}U_{S,T}\), i.e.\ \(\cN=\cM U\), and since
\(\det U=1\),
\begin{equation}
\det\cM
= \det\cN .
\label{eq:changebasisdet}
\end{equation}

\subsection{Evaluation of the graph Newton polynomials}
\label{sec:evalNewton}

By \eqref{eq:changebasisdet}, it suffices to compute the determinant of the
Newton evaluation matrix \(\cN\) of \eqref{eq:Nmatrix}, with rows labelled by
the points \(T'\) and columns by the polynomials \(N_T\).  We begin with the one-layer case, where this is
the classical Newton--Vandermonde construction recalled at the start of this
section.

\paragraph{The one-layer case.}
Take \(L=1\) and \(D_1=4\): the root has four children \(T_1,T_2,T_3,T_4\), with
eigenvalues \(x_k=x^{(1)}_{T_k}\) of \(C_1\) (Figure~\ref{fig:L1}).  There is no
descendant condition, so \(R_1=V_1\), and the rank data of \(T_k\) is just
\(a=k-1\).  The graph monomials are \(m_{T_k}=C_1^{\,k-1}\), i.e.
\(1,C_1,C_1^2,C_1^3\), and the graph Newton polynomials are the elements of
the Newton basis \eqref{eq:newtonbasis1var} of \(x_1,\ldots,x_4\), evaluated
at \(C_1\),
\begin{equation}
N_{T_1}=1,\quad N_{T_2}=C_1-x_1,\quad N_{T_3}=(C_1-x_1)(C_1-x_2),\quad
N_{T_4}=(C_1-x_1)(C_1-x_2)(C_1-x_3).
\end{equation}
The Newton evaluation matrix is the Newton--Vandermonde matrix of
\(x_1,\ldots,x_4\),
\begin{equation}
\cN=
\begin{pmatrix}
1&0&0&0\\[4pt]
1&x_2-x_1&0&0\\[4pt]
1&x_3-x_1&(x_3-x_1)(x_3-x_2)&0\\[4pt]
1&x_4-x_1&(x_4-x_1)(x_4-x_2)&(x_4-x_1)(x_4-x_2)(x_4-x_3)
\end{pmatrix},
\end{equation}
and its determinant is the Vandermonde determinant
\(\prod_{1\le k<m\le4}(x_m-x_k)\).  The graph monomial evaluation matrix
\(\cM=(x_{k}^{\,j-1})\) is the Vandermonde matrix itself, and
\eqref{eq:changebasisdet} is the familiar statement that the two have the same
determinant.  The reason for triangularity is simple: \(N_T\) has a factor
\(C_1-x_w\) for every earlier sibling \(w\) of \(T\), so it vanishes at every
earlier final node.

\begin{figure}[ht]
\centering
\begin{tikzpicture}
\node[rt] (r) at (0,1.5) {};
\foreach \k/\y in {1/0,2/1,3/2,4/3} {
  \node[gn] (t\k) at (1.8,\y) {$T_\k$};
  \draw (r)--(t\k);
  \node[right=1mm] at (t\k.east) {\small $x_\k$};
}
\node[font=\small] at (1.8,3.8) {$V_1$};
\end{tikzpicture}
\caption{The one-layer graph with \(D_1=4\).  The graph Newton polynomials are
the ordinary Newton polynomials in \(C_1\).}
\label{fig:L1}
\end{figure}

With more layers, the factors of \(N_T\) in layer \(i\) are taken only from the
admissible set \(R_i(v_{i-1};d_{i+1},\ldots,d_L)\), which depends on the rank
data of \(T\).  The next example shows that triangularity can survive this.

\subsubsection{A two-layer  example with triangular Newton evaluation matrix }

Take \(D=(2,4)\) and \(p_1=[2,2]\), the first example of
\cite{KempRamgoolam2026} (Figure~\ref{fig:D24}).  Node \(1\in V_1\) has
children \(T_1,T_2\) and node \(2\in V_1\) has children \(T_3,T_4\).  Both
nodes of \(V_1\) have two children, so
\begin{equation}\label{eq:R1d2} 
R_1(d_2)=S^{(1)}_{[d_2]}=\{1,2\} \quad \text{ for }  d_2\in \{ 1,2 \} \, . 
\end{equation} 
The rank data
\((a,d_2)\), graph monomials and graph Newton polynomials are
\begin{equation}
{\renewcommand{\arraystretch}{1.3}\begin{array}{c|c|c|l}
T & (a,d_2) & m_T & N_T\\\hline
T_1 & (0,1) & 1 & 1\\
T_2 & (0,2) & C_2 & C_2-x^{(2)}_{T_1}\\
T_3 & (1,1) & C_1 & C_1-x^{(1)}_1\\
T_4 & (1,2) & C_1C_2 & (C_1-x^{(1)}_1)(C_2-x^{(2)}_{T_3})
\end{array}}
\end{equation}
and the Newton evaluation matrix, with rows \(T'\) and columns \(T\) in path
order, is
\begin{equation}
\cN=
\begin{pmatrix}
1&0&0&0\\[4pt]
1&x^{(2)}_{T_2}-x^{(2)}_{T_1}&0&0\\[4pt]
1&x^{(2)}_{T_3}-x^{(2)}_{T_1}&x^{(1)}_2-x^{(1)}_1&0\\[4pt]
1&x^{(2)}_{T_4}-x^{(2)}_{T_1}&x^{(1)}_2-x^{(1)}_1&
(x^{(1)}_2-x^{(1)}_1)(x^{(2)}_{T_4}-x^{(2)}_{T_3})
\end{pmatrix}.
\end{equation}
It is lower triangular, and its determinant
\((x^{(1)}_2-x^{(1)}_1)^2(x^{(2)}_{T_2}-x^{(2)}_{T_1})(x^{(2)}_{T_4}-x^{(2)}_{T_3})\)
is the product of the diagonal entries.

\begin{figure}[ht]
\centering
\begin{tikzpicture}
\node[rt] (r) at (0,1.5) {};
\node[gn] (a1) at (1.8,0.55) {$1$};
\node[gn] (a2) at (1.8,2.45) {$2$};
\node[gn] (t1) at (3.6,0) {$T_1$};
\node[gn] (t2) at (3.6,1.1) {$T_2$};
\node[gn] (t3) at (3.6,1.9) {$T_3$};
\node[gn] (t4) at (3.6,3.0) {$T_4$};
\draw (r)--(a1) (r)--(a2) (a1)--(t1) (a1)--(t2) (a2)--(t3) (a2)--(t4);
\node[below=1mm] at (a1.south) {\small $x^{(1)}_1$};
\node[above=1mm] at (a2.north) {\small $x^{(1)}_2$};
\foreach \t/\k in {t1/1,t2/2,t3/3,t4/4} \node[right=1mm] at (\t.east) {\small $x^{(2)}_{T_\k}$};
\node[font=\small] at (1.8,3.8) {$V_1$};
\node[font=\small] at (3.6,3.8) {$V_2$};
\end{tikzpicture}
\caption{The graph with \(D=(2,4)\), \(p_1=[2,2]\).  Layers are drawn as
columns from left to right, with labels increasing upwards.}
\label{fig:D24}
\end{figure}

The mechanism is the following.  Suppose \(T'\) precedes \(T\) in path order,
and their paths first differ at layer \(i\), so that \(v_i(T')\) is an earlier
sibling of \(v_i(T)\).  If \(v_i(T')\) belongs to the admissible set
\(R_i(v_{i-1};d_{i+1},\ldots,d_L)\) computed from the rank data of \(T\), then
\(N_T\) contains the factor \(C_i-x^{(i)}_{v_i(T')}\), and \(N_T(T')=0\).
In the example every admissible set contains all the children (at layer \(1\)
by \eqref{eq:R1d2}, and at layer \(2\) by \eqref{eq:RL}), so this
applies to every pair. Entries below the diagonal, such as
\(N_{T_2}(T_3)=x^{(2)}_{T_3}-x^{(2)}_{T_1}\), may be differences of eigenvalues
of nodes which are not siblings; they do not affect the determinant.

\subsubsection{A two-layer example with non-triangular but block-triangularisable Newton evaluation matrix }

The admissible set depends on the rank data of \(T\), so an earlier sibling
\(v_i(T')\) need not belong to it.  In that case \(N_T\) has no factor
\(C_i-x^{(i)}_{v_i(T')}\), and \(N_T(T')\) is in general nonzero.  The smallest
example has \(D=(2,3)\) and \(p_1=[1,2]\): node \(1\in V_1\) has one child
\(T_1\), and node \(2\in V_1\) has children \(T_2<T_3\).  The rank data give
\(N_{T_1}=1\), \(N_{T_2}=C_1-x^{(1)}_1\) and, since
\(R_1(d_2{=}2)=S^{(1)}_{[2]}=\{2\}\), \(N_{T_3}=C_2-x^{(2)}_{T_2}\)
(Figure~\ref{fig:L2}).  The Newton evaluation matrix, with rows
\(T'=T_1,T_2,T_3\) from top to bottom and columns \(N_{T_1},N_{T_2},N_{T_3}\)
from left to right, is
\begin{equation}
\cN=
\begin{pmatrix}
1\Sstrut & 0 & x^{(2)}_{T_1}-x^{(2)}_{T_2}\\
1\Sstrut & x^{(1)}_2-x^{(1)}_1 & 0\\
1\Sstrut & x^{(1)}_2-x^{(1)}_1 & x^{(2)}_{T_3}-x^{(2)}_{T_2}
\end{pmatrix}
\begin{array}{l}
\leftarrow T_1\Sstrut\\ \leftarrow T_2\Sstrut\\ \leftarrow T_3\Sstrut
\end{array}
\label{eq:E23}
\end{equation}
The entry in the first row and third column is
\(N_{T_3}(T_1)=x^{(2)}_{T_1}-x^{(2)}_{T_2}\), although \(T_1\) precedes
\(T_3\).  It is a difference of eigenvalues of \(T_1\) and \(T_2\), which are
not siblings.  Eigenvalues of nodes which are not siblings may coincide, so
such an entry is non-zero only for generic eigenvalues.  Here and below,
statements that such entries are non-zero, or that a graph Newton polynomial
does not vanish at a final node, refer to generic eigenvalues; the determinant
formula holds without this assumption.  For generic eigenvalues every row of
\eqref{eq:E23} has at least two nonzero entries, so no reordering of rows and
columns makes this matrix triangular.  Its
determinant is nevertheless
\((x^{(1)}_2-x^{(1)}_1)(x^{(2)}_{T_3}-x^{(2)}_{T_2})\), the product of the
diagonal entries.

\begin{figure}[ht]
\centering
\begin{tikzpicture}
\node[rt] (r) at (0,1.2) {};
\node[gn,bad] (a1) at (1.8,0.3) {$1$};
\node[gn] (a2) at (1.8,2.1) {$2$};
\node[gn,bad] (t1) at (3.6,0.3) {$T_1$};
\node[gn] (t2) at (3.6,1.5) {$T_2$};
\node[gn] (t3) at (3.6,2.7) {$T_3$};
\draw (r)--(a1) (a1)--(t1) (a2)--(t2);
\draw[pth] (r)--(a2) (a2)--(t3);
\node[below=1mm] at (a1.south) {\small $x^{(1)}_1$};
\node[above=1mm] at (a2.north) {\small $x^{(1)}_2$};
\foreach \t/\k in {t1/1,t2/2,t3/3} \node[right=1mm] at (\t.east) {\small $x^{(2)}_{T_\k}$};
\node[font=\small] at (1.8,3.6) {$V_1$};
\node[font=\small] at (3.6,3.6) {$V_2$};

\end{tikzpicture}
\caption{The graph with \(D=(2,3)\), \(p_1=[1,2]\), whose evaluation matrix
\eqref{eq:E23} is not triangular in path order.  The path of \(T_3\) is drawn
thick.  Its rank datum is \(d_2=2\), and \(R_1(d_2{=}2)=S^{(1)}_{[2]}=\{2\}\):
node \(1\) has only one child, so it is not admissible and
\(N_{T_3}=C_2-x^{(2)}_{T_2}\) has no factor \(C_1-x^{(1)}_1\).  Hence
\(N_{T_3}\) does not vanish at the shaded node \(T_1\), for generic eigenvalues.}
\label{fig:L2}
\end{figure}

The same phenomenon occurs in other examples of \cite{KempRamgoolam2026}.
Two of them, the running example \(D=(2,4,7)\) of \cite{KempRamgoolam2026} and
the graph \(D=(4,10)\), \(p_1=[2,2,3,3]\), are worked out in
Appendix~\ref{app:examples}.

\paragraph{Why the determinant still factorises.}
Write \(b=x^{(1)}_2-x^{(1)}_1\) and \(c=x^{(2)}_{T_3}-x^{(2)}_{T_2}\) for the
diagonal differences in \eqref{eq:E23}.  Expanding \eqref{eq:E23} along the
first row gives
\begin{equation}
\det\cN
=
\det\begin{pmatrix}b&0\\ b&c\end{pmatrix}
+\bigl(x^{(2)}_{T_1}-x^{(2)}_{T_2}\bigr)
\det\begin{pmatrix}1&b\\ 1&b\end{pmatrix}
= bc .
\end{equation}
The minor multiplying the off-diagonal entry vanishes because its two rows
are equal.  They are the rows of \(T_2\) and \(T_3\), restricted to the
columns \(N_{T_1}=1\) and \(N_{T_2}=C_1-x^{(1)}_1\).  These polynomials involve
only \(C_1\), and \(T_2,T_3\) are siblings, so they have the same eigenvalue of
\(C_1\).  Hence the determinant is the product of the diagonal entries,
although the matrix is not triangular.

The same observation can be turned into a factorisation of \(\cN\), which is
the idea of the general proof.  Write
\(\mathbf r_{T'}=\bigl(\cN_{T',T_1},\cN_{T',T_2},\cN_{T',T_3}\bigr)\) for the row
of the final node \(T'\) in \eqref{eq:E23}.  The rows of \(T_2\) and \(T_3\)
agree in their first two entries and differ only in the last:
\(\mathbf r_{T_3}-\mathbf r_{T_2}=(0,0,c)\).  Thus
\begin{equation}
\mathbf r_{T_3}=\mathbf r_{T_2}+c\,\Delta_{2,2},
\qquad
\Delta_{2,2}
=\frac{\mathbf r_{T_3}-\mathbf r_{T_2}}{x^{(2)}_{T_3}-x^{(2)}_{T_2}}
=\bigl(0,\,0,\,1\bigr):
\label{eq:rowopexample}
\end{equation}
the row of \(T_3\) is the row of its sibling \(T_2\), plus the difference
\(c\) of their eigenvalues times a new row \(\Delta_{2,2}\).  In other words,
the rows of the siblings \(T_2,T_3\), which are the children of node
\(2\in V_1\), are expanded in the Newton basis \eqref{eq:newtonbasis1var} of
their eigenvalues of \(C_2\), here \(1,\ \xi-x^{(2)}_{T_2}\), with coefficients \(\mathbf r_{T_2}\) and
\(\Delta_{2,2}\).  (The subscript records the parent, \(2\), and the number of
siblings used, \(2\).)  The entries of \(\Delta_{2,2}\) can be understood
without computation: the columns \(N_{T_1},N_{T_2}\) do not involve \(C_2\),
so they take equal values on the siblings \(T_2,T_3\) and give \(0\); the
column \(N_{T_3}=C_2-x^{(2)}_{T_2}\) is linear in \(C_2\) with leading
coefficient \(1\), and gives \(1\).  Replacing the row of \(T_3\) by
\(\Delta_{2,2}\), the matrix \eqref{eq:E23} becomes
\begin{equation}
\cN'=
\left(\begin{array}{cc|c}
1\Sstrut & 0 & x^{(2)}_{T_1}-x^{(2)}_{T_2}\\
1\Sstrut & x^{(1)}_2-x^{(1)}_1 & 0\\ \hline
0\Sstrut & 0 & 1
\end{array}\right)
\begin{array}{l}
\leftarrow T_1\Sstrut\\ \leftarrow T_2\Sstrut\\ \leftarrow \Delta_{2,2}\Sstrut
\end{array}
\label{eq:E23after}
\end{equation}
This matrix is block triangular, and the entry which spoiled triangularity
now lies above the diagonal blocks.  The blocks are labelled by the position
\(d_2\) of a final node among its siblings.  The first block, with rows
\(T_1,T_2\) and columns \(N_{T_1},N_{T_2}\), corresponds to \(d_2=1\): its rows
are labelled by the nodes of \(V_1\) with at least one child, \(\{1,2\}\), and
it is the \(2\times2\) Newton--Vandermonde matrix of the eigenvalues
\(x^{(1)}_1,x^{(1)}_2\), i.e.\ the Newton evaluation matrix of the one-layer
graph obtained by deleting the final layer.  The second block, \((1)\),
corresponds to \(d_2=2\): its row is labelled by the nodes of \(V_1\) with at
least two children, \(\{2\}\), and it is the Newton evaluation matrix of the
one-layer graph obtained by deleting the final layer and keeping only node
\(2\).  These are the final-layer-excised graphs \(\Gamma_{[1]}\) and
\(\Gamma_{[2]}\), defined in general in Section~\ref{sec:general}.  There are
two blocks because the largest number of children of a node of \(V_1\) is
\(2\), and their sizes \(2\) and \(1\) are the numbers of nodes of \(V_1\) with at
least one and at least two children.

By \eqref{eq:rowopexample}, the original matrix is recovered as
\begin{equation}
\cN=W\cN',\qquad
W=\left(\begin{array}{c|cc}
1&0&0\\ \hline
0&1&0\\
0&1&c
\end{array}\right),
\label{eq:E23factor}
\end{equation}
with rows and columns of \(W\) in the order of \eqref{eq:E23} and
\eqref{eq:E23after}.  The block of \(W\) for the children \(T_2,T_3\) of node
\(2\) is \(\begin{pmatrix}1&0\\1&c\end{pmatrix}\), the Newton evaluation matrix
of the one-layer graph formed by these two siblings, with their eigenvalues
of \(C_2\); the block for the single child \(T_1\) of node \(1\) is \((1)\).
Since \(\det W=c\),
\begin{equation}
\det\cN=c\cdot\det\begin{pmatrix}1&0\\1&b\end{pmatrix}\cdot 1
=bc .
\end{equation}

\subsubsection{The general case: block triangular form and induction}
\label{sec:general}

In general a node of layer \(L-1\) may have more than two children.  The
relation \eqref{eq:rowopexample} then generalises to the expansion of the
rows of siblings in the Newton basis \eqref{eq:newtonbasis1var} of their
eigenvalues (Lemma~\ref{lem:newton} below).  As in the example, it
writes \(\cN\) as a block diagonal matrix of one-layer Newton--Vandermonde
matrices, one for each set of siblings, times a block triangular matrix whose
diagonal blocks are Newton evaluation matrices of smaller graphs, obtained by
excising the final layer.  An induction on the number of layers then gives the
result.

The diagonal entries of the Newton evaluation matrix are
\begin{align}
\cN_{T,T}=N_T(T)
&=
\prod_{\substack{w<v_1\\w\in R_1(d_2,\ldots,d_L)}}
\bigl(x_{v_1}^{(1)}-x_w^{(1)}\bigr)
\nonumber\\
&\quad\times\prod_{i=2}^{L-1}
\prod_{\substack{w<v_i\\
w\in R_i(v_{i-1};d_{i+1},\ldots,d_L)}}
\bigl(x_{v_i}^{(i)}-x_w^{(i)}\bigr)
\;\prod_{\substack{w<v_L\\w\in R_L(v_{L-1})}}
\bigl(x_{v_L}^{(L)}-x_w^{(L)}\bigr),
\label{eq:diagonal}
\end{align}
for \(L\ge2\); for \(L=1\), \(\cN_{T,T}\) is the single product over
\(R_1=V_1\), as for \eqref{eq:graphNewton}.  In the examples above the determinant of \(\cN\) is the product of these
diagonal entries, whether or not \(\cN\) is triangular.  This holds in
general.

\begin{theorem}[Determinant of the Newton evaluation matrix]\label{thm:newtondet}
For every degeneracy graph, the determinant of the Newton evaluation matrix
\(\cN\) of \eqref{eq:Nmatrix} is the product of its diagonal entries
\eqref{eq:diagonal}:
\begin{equation}
\det\cN
=
\prod_{T\in V_L}\cN_{T,T}.
\label{eq:diagonalproduct}
\end{equation}
\end{theorem}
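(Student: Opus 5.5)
We argue by induction on the number of layers \(L\), following the mechanism of the \(D=(2,3)\) example. For \(L=1\), \(\cN\) is the Newton--Vandermonde matrix of \(x^{(1)}_1,\ldots,x^{(1)}_{D_1}\). It is lower triangular, so the claim holds. For \(L\ge2\) we factor \(\cN=W\cN'\). Here \(W\) is block diagonal, with one block for each node \(u\in V_{L-1}\), and \(\cN'\) is block triangular.

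\emph{Step 1 (Newton expansion among siblings).} Fix \(u\in V_{L-1}\) with children \(w_1<\cdots<w_p\), whose \(C_L\)-eigenvalues are \(y_k=x^{(L)}_{w_k}\). The siblings share the eigenvalues of \(C_1,\ldots,C_{L-1}\). Hence for each column \(T\), the entries \(N_T(w_k)\) are the values at \(y_k\) of the single polynomial \(f_T(\xi)=N_T|_{C_L=\xi}\), with \(C_1,\ldots,C_{L-1}\) set to the common values. Expanding \(f_T\) in the Newton basis \(\pi_d(\xi)=\prod_{m<d}(\xi-y_m)\) of the \(y_k\), we define \(\Delta_{u,d}\) as the row of \(d\)-th Newton coefficients (divided differences) \(f_T[y_1,\ldots,y_d]\). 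Then the row of \(w_k\) equals \(\sum_{d\le k}\pi_d(y_k)\Delta_{u,d}\). This gives \(\cN=W\cN'\), where the block of \(W\) at \(u\) is the \(p\times p\) Newton--Vandermonde matrix of \(y_1,\ldots,y_p\). Its determinant is \(\prod_{k}\prod_{m<k}(y_k-y_m)\), which is exactly the product over \(T\in V_L\) of the last factor in \eqref{eq:diagonal}.

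\emph{Step 2 (degree argument, the main obstacle).} Label the rows \(\Delta_{u,d}\) of \(\cN'\) by pairs \((u,d)\) with \(|\Ch_L(u)|\ge d\), that is, \(u\in S^{(L-1)}_{[d]}\). Label the columns \(T\) by their final rank datum \(d_L(T)\). Since \(N_T\) has degree \(d_L(T)-1\) in \(C_L\), the divided difference of order \(d-1\) vanishes when \(d>d_L(T)\). Ordering both rows and columns by \(d\) therefore makes \(\cN'\) block triangular.

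Next we identify the diagonal block \(d\). When \(d=d_L(T)\), the coefficient \(\Delta_{u,d}(T)\) is the leading coefficient in \(C_L\) of \(N_T\). That coefficient is the product of the layer-\(1\) to layer-\((L-1)\) factors of \eqref{eq:graphNewton}, evaluated at \(u\). Now consider the final-layer-excised graph \(\Gamma_{[d]}\): the truncation to layers \(0,\ldots,L-1\) with final nodes \(S^{(L-1)}_{[d]}\). Its admissible sets \(R_i\) for data \((d_2,\ldots,d_{L-1})\) coincide with the sets \(R_i(\cdot;d_{i+1},\ldots,d_{L-1},d)\) of \(\Gamma\), by \eqref{eq:SLm1}--\eqref{eq:Si}. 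So the rank data of \(T\) in \(\Gamma\) minus its last entry are the rank data of \(v_{L-1}(T)\) in \(\Gamma_{[d]}\), by Proposition~\ref{prop:bijection} applied to both graphs. The map \(T\mapsto v_{L-1}(T)\) is then a bijection from \(\{T:d_L(T)=d\}\) onto \(S^{(L-1)}_{[d]}\). Under it, the diagonal block is exactly the Newton evaluation matrix of \(\Gamma_{[d]}\).

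The hardest checks are two. First, \(\Gamma_{[d]}\) must again be a degeneracy graph in the generalized sense of Section~\ref{sec:ordconv}: every node has a descendant in the final layer, and each sibling set keeps its order. To get this, we prune the nodes with no surviving descendants. Pruning does not change any \(R_i\), since membership in the sets \(S^{(i)}\) already requires surviving descendants. Second, the matching of rows with columns and of the orders within the admissible sets must hold exactly.

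\emph{Step 3 (conclusion).} By induction, \(\det\) of block \(d\) equals the product of its diagonal entries. These are the products of the layer \(1,\ldots,L-1\) factors of \eqref{eq:diagonal} over the \(T\) with \(d_L(T)=d\). Multiplying these by \(\det W\) from Step 1 gives \(\det\cN=\det W\cdot\prod_d\det(\text{block }d)=\prod_T\cN_{T,T}\). Any row or column permutations used along the way are simultaneous, so they do not affect the sign.
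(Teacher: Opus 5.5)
Your proposal is correct and follows essentially the same route as the paper. The shared steps are the factorisation \(\cN=W\cN'\) obtained by Newton-expanding the rows of each set of siblings, and the divided-difference degree argument, which makes \(\cN'\) block triangular in \(d\). The diagonal blocks are then the Newton evaluation matrices of the final-layer-excised graphs \(\Gamma_{[d]}\); your coincidence of admissible sets is the paper's restriction lemma and factorisation corollary. Induction on the number of layers finishes the proof.
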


By \eqref{eq:changebasisdet}, this is equivalent to
\(\det\cM=\prod_{T\in V_L}\cN_{T,T}\).  The proof is by induction on the number
of layers.  It uses the final-layer-excised graphs of
Definition~\ref{def:final-layer-excised}, through Lemma~\ref{lem:restrict} and
Corollary~\ref{cor:factor}, and two lemmas on the Newton expansion among
siblings, Lemmas~\ref{lem:newton} and~\ref{lem:degree}; it is given after
them.

\paragraph{The final-layer-excised graphs \(\Gamma_{[d]}\).}

\begin{definition}\label{def:final-layer-excised}
Let \(\Gamma\) be the degeneracy graph, with \(L\geq2\), and let
\(1\le d\le d_{\max}\), where \(d_{\max}\) is the largest number of children of
a node of \(V_{L-1}\).
The final-layer-excised graph \(\Gamma_{[d]}\) is the graph with \(L-1\) layers obtained
from \(\Gamma\) by
\begin{itemize}
\item deleting the final layer \(V_L\);
\item keeping at layer \(L-1\) only the nodes of \(S^{(L-1)}_{[d]}\), i.e.\ the
  nodes with at least \(d\) children;
\item keeping at each layer \(i<L-1\) only the nodes having a descendant in
  \(S^{(L-1)}_{[d]}\).
\end{itemize}
At each layer the surviving nodes are relabelled \(1,2,\ldots\), keeping
their relative order.  This preserves which nodes are siblings and the order
among siblings.  We identify each surviving node with the node of \(\Gamma\)
it comes from, so that sets of nodes of \(\Gamma_{[d]}\) may be compared with
sets of nodes of \(\Gamma\).  The final layer of \(\Gamma_{[d]}\) is
\(S^{(L-1)}_{[d]}\).
\end{definition}

The graph \(\Gamma_{[d]}\) is in the spirit of the truncated graphs
used in Section~4 of \cite{KempRamgoolam2026} to prove that the number of
graph monomials is \(D_L\): in both cases, nodes with too few children are
dropped, so that the sets \(S^{(i)}\) of the original graph become sets of the
same type in a smaller graph.  Here this is done one layer at a time, and is
the basis of the induction.

Quantities computed in \(\Gamma_{[d]}\) carry an argument \((\Gamma_{[d]})\);
quantities without an argument refer to \(\Gamma\).

\begin{lemma}\label{lem:restrict}
Let \(1\le d\le d_{\max}\).  For \(1\le i\le L-2\) and all \(d_{i+1},\ldots,d_{L-1}\),
\begin{equation}
S^{(i)}_{[d_{i+1},\ldots,d_{L-1},\,d]}
=
S^{(i)}_{[d_{i+1},\ldots,d_{L-1}]}(\Gamma_{[d]}).
\label{eq:Srestrict}
\end{equation}
Consequently, for \(1\le i\le L-1\),
\begin{equation}
R_i(v_{i-1};d_{i+1},\ldots,d_{L-1},d)
=
R_i(v_{i-1};d_{i+1},\ldots,d_{L-1})(\Gamma_{[d]}),
\label{eq:Rrestrict}
\end{equation}
where for \(i=L-1\) the right-hand side is the final-layer set
\(R_{L-1}(v_{L-2})(\Gamma_{[d]})=\Ch_{L-1}(v_{L-2})\cap S^{(L-1)}_{[d]}\) of
\eqref{eq:RL}.
\end{lemma}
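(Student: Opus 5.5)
We prove \eqref{eq:Srestrict} by downward induction on \(i\), from \(i=L-2\) to \(i=1\), and then read off \eqref{eq:Rrestrict} from the definitions \eqref{eq:Ri}, \eqref{eq:RL} and \eqref{eq:R1}.

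\emph{Structural fact.} For a surviving node \(u\in V_i\) of \(\Gamma_{[d]}\) with \(i\le L-2\), we will use the identity
\begin{equation}
\Ch_{i+1}(u)(\Gamma_{[d]})=\{w\in\Ch_{i+1}(u):\ w \text{ has a descendant in } S^{(L-1)}_{[d]}\},
\end{equation}
and its analogue at layer \(L-1\) with \(S^{(L-1)}_{[d]}\) itself. This holds because the surviving nodes form an ancestor-closed set and the parent relation is unchanged.

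\emph{Base case \(i=L-2\).} In \(\Gamma_{[d]}\) the last layer is \(L-1\), so by \eqref{eq:SLm1} applied to \(\Gamma_{[d]}\),
\[
S^{(L-2)}_{[d_{L-1}]}(\Gamma_{[d]})=\{u\in V_{L-2}(\Gamma_{[d]}):\ |\Ch_{L-1}(u)(\Gamma_{[d]})|\ge d_{L-1}\}.
\]
By the structural fact, \(\Ch_{L-1}(u)(\Gamma_{[d]})=\Ch_{L-1}(u)\cap S^{(L-1)}_{[d]}\). The condition is therefore exactly the one in \eqref{eq:Si} defining \(S^{(L-2)}_{[d_{L-1},d]}\) in \(\Gamma\). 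One point needs checking: any \(u\in V_{L-2}\) of \(\Gamma\) satisfying this condition has a child in \(S^{(L-1)}_{[d]}\), since \(d_{L-1}\ge1\). Hence \(u\) survives, and the two sets agree as subsets of \(V_{L-2}\).

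\emph{Inductive step.} Assume \eqref{eq:Srestrict} holds at level \(i+1\). A node \(u\in V_i\) lies in the left-hand side iff
\[
|\Ch_{i+1}(u)\cap S^{(i+1)}_{[d_{i+2},\ldots,d_{L-1},d]}|\ge d_{i+1}.
\]
By the induction hypothesis, this intersection equals \(\Ch_{i+1}(u)\cap S^{(i+1)}_{[d_{i+2},\ldots,d_{L-1}]}(\Gamma_{[d]})\). Every element of the latter set is a surviving node, hence a child of \(u\) in \(\Gamma_{[d]}\), so the intersection equals \(\Ch_{i+1}(u)(\Gamma_{[d]})\cap S^{(i+1)}(\Gamma_{[d]})\). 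Membership then matches the definition in \(\Gamma_{[d]}\). It remains to see that \(u\) itself survives when the condition holds. Since \(d_{i+1}\ge1\), \(u\) has a child in a set \(S^{(i+1)}\). Unwinding the recursion \eqref{eq:Si}, every element of \(S^{(i+1)}_{[\ldots,d]}\) has a descendant in \(S^{(L-1)}_{[d]}\). This membership/survival bookkeeping, namely that the sets \(S\) automatically consist of surviving nodes, is the only delicate point, and the monotone ``at least one'' argument handles it at each layer.

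\emph{Deducing \eqref{eq:Rrestrict}.} For \(2\le i\le L-2\), intersect \eqref{eq:Srestrict} with \(\Ch_i(v_{i-1})\). As above, \(\Ch_i(v_{i-1})\cap S^{(i)}(\Gamma_{[d]})=\Ch_i(v_{i-1})(\Gamma_{[d]})\cap S^{(i)}(\Gamma_{[d]})\), which is \(R_i(v_{i-1};\ldots)(\Gamma_{[d]})\). For \(i=1\), apply \eqref{eq:R1} directly. For \(i=L-1\), the left-hand side is \(\Ch_{L-1}(v_{L-2})\cap S^{(L-1)}_{[d]}\) by \eqref{eq:Ri}. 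This equals \(\Ch_{L-1}(v_{L-2})(\Gamma_{[d]})\), the final-layer set \eqref{eq:RL} of \(\Gamma_{[d]}\), as stated. Relabelling preserves the order among siblings, so the sets agree together with their order. That is what later use of positions requires.
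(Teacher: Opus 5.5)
Your proof is correct and follows the paper's argument. Both use downward induction on \(i\) from \(L-2\). In the base case, \(\Ch_{L-1}(u)\cap S^{(L-1)}_{[d]}\) is identified with the children of \(u\) in \(\Gamma_{[d]}\); the inductive step repeats this one layer to the left, and \eqref{eq:Rrestrict} follows by intersecting with \(\Ch_i(v_{i-1})\). Your explicit checks are more detailed than the paper's one parenthetical remark: you verify that any node meeting the membership condition automatically survives in \(\Gamma_{[d]}\), and that relabelling preserves sibling order.
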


\begin{proof}
Induction on \(i\), decreasing from \(L-2\).  For \(i=L-2\): a node \(u\in V_{L-2}\) lies in
\(S^{(L-2)}_{[d_{L-1},d]}\) exactly when
\(|\Ch_{L-1}(u)\cap S^{(L-1)}_{[d]}|\ge d_{L-1}\).  The set
\(\Ch_{L-1}(u)\cap S^{(L-1)}_{[d]}\) is the set of children of \(u\) in
\(\Gamma_{[d]}\), which lie in its final layer, so this is the condition
\(u\in S^{(L-2)}_{[d_{L-1}]}(\Gamma_{[d]})\).  (Such a \(u\) has a descendant
in \(S^{(L-1)}_{[d]}\), so it is a node of \(\Gamma_{[d]}\).)  The inductive step is the same argument one layer to the left, using \eqref{eq:Srestrict} at layer
\(i+1\).  Equation \eqref{eq:Rrestrict} follows by intersecting with
\(\Ch_i(v_{i-1})\), using \eqref{eq:R1}--\eqref{eq:Ri}.
\end{proof}

\begin{corollary}[Factorisation of graph Newton polynomials]\label{cor:factor}
Let \(T\in V_L\) have \(v_{L-1}(T)=q\) and rank data
\((a,d_2,\ldots,d_{L-1},d)\).  Then \(q\in S^{(L-1)}_{[d]}\), so \(q\) is a
final node of \(\Gamma_{[d]}\); its rank data in \(\Gamma_{[d]}\) are
\((a,d_2,\ldots,d_{L-1})\); and
\begin{equation}
N_T
=
N_q(\Gamma_{[d]})(C_1,\ldots,C_{L-1})
\prod_{\substack{w<T\\ w\in\Ch_L(q)}}\bigl(C_L-x^{(L)}_w\bigr).
\label{eq:Nfactor}
\end{equation}
In particular the final nodes of \(\Gamma\) are in bijection with the pairs
\((q,d)\) with \(q\in S^{(L-1)}_{[d]}\), via \(T\mapsto(v_{L-1}(T),d_L(T))\):
\(T\) is the \(d\)-th child of \(q\).
\end{corollary}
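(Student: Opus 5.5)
The corollary follows almost directly from Lemma~\ref{lem:restrict} applied with \(d=d_L(T)\). I will verify the three claims in order: \(q\in S^{(L-1)}_{[d]}\), the rank data of \(q\) in \(\Gamma_{[d]}\), and the factorisation \eqref{eq:Nfactor}.

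\emph{Membership of \(q\).} By definition of rank data, \(d=d_L\) is the position of \(T\) among the children of \(q=v_{L-1}(T)\). So \(q\) has at least \(d\) children, which by \eqref{eq:SLm1} means \(q\in S^{(L-1)}_{[d]}\). By Definition~\ref{def:final-layer-excised}, this is exactly the final layer of \(\Gamma_{[d]}\). All ancestors \(v_i(T)\), \(i<L-1\), have the descendant \(q\), so they survive in \(\Gamma_{[d]}\). Hence the path of \(q\) in \(\Gamma_{[d]}\) is \(v_0(T)\to\cdots\to v_{L-1}(T)\).

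\emph{Rank data.} I read the rank data of \(q\) in \(\Gamma_{[d]}\) from right to left and compare with those of \(T\).
\begin{itemize}
\item At the final layer of \(\Gamma_{[d]}\), the position of \(q\) in \(R_{L-1}(v_{L-2})(\Gamma_{[d]})=\Ch_{L-1}(v_{L-2})\cap S^{(L-1)}_{[d]}\) equals its position in \(R_{L-1}(v_{L-2};d)\), by \eqref{eq:Rrestrict} with \(i=L-1\). That position is \(d_{L-1}\) by definition of the rank data of \(T\).
\item Inductively, for \(i=L-2,\ldots,2\), once the integers \(d_{i+1},\ldots,d_{L-1}\) to the right agree, \eqref{eq:Rrestrict} identifies \(R_i(v_{i-1};d_{i+1},\ldots,d_{L-1})(\Gamma_{[d]})\) with \(R_i(v_{i-1};d_{i+1},\ldots,d_{L-1},d)\) as ordered sets. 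Relabelling preserves relative order among siblings, so the position of \(v_i\) is again \(d_i\).
\item The same holds at layer \(1\), giving \(a\).
\end{itemize}
So the rank data of \(q\) in \(\Gamma_{[d]}\) are \((a,d_2,\ldots,d_{L-1})\).

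\emph{Factorisation.} In the product \eqref{eq:graphNewton} for \(N_T\), the layer-\(i\) factors for \(1\le i\le L-1\) run over \(w<v_i\) in \(R_i(v_{i-1};d_{i+1},\ldots,d_{L-1},d)\). By \eqref{eq:Rrestrict} these are exactly the admissible sets used in \(\Gamma_{[d]}\) for \(q\) with the rank data just found. The eigenvalues \(x^{(i)}_w\) are those of the identified nodes of \(\Gamma\). Hence these factors form \(N_q(\Gamma_{[d]})\). Here layer \(L-1\) plays the role of the final layer of \(\Gamma_{[d]}\), with \(R_{L-1}\) given by \eqref{eq:RL} in \(\Gamma_{[d]}\). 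The remaining layer-\(L\) factor is \(\prod_{w<T,\,w\in\Ch_L(q)}(C_L-x^{(L)}_w)\), from \eqref{eq:RL}. This gives \eqref{eq:Nfactor}.

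Two edge cases need care. For \(L=2\), \(\Gamma_{[d]}\) has one layer and \(N_q(\Gamma_{[d]})\) is the single product over \(R_1(\Gamma_{[d]})=S^{(1)}_{[d]}=R_1(d)\), consistent with the \(L=1\) convention. For \(L\ge3\), the layer-\(1\) set \(R_1(d_2,\ldots,d_{L-1},d)=S^{(1)}_{[\ldots,d]}\) is handled by \eqref{eq:Srestrict}.

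\emph{Bijection.} The map \(T\mapsto(v_{L-1}(T),d_L(T))\) lands in the pairs \((q,d)\) with \(q\in S^{(L-1)}_{[d]}\), by the first step. Its inverse sends \((q,d)\) to the \(d\)-th child of \(q\), which exists since \(|\Ch_L(q)|\ge d\).

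The only delicate point is the bookkeeping at layer \(L-1\). There, the "final layer" convention of \(\Gamma_{[d]}\), namely \(R_{L-1}=\Ch_{L-1}\) of the pruned graph, must be matched with the admissible set \(R_{L-1}(v_{L-2};d)\) of \(\Gamma\). This is exactly the \(i=L-1\) case of Lemma~\ref{lem:restrict}, so no real obstacle remains.
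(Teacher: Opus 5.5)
Your proposal is correct and takes the same approach as the paper: apply Lemma~\ref{lem:restrict} with \(d=d_L(T)\) to identify the admissible sets along the path of \(T\) at layers \(1,\ldots,L-1\) with those along the path of \(q\) in \(\Gamma_{[d]}\). The ranks and the layer-\(1,\ldots,L-1\) factors of \eqref{eq:graphNewton} then coincide, the layer-\(L\) product is what remains, and you spell out the right-to-left bookkeeping and the \(L=2\) case that the paper leaves implicit (for \(L=2\) the position in your first bullet is \(a+1\) rather than \(d_{L-1}\), which your third bullet already covers).
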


\begin{proof}
\(T\) is the \(d\)-th child of \(q\), so \(q\) has at least \(d\) children.
By \eqref{eq:Rrestrict}, the admissible sets along the path of \(T\) at layers
\(1,\ldots,L-1\) coincide with those along the path of \(q\) in
\(\Gamma_{[d]}\); hence so do the ranks, and the corresponding factors in
\eqref{eq:graphNewton}.  The remaining factor is the layer-\(L\) product.
\end{proof}

\paragraph{Newton expansion among siblings.}
Fix \(q\in V_{L-1}\), let \(n=|\Ch_L(q)|\) be its number of children, and for
\(1\le k\le n\) let \(\xi_k\) be the eigenvalue of \(C_L\) on the \(k\)-th
child of \(q\), in the order of Section~\ref{sec:ordconv}.  These eigenvalues
are distinct, since the children of \(q\) are siblings.

\emph{The goal.}  In the proof of Theorem~\ref{thm:newtondet} below, the
rows of \(\cN\) belonging to the children of \(q\) are expanded in the Newton
basis \eqref{eq:newtonbasis1var} of \(\xi_1,\ldots,\xi_n\): the row of the \(k\)-th child is
written as a combination of \(n\) new rows \(\Delta_{q,1},\ldots,\Delta_{q,n}\),
with differences of eigenvalues of siblings as explicit coefficients.  This
generalises \eqref{eq:rowopexample}, where \(n=2\): there
\(\mathbf r_{T_3}=\Delta_{2,1}+c\,\Delta_{2,2}\), with \(\Delta_{2,1}\) the row
of \(T_2\).  The row \(\Delta_{q,d}\) exists exactly when \(q\) has at least
\(d\) children, i.e.\ when \(q\in S^{(L-1)}_{[d]}\), which is the final layer
of \(\Gamma_{[d]}\).  Thus the integer \(d\) plays three roles which coincide:
it labels the \(d\)-th polynomial of the Newton basis, the position of a final
node \(T\) among its siblings in the label \((q,d)\) of
Corollary~\ref{cor:factor}, and the threshold ``at least \(d\) children'' in
\(\Gamma_{[d]}\).  This is why the rows \(\Delta_{q,d}\) with fixed \(d\) will
label the rows of the diagonal block given by \(\Gamma_{[d]}\).

\emph{The Newton basis.}  Newton interpolation in one variable is classical;
see for instance \cite{deBoor,StoerBulirsch,Davis}.  Let \(f\) be a function on \(V_L\).  In the proof of
Theorem~\ref{thm:newtondet}, \(f\) will be a column of \(\cN\),
\(f(T')=N_T(T')\); in Section~\ref{sec:projectors} it will also be the
indicator function of a final node.  Since the \(\xi_k\) are distinct, the
values of \(f\) on the children of \(q\) may be labelled by their eigenvalues:
we write \(f(\xi_k)\) for the value of \(f\) at the \(k\)-th child of \(q\).
This labelling is used only within one set of siblings, since final nodes
which are not siblings may share an eigenvalue of \(C_L\).  The Newton basis
\eqref{eq:newtonbasis1var} of \(\xi_1,\ldots,\xi_n\), which we call the Newton
basis of the children of \(q\), consists of the polynomials
\begin{equation}
\pi_{q,d}(\xi)=\prod_{m<d}(\xi-\xi_m),\qquad 1\le d\le n,
\label{eq:newtonbasis}
\end{equation}
which are monic of degree \(d-1\), and vanish at \(\xi_k\) for \(k<d\).

\begin{lemma}[Newton expansion among siblings]\label{lem:newton}
Let \(f\) be a function on \(V_L\).  There are unique numbers
\(\Delta_{q,1}f,\ldots,\Delta_{q,n}f\), the Newton coefficients of \(f\), such
that
\begin{equation}
f(\xi_k)=\sum_{d=1}^{k}\pi_{q,d}(\xi_k)\,\Delta_{q,d}f,
\qquad 1\le k\le n ,
\label{eq:newtoninterp}
\end{equation}
and \(\Delta_{q,d}f\) depends only on the values of \(f\) at the first \(d\)
children of \(q\).  The \(n\times n\) matrix \(W_q\) with entries
\((W_q)_{k,d}=\pi_{q,d}(\xi_k)\), the Newton--Vandermonde matrix of
\(\xi_1,\ldots,\xi_n\), is the Newton evaluation matrix of the
one-layer graph formed by the children of \(q\), with their eigenvalues of
\(C_L\), and
\begin{equation}
\det W_q=\prod_{1\le k<m\le n}(\xi_m-\xi_k).
\label{eq:fact2}
\end{equation}
\end{lemma}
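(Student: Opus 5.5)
The plan is to treat \eqref{eq:newtoninterp} as a linear system for the unknowns \(\Delta_{q,1}f,\ldots,\Delta_{q,n}f\), with coefficient matrix \(W_q\), and to exploit that \(W_q\) is lower triangular. Since \(\pi_{q,d}(\xi_k)=\prod_{m<d}(\xi_k-\xi_m)\) vanishes whenever \(k<d\), the sum in \eqref{eq:newtoninterp} may be extended to \(d=1,\ldots,n\) without change. The system is then \(\mathbf f=W_q\,\boldsymbol\Delta f\), where \(\mathbf f=(f(\xi_1),\ldots,f(\xi_n))^{\mathsf T}\) and \(W_q\) is lower triangular. Its diagonal entries are \(\pi_{q,k}(\xi_k)=\prod_{m<k}(\xi_k-\xi_m)\). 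These are nonzero because the children of \(q\) are siblings, so their eigenvalues \(\xi_k\) of \(C_L\) are distinct, as recorded before the lemma.

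Existence and uniqueness then follow from invertibility of \(W_q\). The dependence statement follows from forward substitution. The \(k\)-th equation reads
\begin{equation*}
f(\xi_k)=\pi_{q,k}(\xi_k)\,\Delta_{q,k}f+\sum_{d<k}\pi_{q,d}(\xi_k)\,\Delta_{q,d}f .
\end{equation*}
So \(\Delta_{q,k}f\) is determined by \(f(\xi_k)\) and by \(\Delta_{q,1}f,\ldots,\Delta_{q,k-1}f\). By induction on \(k\), \(\Delta_{q,d}f\) depends only on \(f(\xi_1),\ldots,f(\xi_d)\), i.e.\ on the values of \(f\) at the first \(d\) children of \(q\). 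Equivalently, \(W_q^{-1}\) is lower triangular. One may also identify \(\Delta_{q,d}f\) with the divided difference \(f[\xi_1,\ldots,\xi_d]\), but the argument does not need this.

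To identify \(W_q\) with a Newton evaluation matrix, consider the one-layer graph whose root has the children of \(q\) as its \(n\) children, carrying the eigenvalues \(\xi_k\). In the one-layer case of Section~\ref{sec:evalNewton} we have \(R_1=V_1\), and the \(k\)-th child has rank datum \(a=k-1\). By \eqref{eq:graphNewton}, its graph Newton polynomial is \(\prod_{m<k}(C-\xi_m)=\pi_{q,k}(C)\). Evaluating at the \(k'\)-th child gives \(\pi_{q,k}(\xi_{k'})\), which is exactly \((W_q)_{k',k}\).

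Finally, \(\det W_q\) is the product of the diagonal entries of a triangular matrix:
\begin{equation*}
\det W_q=\prod_{k}\prod_{m<k}(\xi_k-\xi_m),
\end{equation*}
which is \eqref{eq:fact2}. No step is a real obstacle. The only point needing care is the distinctness of the \(\xi_k\), which holds because the children of one node of \(V_{L-1}\) are distinguished by their eigenvalue of \(C_L\).
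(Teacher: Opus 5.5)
Your proof is correct and matches the paper's argument step by step. Lower triangularity of \(W_q\), with nonzero diagonal because sibling eigenvalues are distinct, gives unique solvability and the dependence on the first \(d\) children; the determinant is the product of the diagonal entries; and \(W_q\) is the Newton evaluation matrix of the one-layer graph of the children of \(q\), where you spell out the forward substitution and the index check \((W_q)_{k',k}=\pi_{q,k}(\xi_{k'})\) that the paper leaves implicit.
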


\begin{proof}
Since \(\pi_{q,d}(\xi_k)=0\) for \(k<d\), the matrix \(W_q\) is lower
triangular, with diagonal entries \(\pi_{q,k}(\xi_k)=\prod_{m<k}(\xi_k-\xi_m)\),
which are non-zero because siblings have distinct eigenvalues.  The equations
\eqref{eq:newtoninterp} form the triangular system with matrix \(W_q\), so
they have a unique solution, and their first \(d\) equations determine
\(\Delta_{q,1}f,\ldots,\Delta_{q,d}f\).  The determinant of \(W_q\) is the
product of its diagonal entries.  Finally, the graph Newton polynomials of the
one-layer graph formed by the children of \(q\) are the \(\pi_{q,d}(C_L)\), as
in the one-layer example of Section~\ref{sec:evalNewton}, so \(W_q\) is its
Newton evaluation matrix.
\end{proof}

For instance \(\Delta_{q,1}f=f(\xi_1)\), so \(\Delta_{q,1}\) keeps the row of the
first child.  Applied to each column of \(\cN\), \(\Delta_{q,d}\) gives the
entries of the new row, which we denote by the same symbol.  For \(n=3\),
\begin{equation}
W_q=\begin{pmatrix}
1&0&0\\
1&\xi_2-\xi_1&0\\
1&\xi_3-\xi_1&(\xi_3-\xi_1)(\xi_3-\xi_2)
\end{pmatrix}.
\label{eq:Wn3}
\end{equation}
The second property of the Newton coefficients produces the zero blocks and
the diagonal blocks of the block triangular form.

\begin{lemma}[Degree property]\label{lem:degree}
Let \(1\le d\le n\).  Suppose that \(f(\xi_k)=h(\xi_k)\), \(1\le k\le d\), for
a polynomial \(h\) of degree at most \(e\), and let \(c\) be the coefficient of
\(\xi^{e}\) in \(h\) (possibly zero).  Then
\begin{equation}
\Delta_{q,d}f=
\begin{cases}
0, & d-1>e,\\
c, & d-1=e.
\end{cases}
\label{eq:fact1}
\end{equation}
\end{lemma}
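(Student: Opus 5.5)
The plan is to use the uniqueness statement of Lemma~\ref{lem:newton}, restricted to the first \(d\) children of \(q\). By that lemma, \(\Delta_{q,1}f,\ldots,\Delta_{q,d}f\) are determined by the values \(f(\xi_1),\ldots,f(\xi_d)\) alone. They are the unique solution of the first \(d\) equations of the triangular system \eqref{eq:newtoninterp}, whose matrix is the leading \(d\times d\) block of \(W_q\). Since \(f\) and \(h\) agree at \(\xi_1,\ldots,\xi_d\), we may replace \(f\) by \(h\) and compute the first \(d\) Newton coefficients of the polynomial \(h\) itself.

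Next I expand \(h\) in the Newton basis. The polynomials \(\pi_{q,1},\ldots,\pi_{q,e+1}\) are monic of degrees \(0,1,\ldots,e\), so they form a basis of polynomials of degree at most \(e\). The change of basis from monomials is unitriangular, so \(h=\sum_{j=1}^{e+1}\beta_j\pi_{q,j}\) for unique \(\beta_j\). Comparing coefficients of \(\xi^e\), and using that only \(\pi_{q,e+1}\) has degree \(e\) and that it is monic, gives \(\beta_{e+1}=c\).

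Two cases remain. If \(e+1\le d\), set \(\beta_j=0\) for \(e+1<j\le d\). The identity \(h(\xi_k)=\sum_{j=1}^{d}\beta_j\pi_{q,j}(\xi_k)\) holds for all \(k\). For \(k\le d\), the terms with \(j>k\) vanish because \(\pi_{q,j}(\xi_k)=0\) for \(k<j\). So \((\beta_1,\ldots,\beta_d)\) solves the first \(d\) equations of \eqref{eq:newtoninterp}, and by uniqueness \(\Delta_{q,j}f=\beta_j\) for \(j\le d\). Taking \(j=d\) gives \(\Delta_{q,d}f=0\) when \(d-1>e\), and \(\Delta_{q,d}f=\beta_{e+1}=c\) when \(d-1=e\). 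These are the only cases asserted in \eqref{eq:fact1}; the case \(d-1<e\) is not claimed, so nothing further is needed.

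The only point needing care, rather than a genuine obstacle, is the restriction argument. We must use that the first \(d\) equations form a closed triangular subsystem with nonzero diagonal, so that knowing \(f=h\) only at \(\xi_1,\ldots,\xi_d\) suffices even though \(h\) may disagree with \(f\) at later siblings. The nonzero diagonal holds because siblings have distinct eigenvalues, which is already part of Lemma~\ref{lem:newton}.
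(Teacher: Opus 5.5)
Your proof is correct and follows essentially the same route as the paper. You expand \(h\) in the monic Newton basis \(\pi_{q,1},\ldots,\pi_{q,e+1}\), whose top coefficient is \(c\), and then use that the first \(d\) equations of \eqref{eq:newtoninterp} form a triangular subsystem that determines \(\Delta_{q,1}f,\ldots,\Delta_{q,d}f\) uniquely from the values at \(\xi_1,\ldots,\xi_d\); padding the coefficients with zeros for \(e+1<j\le d\) simply spells out the comparison step that the paper states in one line.
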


\begin{proof}
In both cases \(e\le d-1\).  Since \(\pi_{q,1},\ldots,\pi_{q,e+1}\) are monic of
degrees \(0,\ldots,e\), we can write \(h=\sum_{d'=1}^{e+1}a_{d'}\,\pi_{q,d'}\),
with \(a_{e+1}=c\).  Evaluating at \(\xi_1,\ldots,\xi_d\) and comparing with the
first \(d\) equations of \eqref{eq:newtoninterp}, which determine
\(\Delta_{q,1}f,\ldots,\Delta_{q,d}f\) by Lemma~\ref{lem:newton}, gives
\(\Delta_{q,d'}f=a_{d'}\) for \(d'\le e+1\) and \(\Delta_{q,d'}f=0\) for
\(e+1<d'\le d\).
\end{proof}

For example, \(h(\xi)=\xi-\xi_1=\pi_{q,2}(\xi)\) gives \(\Delta_{q,2}f=1\); this
is the last entry of \(\Delta_{2,2}\) in \eqref{eq:rowopexample}.  Similarly,
on three children, \(\xi=\xi_1+\pi_{q,2}(\xi)\) and
\(\xi^2=\xi_1^2+(\xi_1+\xi_2)\,\pi_{q,2}(\xi)+\pi_{q,3}(\xi)\), so \(\Delta_{q,3}\)
is \(0\) on \(f(\xi_k)=\xi_k\) (here \(e=1<d-1=2\)) and \(1\) on
\(f(\xi_k)=\xi_k^2\) (here \(e=d-1=2\) and \(c=1\)).

\begin{proof}[Proof of Theorem~\ref{thm:newtondet}]
The proof is by induction on the number of layers.  Each step excises the
final layer: the Newton expansion acts among the children of the nodes of the
current penultimate layer, and the diagonal blocks are the Newton evaluation
matrices of the graphs \(\Gamma_{[d]}\), which have one layer fewer.
Unwinding the induction therefore excises the layers \(L,L-1,\ldots,2\) in
turn, each step being applied to every graph produced by the previous one,
until one-layer graphs remain, whose Newton evaluation matrices are
triangular.

For \(L=1\), \(R_1=V_1\), the graph Newton polynomials
are the ordinary Newton polynomials, and the matrix is triangular with the
stated diagonal.

Let \(L\ge2\).  By Corollary~\ref{cor:factor}, both the rows (points \(T'\))
and the columns (polynomials \(N_T\)) of the matrix are labelled by pairs
\((q,d)\) with \(q\in S^{(L-1)}_{[d]}\).  For each \(q\in V_{L-1}\), expand the
rows belonging to the children of \(q\) in the Newton basis of their
eigenvalues, as in Lemma~\ref{lem:newton}, and let \(\cN'\) be the matrix of
Newton coefficients: its entry in row \((q,d)\) and column \(T\) is
\(\cN'_{(q,d),T}=\Delta_{q,d}N_T\), the \(d\)-th Newton coefficient of the
column \(T'\mapsto N_T(T')\) on the children of \(q\).  Then
\begin{equation}
\cN=W\cN',
\label{eq:NPfactor}
\end{equation}
where \(W\) is block diagonal, with one block for the children of each
\(q\in V_{L-1}\), and the block for \(q\) is the one-layer Newton--Vandermonde
matrix \(W_q\) of Lemma~\ref{lem:newton}.  By \eqref{eq:fact2},
\(\det W=\prod_q\det W_q=\prod_{q}\prod_{k<m}(\xi_m-\xi_k)\).

Consider the entry \(\cN'_{(q,d),T}\), where \(T\) is the final node labelled \((q',d')\).  On the children of \(q\), the eigenvalues of
\(C_1,\ldots,C_{L-1}\) are those of the path to \(q\), so by \eqref{eq:Nfactor}
\(N_T\) restricts to a polynomial in \(\xi\) of degree at most \(d'-1\), whose
coefficient of \(\xi^{d'-1}\) is \(N_{q'}(\Gamma_{[d']})(q)\); this value may
vanish, in which case the degree is smaller.  By the degree property,
Lemma~\ref{lem:degree},
\begin{equation}
\cN'_{(q,d),T}=0 \quad (d>d'),
\qquad
\cN'_{(q,d),T}=N_{q'}(\Gamma_{[d]})(q)=\cN(\Gamma_{[d]})_{q,q'}\quad (d=d').
\end{equation}
Ordering rows and columns by \(d\), the matrix \(\cN'\) is block triangular.  The
diagonal block for \(d\) has rows and columns labelled by \(S^{(L-1)}_{[d]}\),
the final layer of \(\Gamma_{[d]}\), and is the Newton evaluation matrix of
\(\Gamma_{[d]}\).

The diagonal blocks are labelled by the values of \(d\) for which
\(S^{(L-1)}_{[d]}\) is non-empty.  Since these sets decrease as \(d\) increases,
these values are \(d=1,\ldots,d_{\max}\), where \(d_{\max}\) is the largest
number of children of a node of \(V_{L-1}\): there are \(d_{\max}\) diagonal
blocks.  The block for \(d\) has size \(|S^{(L-1)}_{[d]}|\), the number of nodes
of \(V_{L-1}\) with at least \(d\) children.  Thus, if the numbers of children
of the nodes of \(V_{L-1}\) are arranged as a partition of \(D_L\), the sizes of
the diagonal blocks are the parts of the conjugate partition.  Hence, by
\eqref{eq:NPfactor}, and since the block triangular form of \(\cN'\) is
reached by a simultaneous reordering of rows and columns,
\begin{equation}
\det\cN=\det W\,\det\cN'
=\prod_{d=1}^{d_{\max}}\det\cN(\Gamma_{[d]})
\ \times\prod_{q\in V_{L-1}}\ \prod_{\substack{b<b'\\ b,b'\in\Ch_L(q)}}
\bigl(x^{(L)}_{b'}-x^{(L)}_{b}\bigr).
\label{eq:detNblocks}
\end{equation}
Each diagonal block contributes its determinant \(\det\cN(\Gamma_{[d]})\)
exactly once.  The product over \(q\in V_{L-1}\) acts only on the second
factor, which is \(\det W\) by \eqref{eq:fact2}, with one Vandermonde factor
\(\det W_q\) for each set of siblings in \(V_L\).

By induction,
\(\det\cN(\Gamma_{[d]})=\prod_{q\in S^{(L-1)}_{[d]}}\cN(\Gamma_{[d]})_{q,q}\).
On the other hand, by \eqref{eq:Nfactor}, for \(T\) the \(d\)-th child of
\(q\),
\begin{equation}
\cN_{T,T}=\cN(\Gamma_{[d]})_{q,q}\prod_{\substack{w<T\\ w\in\Ch_L(q)}}
\bigl(x^{(L)}_T-x^{(L)}_w\bigr).
\end{equation}
Since the final nodes \(T\) correspond to the pairs \((q,d)\) with
\(q\in S^{(L-1)}_{[d]}\) (Corollary~\ref{cor:factor}), the product over all
\(T\in V_L\) of the first factor is \(\prod_{d}\det\cN(\Gamma_{[d]})\), and
the product of the second factor is the second factor of
\eqref{eq:detNblocks}.  This proves \eqref{eq:diagonalproduct}.
\end{proof}

\begin{remark}[Divided differences]\label{rem:divdiff}
The Newton coefficients have a classical closed form: \(\Delta_{q,d}f\) is the
divided difference of \(f\) over the first \(d\) children of \(q\)
\cite{deBoor,StoerBulirsch},
\begin{equation}
\Delta_{q,d}f
=\sum_{k=1}^{d}
\frac{f(\xi_k)}{\prod_{m\le d,\,m\ne k}(\xi_k-\xi_m)} ,
\label{eq:divdiff}
\end{equation}
so that \(W_q^{-1}=Q_q\), the matrix with entries
\((Q_q)_{d,k}=\prod_{m\le d,\,m\ne k}(\xi_k-\xi_m)^{-1}\) for \(k\le d\) and
\(0\) for \(k>d\).  For the first few values of \(d\),
\begin{align}
\Delta_{q,1}f&=f(\xi_1),\nonumber\\
\Delta_{q,2}f&=\frac{f(\xi_2)-f(\xi_1)}{\xi_2-\xi_1},\nonumber\\
\Delta_{q,3}f&=\frac{f(\xi_1)}{(\xi_1-\xi_2)(\xi_1-\xi_3)}
+\frac{f(\xi_2)}{(\xi_2-\xi_1)(\xi_2-\xi_3)}
+\frac{f(\xi_3)}{(\xi_3-\xi_1)(\xi_3-\xi_2)} ,
\label{eq:divdiff123}
\end{align}
and in general they can be computed recursively: writing
\(f[\xi_j,\ldots,\xi_{j+d-1}]\) for the divided difference over the
consecutive children from the \(j\)-th to the \((j+d-1)\)-th, with
\(f[\xi_k]=f(\xi_k)\), one has \(\Delta_{q,d}f=f[\xi_1,\ldots,\xi_d]\) and, for
\(d\ge2\),
\begin{equation}
f[\xi_j,\ldots,\xi_{j+d-1}]
=\frac{f[\xi_{j+1},\ldots,\xi_{j+d-1}]-f[\xi_j,\ldots,\xi_{j+d-2}]}
{\xi_{j+d-1}-\xi_j}.
\label{eq:divdiffrec}
\end{equation}
For \(d=2\) this is the difference quotient in \eqref{eq:rowopexample}.  The
closed form is not needed in the proof of Theorem~\ref{thm:newtondet}; it is
used in Section~\ref{sec:inverse-evaluation}, where \(\cN\) is inverted.
\end{remark}

\paragraph{Where the factors arise.}  The sibling differences
\(x^{(i)}_q-x^{(i)}_p\) of layer \(i\) are produced at the stage of the
recursion at which layer \(i\) has become the final layer, as the
determinants of the one-layer Newton--Vandermonde matrices \(W_{v_{i-1}}\) of the
children of the nodes \(v_{i-1}\) of layer \(i-1\) (for \(i=1\), as the determinants
of the one-layer graphs at the end of the induction).  For \(i=L\) this is the first
step, applied to \(\Gamma\) itself, and each final-layer difference arises
once.  For \(i<L\), by Lemma~\ref{lem:restrict}
applied repeatedly, the graphs reached at that stage are labelled by tuples
\((d_{i+1},\ldots,d_L)\), and their final layer consists of the nodes of
\(S^{(i)}_{[d_{i+1},\ldots,d_L]}\).  A pair of siblings \(p<q\) in layer \(i\)
therefore contributes the factor \(x^{(i)}_q-x^{(i)}_p\) once for each tuple
with \(p,q\in S^{(i)}_{[d_{i+1},\ldots,d_L]}\), and a sibling difference can
appear with an exponent greater than \(1\).  This count of tuples is the
exponent computed in Section~\ref{sec:exponents}.  Remark~\ref{rem:L3} below
shows the recursion for a graph with three layers, and in the example of
Appendix~\ref{app:D247} the factor \(x^{(1)}_2-x^{(1)}_1\) appears twice in
one diagonal block and once in the other.

\begin{remark}[An example with \(L=3\)]\label{rem:L3}
Let \(D=(2,3,4)\).  Node \(1\in V_1\) has one child, node \(1\in V_2\), which
has children \(T_1<T_2\).  Node \(2\in V_1\) has children \(2<3\) in \(V_2\),
with one child each, \(T_3\) and \(T_4\).  The rank data give
\begin{equation}
N_{T_1}=1,\quad N_{T_2}=C_3-x^{(3)}_{T_1},\quad N_{T_3}=C_1-x^{(1)}_1,\quad
N_{T_4}=C_2-x^{(2)}_2 ,
\end{equation}
and \(N_{T_4}(T_1)=x^{(2)}_1-x^{(2)}_2\), which is generically non-zero.  Here,
for generic eigenvalues, no relabelling of the nodes makes the evaluation
matrix triangular.  Indeed, the final node \(T_0\)
whose monomial is \(1\) has \(N_{T_0}=1\), so every other \(N_T\) would have to
vanish at \(T_0\).  The polynomial with monomial \(C_2\) (rank data
\(d_2=2,d_3=1\), for which only node \(2\in V_1\) is admissible) is
\(C_2-x^{(2)}_{w}\) with \(w\) the first child of node \(2\in V_1\); for
generic eigenvalues it vanishes at \(T_0\) only if \(T_0\) descends from node
\(2\in V_1\).  The
polynomial with monomial \(C_3\) (rank data \(d_2=1,d_3=2\), for which only
node \(1\in V_1\) is admissible) is \(C_3-x^{(3)}_{w'}\) with \(w'\) the first
child of node \(1\in V_2\); for generic eigenvalues it vanishes at \(T_0\) only if \(T_0\)
descends from node \(1\in V_1\).  Both cannot hold.  We return to this
example in Remark~\ref{rem:nested}.

The induction step in the proof of Theorem~\ref{thm:newtondet} removes layer
\(3\): the only Newton expansion with more than one term is that of the rows
of \(T_1,T_2\), and the diagonal blocks are the evaluation matrices of
\(\Gamma_{[1]}\), which is the graph of Figure~\ref{fig:L2}, and of
\(\Gamma_{[2]}\), a single path (Figure~\ref{fig:L3}).  The nodes of \(V_2\)
have \(2,1,1\) children, so there are \(d_{\max}=2\) blocks, of sizes \(3\) and
\(1\), the parts of the conjugate of the partition \((2,1,1)\).  Applying the
induction step once more to \(\Gamma_{[1]}\) gives
\begin{equation}
\det\cN=\bigl(x^{(3)}_{T_2}-x^{(3)}_{T_1}\bigr)
\bigl(x^{(1)}_2-x^{(1)}_1\bigr)\bigl(x^{(2)}_3-x^{(2)}_2\bigr).
\end{equation}
\end{remark}

\begin{figure}[ht]
\centering
\begin{tikzpicture}
\node[rt] (r) at (0,1.5) {};
\node[gn] (a1) at (1.4,0.4) {$1$};
\node[gn] (a2) at (1.4,2.5) {$2$};
\node[gn] (b1) at (2.8,0.4) {$1$};
\node[gn] (b2) at (2.8,2.0) {$2$};
\node[gn] (b3) at (2.8,3.0) {$3$};
\node[gn] (t1) at (4.2,0) {$T_1$};
\node[gn] (t2) at (4.2,0.9) {$T_2$};
\node[gn] (t3) at (4.2,2.0) {$T_3$};
\node[gn] (t4) at (4.2,3.0) {$T_4$};
\draw (r)--(a1) (r)--(a2) (a1)--(b1) (a2)--(b2) (a2)--(b3)
 (b1)--(t1) (b1)--(t2) (b2)--(t3) (b3)--(t4);
\draw[dashed,rounded corners] (3.75,-0.45) rectangle (4.65,1.35);
\foreach \x/\l in {1.4/V_1,2.8/V_2,4.2/V_3} \node[font=\small] at (\x,3.8) {$\l$};
\node[font=\small] at (2.1,-1.0) {$\Gamma$};
\draw[-{Stealth},thick] (5.0,1.5)--(5.8,1.5);
\node[rt] (s) at (6.3,1.5) {};
\node[gn] (c1) at (7.5,0.4) {$1$};
\node[gn] (c2) at (7.5,2.5) {$2$};
\node[gn] (e1) at (8.7,0.4) {$1$};
\node[gn] (e2) at (8.7,2.0) {$2$};
\node[gn] (e3) at (8.7,3.0) {$3$};
\draw (s)--(c1) (s)--(c2) (c1)--(e1) (c2)--(e2) (c2)--(e3);
\node[font=\small,align=center] at (7.5,-1.0) {$\Gamma_{[1]}$: nodes of \(V_2\) with \(\ge1\) child};
\node[rt] (u) at (10.6,0.4) {};
\node[gn] (f1) at (11.8,0.4) {$1$};
\node[gn] (g1) at (13.0,0.4) {$1$};
\draw (u)--(f1)--(g1);
\node[font=\small,align=center] at (11.8,-1.0) {$\Gamma_{[2]}$: \(\ge2\) children};
\end{tikzpicture}
\caption{The graph \(\Gamma\) of Remark~\ref{rem:L3} and its final-layer-excised graphs.
The dashed box encloses the only set of siblings in \(V_3\) with more than one
element; the Newton expansion acts there.  \(\Gamma_{[1]}\) is the graph of
Figure~\ref{fig:L2}, so the \(L=3\) example reduces to the \(L=2\) example,
which reduces in turn to the one-variable Newton--Vandermonde case.}
\label{fig:L3}
\end{figure}

\begin{remark}[When the evaluation matrix is triangular]\label{rem:nontri}
The mechanism of the triangular example applies to every pair \(T'\), \(T\)
when every admissible set \(R_i(v_{i-1};d_{i+1},\ldots,d_L)\) is an initial
segment of \(\Ch_i(v_{i-1})\), for all rank data.  The Newton evaluation
matrix is then lower triangular in path order.  For \(L=2\), the admissible
set \(R_1(d_2)=S^{(1)}_{[d_2]}\) consists of the nodes of \(V_1\) with at least
\(d_2\) children, and it is an initial segment for every \(d_2\) exactly when
\(p_{1,1}\ge p_{1,2}\ge\cdots\); with the increasing convention (2.8) of
\cite{KempRamgoolam2026} this holds only when all parts of \(p_1\) are equal,
as in Figure~\ref{fig:D24}.  In Remark~\ref{rem:L3}, for generic eigenvalues, no labelling of the nodes
makes the matrix triangular.  In Section~\ref{sec:young} the condition is
reformulated as a nesting condition on Young diagrams attached to the nodes
(Remark~\ref{rem:nested}).
\end{remark}

\subsection{The sibling-pair-truncated graph and the exponents}
\label{sec:exponents}

Fix a layer \(i\), a parent \(u\in V_{i-1}\), and siblings
\begin{equation}
p<q,
\qquad
p,q\in\Ch_i(u).
\end{equation}
Let \(G_{\mathrm{trunc}}(i,p,q)\) be the sibling-pair-truncated graph obtained by dropping
the layers to the left of \(i\), keeping only \(p,q\) at layer \(i\), and retaining all
their descendants and the connecting edges through layer \(L\).

For \(i=L\), set
\begin{equation}
\Exp(L,p,q)=1.
\label{eq:expfinal}
\end{equation}
For \(1\leq i<L\), define
\begin{equation}
\Exp(i,p,q)
=
\#\left\{
(d_{i+1},\ldots,d_L):
\left|
S^{(i)}_{[d_{i+1},\ldots,d_L]}
\bigl(G_{\mathrm{trunc}}(i,p,q)\bigr)
\right|=2
\right\}.
\label{eq:exptrunc}
\end{equation}
Equivalently, \(\Exp(i,p,q)\) is the number of graph monomials of
\(G_{\mathrm{trunc}}(i,p,q)\) which contain its first generator.

A final path contributes the factor
\begin{equation}
x_q^{(i)}-x_p^{(i)}
\end{equation}
to the diagonal product precisely when it passes through \(q\) at layer \(i\)
and \(p\) is an earlier admissible sibling of \(q\).  The remaining rank data
\begin{equation}
(d_{i+1},\ldots,d_L)
\end{equation}
must be such that both \(p\) and \(q\) survive in \(G_{\mathrm{trunc}}(i,p,q)\),
that is, both lie in \(S^{(i)}_{[d_{i+1},\ldots,d_L]}\).  Conversely each such
tuple determines exactly one final path through \(q\), by \eqref{eq:vi}--\eqref{eq:vL}
started at \(q\).  So the number of final paths contributing the factor is
the number of tuples in \eqref{eq:exptrunc}.  Therefore the multiplicity of
the sibling difference is \(\Exp(i,p,q)\).

\begin{theorem}[Kemp--Ramgoolam determinant formula]\label{thm:KR}
For a labelled Kemp--Ramgoolam degeneracy graph, with the rows and columns of the graph
monomial evaluation matrix \(\cM\) of \eqref{eq:Mmatrix} both labelled by
\(V_L\), through the bijection \eqref{eq:bijection}, and listed in the same
order,
\begin{equation}
\det\cM
=
\prod_{i=1}^{L}
\prod_{u\in V_{i-1}}
\prod_{\substack{p<q\\p,q\in\Ch_i(u)}}
\bigl(x_q^{(i)}-x_p^{(i)}\bigr)^{\Exp(i,p,q)}.
\label{eq:KRdet}
\end{equation}
\end{theorem}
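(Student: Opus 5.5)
By \eqref{eq:changebasisdet} and Theorem~\ref{thm:newtondet}, $\det\cM=\det\cN=\prod_{T\in V_L}\cN_{T,T}$. Both matrices have rows and columns labelled by $V_L$ in the same order, so the conventions match. It therefore remains to evaluate the product of the diagonal entries \eqref{eq:diagonal} and to collect equal factors.

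\emph{Structure of the factors.} Each factor in \eqref{eq:diagonal} has the form $x^{(i)}_{v_i(T)}-x^{(i)}_w$. Here $w<v_i(T)$, and $w$ lies in $R_i(v_{i-1}(T);d_{i+1},\ldots,d_L)\subseteq\Ch_i(v_{i-1}(T))$. So $w$ and $q:=v_i(T)$ are siblings with $w<q$, and every factor has the form $x^{(i)}_q-x^{(i)}_p$ appearing on the right of \eqref{eq:KRdet}. Moreover, a given $T$ contributes each pair $(p,q)$ at most once, because at layer $i$ its path passes through only one node. Hence the exponent of $x^{(i)}_q-x^{(i)}_p$ equals the number of final nodes $T$ satisfying three conditions: $v_i(T)=q$; $p\in R_i(u;d_{i+1}(T),\ldots,d_L(T))$, where $u$ is the common parent; and (automatically) $q\in R_i$ as well.

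\emph{Counting.} For $i=L$ we have $R_L=\Ch_L(u)$. The only final node with $v_L=q$ is $q$ itself, and $p<q$ lies in $R_L$. So the count is $1=\Exp(L,p,q)$.

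For $i<L$, membership of $p$ and $q$ in $R_i$ amounts to $p,q\in S^{(i)}_{[d_{i+1},\ldots,d_L]}$. This condition involves only descendants of $p$ and $q$, so it is the same as $|S^{(i)}_{[d_{i+1},\ldots,d_L]}(G_{\mathrm{trunc}}(i,p,q))|=2$. The recursion \eqref{eq:SLm1}--\eqref{eq:Si} computed in $\Gamma$ and in the truncated graph agrees on descendants of $p$ and $q$, and this is the one point to check.

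Next, the map $T\mapsto(d_{i+1}(T),\ldots,d_L(T))$, restricted to final nodes $T$ through $q$ with $p$ admissible, is a bijection onto this set of tuples. The inverse starts at $q$ and applies \eqref{eq:vi}--\eqref{eq:vL}. That construction is well defined because $q\in S^{(i)}$, by the same argument as in Proposition~\ref{prop:bijection}.

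For injectivity and compatibility, I would use that the rank data of $T$ at layers $>i$ are read off from right to left. They depend only on the portion of the path after $q$, and the sets $R_j$ for $j>i$ depend only on $v_{j-1}$ and the later $d$'s. Consequently, reading the rank data of the path produced by this construction returns the same tuple, exactly as in the inverse argument of Proposition~\ref{prop:bijection}. The count is therefore $\Exp(i,p,q)$, and \eqref{eq:KRdet} follows.

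The main obstacle is bookkeeping rather than any new idea. One must make sure the admissible set for $T$ at layer $i$ is computed from $T$'s own rank data, as stressed in Section~\ref{sec:admchld}, and that sibling pairs with $p$ non-admissible contribute nothing. This is exactly what the tuple condition encodes.
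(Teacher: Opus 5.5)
Your proposal is correct and follows the paper's own argument. You reduce to $\prod_{T}\cN_{T,T}$ via $\det\cM=\det\cN$ and Theorem~\ref{thm:newtondet}. You then count, for each sibling pair $p<q$ at layer $i$, the final nodes through $q$ for which $p$ is admissible, using the tuples $(d_{i+1},\ldots,d_L)$ with $p,q\in S^{(i)}_{[d_{i+1},\ldots,d_L]}$, with the forward construction started at $q$ as the inverse map. The only difference is that you spell out the well-definedness and injectivity of that correspondence, which the paper states in one line.
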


\paragraph{Signs and labellings.}  Equation \eqref{eq:KRdet} is an exact
equality, not one up to sign.  This rests on two facts: rows and columns of
\(\cM\) are listed in the same order, so that a simultaneous reordering does
not change \(\det\cM\); and each factor on the right is oriented as
\(x^{(i)}_q-x^{(i)}_p\) with \(p<q\), as are the diagonal entries
\eqref{eq:diagonal}.  If the graph monomials are instead listed in some other
order relative to the final nodes, differing from \eqref{eq:bijection} by a
permutation \(\sigma\) of \(V_L\), then \(\det\cM\) is multiplied by
\(\operatorname{sgn}\sigma\), and \eqref{eq:KRdet} holds up to this sign.  A
change of the labelling of Section~\ref{sec:ordconv} changes both sides
together: the pairing of graph monomials with final nodes on the left, and the
orientation of the sibling differences on the right.  Since the proof uses
only the order among siblings, \eqref{eq:KRdet} holds exactly in every such
labelling.  In \cite{KempRamgoolam2026} the columns of the matrix (6.9) are
labelled by exponent vectors, with no fixed order relative to the rows, and
the formula is stated up to sign.

\begin{remark}
For \(L=1\), the graph monomials are
\begin{equation}
1,C_1,C_1^2,\ldots,C_1^{D_1-1},
\end{equation}
the graph Newton polynomials are the ordinary Newton polynomials, and
\eqref{eq:KRdet} reduces to the usual Vandermonde determinant; see the
one-layer example in Section~\ref{sec:evalNewton} and Figure~\ref{fig:L1}.
\end{remark}

\section{Geometry of the exponents: higher-dimensional Young diagrams}
\label{sec:young}

In Section~\ref{sec:exponents} the exponent of a sibling difference \(x^{(i)}_q-x^{(i)}_p\) is
given as a count, \eqref{eq:exptrunc}.  In this section we give it a geometric
meaning.  To each node \(p\in V_i\) with \(i<L\) we attach a
\(k\)-dimensional Young diagram, \(k=L-i\), and the exponent is the number of
points in the intersection of the diagrams of \(p\) and \(q\).  For \(k=2\)
these are ordinary Young diagrams; for \(k=3\) they are three-dimensional
Young diagrams, also called plane partitions, familiar from the melting
crystal picture of topological strings \cite{ORV}; for \(k=4\) they are
four-dimensional Young diagrams, or solid partitions, which appear for example
in \cite{Nekrasov4}.  In the combinatorics literature they are also called
Ferrers diagrams, or finite order ideals of \(\N^k\).

\subsection{The Young diagram of a node}

\begin{definition}\label{def:Y}
Let \(p\in V_i\) with \(1\le i\le L-1\), and let \(k=L-i\).  The Young diagram
of \(p\) is the set of points of \(\N_{\ge1}^{\,k}\), i.e.\ tuples of
continuation data, for which \(p\) belongs to the sets
\eqref{eq:SLm1}--\eqref{eq:Si}:
\begin{equation}
\cY(p)=\Bigl\{(d_{i+1},\ldots,d_L)\in\N_{\ge1}^{\,k}:\
p\in S^{(i)}_{[d_{i+1},\ldots,d_L]}\Bigr\}.
\label{eq:Ydef}
\end{equation}
For a final node \(T\in V_L\) (\(k=0\)) we set
\(\cY(T)=\{()\}\subset\N_{\ge1}^{\,0}\), the set whose only element is the
empty tuple.
\end{definition}

In words, following the description (3.4) of \cite{KempRamgoolam2026} and
the paragraph after \eqref{eq:Si}: \((d_{i+1},\ldots,d_L)\) is a point of
\(\cY(p)\) exactly when \(p\) has at least \(d_{i+1}\) children that have at
least \(d_{i+2}\) children that have \ldots, up to nodes of layer \(L-1\) with
at least \(d_L\) children; other children of \(p\) are not required to satisfy
these conditions.

The set \(\cY(p)\) is finite, and if it contains a point it contains every
point which is no larger in any coordinate: lowering any \(d_j\) weakens the
condition ``at least \(d_j\) children''.  A finite subset of
\(\N_{\ge1}^{\,k}\) with this property is by definition a \(k\)-dimensional
Young diagram.  For \(k=1\) it is an interval \(\{1,\ldots,n\}\).

\paragraph{Points and cubes.}  Definition~\ref{def:Y} describes \(\cY(p)\) as a
set of points of the lattice \(\N_{\ge1}^{\,k}\).  In pictures, a point
\(\vec d=(d_1,\ldots,d_k)\) is drawn as the unit cube
\begin{equation}
\square(\vec d)=[d_1-1,d_1]\times\cdots\times[d_k-1,d_k]\subset\mathbb R^k ,
\label{eq:cube}
\end{equation}
whose corner farthest from the origin is \(\vec d\).  For \(k=1\) this is a
unit segment, for \(k=2\) a unit square, and for \(k=3\) a unit cube; we use
the word cube in every dimension.  The union of the cubes of a
\(k\)-dimensional Young diagram is a stack of cubes pushed into the corner of
the positive orthant: with each cube, all the cubes between it and the origin
are present.  The number of cubes is the number of points.  We define the
\emph{volume} \(|\cY|\) of a Young diagram \(\cY\) to be this number, and the
\emph{overlap volume} of two \(k\)-dimensional Young diagrams to be the volume
\(|\cY\cap\cY'|\) of their intersection, which is again a \(k\)-dimensional
Young diagram.

For example, in the graph \(D=(2,4,7)\) of Figure~\ref{fig:k2}, node
\(1\in V_1\) has children \(1,2\in V_2\), with \(2\) and \(1\) children.  A
tuple \((d_2,d_3)\) lies in \(\cY(1)\) when node \(1\) has at least \(d_2\)
children that have at least \(d_3\) children.  This gives the points
\((1,1)\), \((2,1)\), \((1,2)\), drawn as three cubes (here unit squares) with
far corners at these points, which form the Young diagram of the partition
\((2,1)\).  The row at \(d_3=1\) has \(2\) points, the number of children of
node \(1\) with at least one child, and the row at \(d_3=2\) has \(1\) point,
the number of children with at least two children.  This is the stacking description of
Section~\ref{sec:stacking}, which in general builds \(\cY(p)\) from the Young diagrams of the
children of \(p\): the first coordinate \(d_{i+1}\) is the stacking direction,
and above each point of the remaining coordinates there is a column whose
length is the number of children whose diagrams contain that point.  The
stacking direction is drawn horizontally in the two-dimensional pictures
(Figure~\ref{fig:k2}) and vertically in the three-dimensional ones.

\subsection{The exponent as an overlap volume}

\begin{proposition}\label{prop:overlap}
Let \(p<q\) be siblings in \(V_i\), \(1\le i\le L-1\).  Then
\begin{equation}
\Exp(i,p,q)=\bigl|\cY(p)\cap\cY(q)\bigr| .
\label{eq:higheroverlap}
\end{equation}
\end{proposition}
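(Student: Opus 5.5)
The plan is to unwind both sides to the same set of tuples. By \eqref{eq:exptrunc}, \(\Exp(i,p,q)\) counts tuples \((d_{i+1},\ldots,d_L)\) for which \(S^{(i)}_{[d_{i+1},\ldots,d_L]}(G_{\mathrm{trunc}}(i,p,q))\) has exactly two elements. Layer \(i\) of the truncated graph consists only of \(p\) and \(q\), so this set is a subset of \(\{p,q\}\), and it has two elements exactly when both \(p\) and \(q\) belong to it. By Definition~\ref{def:Y}, the right-hand side \(|\cY(p)\cap\cY(q)|\) counts tuples with \(p\in S^{(i)}_{[d_{i+1},\ldots,d_L]}\) and \(q\in S^{(i)}_{[d_{i+1},\ldots,d_L]}\), computed in \(\Gamma\). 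It therefore suffices to show, for each fixed tuple, that a node \(r\in\{p,q\}\) lies in \(S^{(i)}_{[d_{i+1},\ldots,d_L]}(G_{\mathrm{trunc}}(i,p,q))\) if and only if it lies in \(S^{(i)}_{[d_{i+1},\ldots,d_L]}\) computed in \(\Gamma\).

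The key step is a locality statement for the sets \eqref{eq:SLm1}--\eqref{eq:Si}. Whether a node \(u\in V_j\) lies in \(S^{(j)}_{[d_{j+1},\ldots,d_L]}\) depends only on the subtree of descendants of \(u\). I would prove this by decreasing induction on \(j\).

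\begin{itemize}
\item For \(j=L-1\), membership is the condition \(|\Ch_L(u)|\ge d_L\), which involves only the children of \(u\).
\item For \(j<L-1\), membership is \(|\Ch_{j+1}(u)\cap S^{(j+1)}_{[d_{j+2},\ldots,d_L]}|\ge d_{j+1}\). By induction, membership of each child in \(S^{(j+1)}\) depends only on that child's subtree, which is contained in the subtree of \(u\).
\end{itemize}

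The graph \(G_{\mathrm{trunc}}(i,p,q)\) retains all descendants of \(p\) and \(q\) together with the connecting edges. The subtrees of \(p\) and \(q\) are therefore identical in \(\Gamma\) and in \(G_{\mathrm{trunc}}(i,p,q)\). The sets \(S^{(j)}\) of the truncated graph, for \(j\ge i\), are thus the intersections of those of \(\Gamma\) with the node set of the truncated graph.

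I would also check the degenerate conventions. The truncated graph is regarded as a degeneracy graph with \(L-i\) layers, with \(p,q\) as its first layer, and its sets \(S^{(i)}\) are indexed by the same tuples \((d_{i+1},\ldots,d_L)\). The case \(i=L-1\) is handled directly by \eqref{eq:SLm1}, since \(\cY(p)\) is then the interval \(\{1,\ldots,|\Ch_L(p)|\}\) and \(\Exp(L-1,p,q)=\min(|\Ch_L(p)|,|\Ch_L(q)|)\). With locality established, the two counts are over the same set of tuples and \eqref{eq:higheroverlap} follows.

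I expect the main obstacle to be only bookkeeping. One must make precise that the truncated graph's \(S\)-sets are computed by the same recursion with its own children sets, and that these children sets coincide with those in \(\Gamma\) for every node it contains. No step beyond the locality induction seems to be needed. The paragraph preceding Theorem~\ref{thm:KR} already uses this identification implicitly.
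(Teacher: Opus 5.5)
Your proposal is correct and matches the paper's proof. Both arguments observe that membership in \(S^{(i)}_{[d_{i+1},\ldots,d_L]}\) depends only on descendants, which \(G_{\mathrm{trunc}}(i,p,q)\) retains, so the count in \eqref{eq:exptrunc} is exactly the count of tuples in \(\cY(p)\cap\cY(q)\); your decreasing induction on the layer spells out the locality fact the paper states in one line.
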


\begin{proof}
Whether a node of layer \(i\) belongs to \(S^{(i)}_{[d_{i+1},\ldots,d_L]}\)
depends only on its descendants.  The sibling-pair-truncated graph
\(G_{\mathrm{trunc}}(i,p,q)\) keeps \(p\), \(q\) and all their descendants, so
\(S^{(i)}_{[d_{i+1},\ldots,d_L]}(G_{\mathrm{trunc}}(i,p,q))\) consists of those
of \(p,q\) which lie in \(S^{(i)}_{[d_{i+1},\ldots,d_L]}\).  It has two
elements exactly when \((d_{i+1},\ldots,d_L)\in\cY(p)\cap\cY(q)\), and
\eqref{eq:exptrunc} counts these tuples.
\end{proof}

Since \(|\cY(p)\cap\cY(q)|=1\) for final siblings, in agreement with
\eqref{eq:expfinal}, the determinant formula of Theorem~\ref{thm:KR} reads
\begin{equation}
\det\cM
=\prod_{i=1}^{L}\ \prod_{v_{i-1}\in V_{i-1}}\ \prod_{\substack{p<q\\ p,q\in\Ch_i(v_{i-1})}}
\bigl(x^{(i)}_q-x^{(i)}_p\bigr)^{|\cY(p)\cap\cY(q)|}.
\label{eq:KRdetY}
\end{equation}

The overlap volumes count the factors of \eqref{eq:KRdetY} pair by pair.  They
can also be collected for each set of siblings.  For a set \(A\) of siblings in
\(V_i\), write
\begin{equation}
\operatorname{Vdm}_A\bigl(x^{(i)}\bigr)
=\prod_{\substack{p<q\\ p,q\in A}}\bigl(x^{(i)}_q-x^{(i)}_p\bigr),
\end{equation}
with \(\operatorname{Vdm}_A=1\) when \(A\) has at most one element.  The
admissible sets \(R_i(v_{i-1};d_{i+1},\ldots,d_L)\) are those of
\eqref{eq:R1}--\eqref{eq:RL}.  For \(i=1\),
\(R_1(v_0;d_2,\ldots,d_L)=R_1(d_2,\ldots,d_L)\), as in Section~\ref{sec:admchld}; for \(i=L\) the tuple of
continuation data is empty, and \(R_L(v_{L-1})=\Ch_L(v_{L-1})\).

\begin{corollary}[The determinant as a product of Vandermonde determinants]\label{cor:vdm}
\begin{equation}
\det\cM=\prod_{i=1}^{L}\ \prod_{v_{i-1}\in V_{i-1}}\
\prod_{(d_{i+1},\ldots,d_L)\in\N_{\ge1}^{\,L-i}}
\operatorname{Vdm}_{R_i(v_{i-1};d_{i+1},\ldots,d_L)}\bigl(x^{(i)}\bigr),
\label{eq:KRdetVdm}
\end{equation}
where only finitely many factors differ from \(1\).
\end{corollary}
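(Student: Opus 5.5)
The plan is to start from \eqref{eq:KRdetY}, or equivalently Theorem~\ref{thm:KR} with Proposition~\ref{prop:overlap}, and regroup its factors. This is a rearrangement of a finite product, and I will organise it by layer \(i\) and parent \(v_{i-1}\). Fix \(i\) and \(v_{i-1}\). The factor \(x^{(i)}_q-x^{(i)}_p\), for \(p<q\) in \(\Ch_i(v_{i-1})\), occurs on the left with exponent \(|\cY(p)\cap\cY(q)|\). I will show that it occurs on the right with the same multiplicity.

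\textbf{Step 1: membership in \(R_i\) via Young diagrams.} For \(1\le i\le L-1\), combine \eqref{eq:R1}--\eqref{eq:Ri} with Definition~\ref{def:Y}. A child \(p\) of \(v_{i-1}\) lies in \(R_i(v_{i-1};d_{i+1},\ldots,d_L)\) exactly when \((d_{i+1},\ldots,d_L)\in\cY(p)\). For \(i=1\), this uses \(\Ch_1(v_0)=V_1\). Hence \(p\) and \(q\) both lie in \(R_i(v_{i-1};\vec d)\) if and only if \(\vec d\in\cY(p)\cap\cY(q)\).

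\textbf{Step 2: counting on the right-hand side.} On the right-hand side of \eqref{eq:KRdetVdm}, the factor \(x^{(i)}_q-x^{(i)}_p\) appears once for each tuple \(\vec d\) with \(p,q\in R_i(v_{i-1};\vec d)\). The orientation is correct because \(\operatorname{Vdm}_A\) uses the induced integer order on \(A\), and this agrees with the order on \(\Ch_i(v_{i-1})\) by Section~\ref{sec:ordconv}. By Step 1, the number of such tuples is \(|\cY(p)\cap\cY(q)|\). A sibling difference of layer \(i\) belongs to a unique parent \(v_{i-1}\), so no factor is double counted across different \(v_{i-1}\).

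\textbf{Step 3: the final layer.} For \(i=L\) the tuple space \(\N_{\ge1}^{\,0}\) consists only of the empty tuple, and \(R_L(v_{L-1})=\Ch_L(v_{L-1})\). This gives \(\operatorname{Vdm}_{\Ch_L(v_{L-1})}\), in which each final sibling difference appears once. That matches exponent \(1\), by \eqref{eq:expfinal} or by \(|\cY(p)\cap\cY(q)|=1\).

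\textbf{Finiteness.} Only finitely many factors differ from \(1\). \(\operatorname{Vdm}_A\ne1\) requires \(|A|\ge2\), and so requires \(\vec d\) to lie in some \(\cY(p)\). Each \(\cY(p)\) is finite, and there are finitely many nodes.

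The only point needing care, which I expect to be the main obstacle, is the bookkeeping identification of \(R_i\) with \(\cY\)-membership in Step 1, including the convention at \(i=1\). Once that is in place, the rest is a rearrangement.
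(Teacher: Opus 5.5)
Your proposal is correct and is essentially the paper's own proof. Via \eqref{eq:R1}, \eqref{eq:Ri} and Definition~\ref{def:Y}, you show that siblings \(p<q\) lie together in \(R_i(v_{i-1};d_{i+1},\ldots,d_L)\) exactly when the tuple lies in \(\cY(p)\cap\cY(q)\), you treat \(i=L\) separately through \(R_L(v_{L-1})=\Ch_L(v_{L-1})\), and you then regroup the factors of \eqref{eq:KRdetY}, which equals \(\det\cM\) by Theorem~\ref{thm:KR} and Proposition~\ref{prop:overlap}, exactly as the paper does, with the orientation, unique-parent and finiteness points simply made more explicit.
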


\begin{proof}
Every factor on the right of \eqref{eq:KRdetVdm} is a product of differences
\(x^{(i)}_q-x^{(i)}_p\) of siblings \(p<q\) in \(\Ch_i(v_{i-1})\), since
\(R_i(v_{i-1};d_{i+1},\ldots,d_L)\subset\Ch_i(v_{i-1})\).  For \(i=L\) each
such difference occurs once, in \(\operatorname{Vdm}_{\Ch_L(v_{L-1})}\).  For
\(i<L\), by \eqref{eq:R1}, \eqref{eq:Ri} and \eqref{eq:Ydef}, both \(p\) and
\(q\) lie in \(R_i(v_{i-1};d_{i+1},\ldots,d_L)\) exactly when
\((d_{i+1},\ldots,d_L)\in\cY(p)\cap\cY(q)\), so the difference occurs
\(|\cY(p)\cap\cY(q)|\) times, a finite number.  The right side of
\eqref{eq:KRdetVdm} is therefore the right side of \eqref{eq:KRdetY}, which is
\(\det\cM\) by Theorem~\ref{thm:KR} and Proposition~\ref{prop:overlap}.
\end{proof}

\begin{remark}[The Vandermonde factors in the proof, and tensor products]\label{rem:vdm}
Corollary~\ref{cor:vdm} is also the form in which the proof of
Theorem~\ref{thm:newtondet} produces the determinant.  When the recursion has
reached a graph whose final layer is layer \(i\), labelled by the tuple
\((d_{i+1},\ldots,d_L)\) as in the paragraph ``Where the factors arise'' of
Section~\ref{sec:general} (for \(i=L\) this is \(\Gamma\) itself), the children
of \(v_{i-1}\) in that graph form the set \(R_i(v_{i-1};d_{i+1},\ldots,d_L)\),
by Lemma~\ref{lem:restrict} applied repeatedly.  In that graph,
\(v_{i-1}\) plays the role of the node \(q\in V_{L-1}\) in the proof of
Theorem~\ref{thm:newtondet}, and the tuple \((d_{i+1},\ldots,d_L)\) records the
values of \(d\) chosen at the successive steps of the recursion,
\(\Gamma\to\Gamma_{[d_L]}\to\bigl(\Gamma_{[d_L]}\bigr)_{[d_{L-1}]}\to\cdots\).
For \(i\ge2\), \(\operatorname{Vdm}_{R_i(v_{i-1};d_{i+1},\ldots,d_L)}(x^{(i)})\)
is then the determinant \eqref{eq:fact2} of the matrix \(W_{v_{i-1}}\) for
these children in the factorisation \eqref{eq:NPfactor} of the Newton
evaluation matrix of that graph; for \(i=1\) it is the determinant of the
Newton evaluation matrix of that graph, which has one layer.  Every factor in
\eqref{eq:KRdetVdm} is therefore the determinant of a one-layer
Newton--Vandermonde matrix.

The simplest special case is a two-fold tensor product.  Let \(L=2\), with
\(m=D_1\) nodes in \(V_1\), suppose that every node of \(V_1\) has the same
number \(n\) of children, and that for each \(k\) the \(k\)-th children of all
the nodes of \(V_1\) have the same eigenvalue \(y_k\) of \(C_2\).  The joint
eigenvalues then form the product set
\(\{x^{(1)}_1,\ldots,x^{(1)}_m\}\times\{y_1,\ldots,y_n\}\), and \(\cA_2\) is
the tensor product of the algebras generated by \(C_1\) and by \(C_2\).  The
graph monomials are \(C_1^aC_2^b\), \(0\le a<m\), \(0\le b<n\), and, with
rows in path order and columns in the order of the bijection
\eqref{eq:bijection}, \(\cM\) is the Kronecker product of the Vandermonde
matrices of the two sets of eigenvalues.  In \eqref{eq:KRdetVdm} every
\(R_1(d_2)\), \(1\le d_2\le n\), is all of \(V_1\), and every
\(R_2(v_1)=\Ch_2(v_1)\) carries the eigenvalues \(y_1,\ldots,y_n\), so that
\begin{equation}
\det\cM=\Bigl(\operatorname{Vdm}_{V_1}\bigl(x^{(1)}\bigr)\Bigr)^{n}
\Bigl(\prod_{1\le k<l\le n}(y_l-y_k)\Bigr)^{m},
\end{equation}
the determinant of a Kronecker product.  Degeneracy graphs are much more
general: the numbers of children, and the eigenvalues of the children, vary
from node to node.  The evaluation matrix \(\cM\) is then not a Kronecker
product, and the Newton evaluation matrix \(\cN\) is in general not
triangular in any ordering (Remark~\ref{rem:L3}), but the determinant keeps
the form \eqref{eq:KRdetVdm} of a product of Vandermonde determinants.
\end{remark}

\subsection{Reading the diagram from the graph: stacking}
\label{sec:stacking}

To draw \(\cY(p)\), fix all coordinates except the first.  By the description following
Definition~\ref{def:Y}, \((d_{i+1},d_{i+2},\ldots,d_L)\in\cY(p)\) exactly when \(p\) has at
least \(d_{i+1}\) children \(u\) with \((d_{i+2},\ldots,d_L)\in\cY(u)\).  So
the allowed values of \(d_{i+1}\) are
\begin{equation}
\bigl\{d_{i+1}:\ p\in S^{(i)}_{[d_{i+1},\ldots,d_L]}\bigr\}
=\bigl\{1,\ldots,|R_{i+1}(p;d_{i+2},\ldots,d_L)|\bigr\},
\label{eq:survival-values}
\end{equation}
and the height of \(\cY(p)\) above the point \(c=(d_{i+2},\ldots,d_L)\) is
\begin{equation}
h_p(c)=|R_{i+1}(p;c)|=\#\bigl\{u\in\Ch_{i+1}(p):\ c\in\cY(u)\bigr\}.
\label{eq:stacking}
\end{equation}
(For \(i=L-1\), \(c\) is the empty tuple, every child counts, and
\(h_p=|\Ch_L(p)|\).)  Thus
\begin{equation}
\cY(p)=\bigl\{(t,c):\ 1\le t\le h_p(c)\bigr\}:
\end{equation}
above each point \(c\) of the \((k-1)\)-dimensional floor stands a column of
\(h_p(c)\) cubes, and \(h_p\) is weakly decreasing in each coordinate of \(c\),
since the diagrams of the children are.  In words: \(\cY(p)\) is obtained by stacking the
\((k-1)\)-dimensional Young diagrams of the children of \(p\).  Reading the stack in horizontal layers,
the \(t\)-th layer from the bottom consists of the points covered by at least
\(t\) children; each layer is a \((k-1)\)-dimensional Young diagram, and the
layers shrink going up.  The intersection of two diagrams has height
\(\min(h_p(c),h_q(c))\) above \(c\), so
\begin{equation}
\bigl|\cY(p)\cap\cY(q)\bigr|=\sum_{c}\min\bigl(h_p(c),h_q(c)\bigr).
\label{eq:profileoverlap}
\end{equation}

\subsection{Low depths}
\label{sec:lowdepths}

\paragraph{One layer before the final layer (\(k=1\)).}  For \(p\in V_{L-1}\)
with \(n_p=|\Ch_L(p)|\) children, \(d_L\in\cY(p)\) exactly when \(p\) has at
least \(d_L\) children, so \(\cY(p)=\{1,\ldots,n_p\}\) and
\begin{equation}
\Exp(L-1,p,q)=\min(n_p,n_q).
\label{eq:minchildren}
\end{equation}

\paragraph{Two layers before the final layer (\(k=2\)).}  For \(p\in V_{L-2}\)
write the coordinates as \((s,r)=(d_{L-1},d_L)\).  Then \((s,r)\in\cY(p)\)
exactly when at least \(s\) of the children of \(p\) have at least \(r\)
children each.  By \eqref{eq:stacking} the length of the row at height \(r\) is
the number of children with at least \(r\) children,
\begin{equation}
\lambda_r(p)=\#\bigl\{u\in\Ch_{L-1}(p):\ n_u\ge r\bigr\},
\qquad \lambda_1(p)\ge\lambda_2(p)\ge\cdots,
\label{eq:partitionlambda}
\end{equation}
and
\begin{equation}
\cY(p)=\bigl\{(s,r):\ 1\le s\le\lambda_r(p)\bigr\}
\label{eq:youngprofile}
\end{equation}
is the Young diagram of the partition \(\lambda(p)\).  For siblings,
\begin{equation}
\Exp(L-2,p,q)=\sum_{r\ge1}\min\bigl(\lambda_r(p),\lambda_r(q)\bigr).
\label{eq:youngoverlap}
\end{equation}

\paragraph{Three layers before the final layer (\(k=3\)).}  For
\(p\in V_{L-3}\) write the coordinates as \((t,s,r)=(d_{L-2},d_{L-1},d_L)\).
Then \((t,s,r)\in\cY(p)\) exactly when at least \(t\) of the children of \(p\)
each have at least \(s\) children with at least \(r\) children each.  By
\eqref{eq:stacking} the height above the point \((s,r)\) of the floor is the number of
children whose Young diagram contains \((s,r)\),
\begin{equation}
\Lambda_{s,r}(p)=\#\bigl\{u\in\Ch_{L-2}(p):\ \lambda_r(u)\ge s\bigr\},
\label{eq:planepartition}
\end{equation}
which is weakly decreasing in \(s\) and in \(r\), and
\begin{equation}
\cY(p)=\bigl\{(t,s,r):\ 1\le t\le\Lambda_{s,r}(p)\bigr\}
\label{eq:planeprofile}
\end{equation}
is the three-dimensional Young diagram with height array \(\Lambda(p)\).  For
siblings,
\begin{equation}
\Exp(L-3,p,q)=\sum_{r,s\ge1}\min\bigl(\Lambda_{s,r}(p),\Lambda_{s,r}(q)\bigr).
\label{eq:planeoverlap}
\end{equation}

At \(k=4\) the same stacking of three-dimensional diagrams gives
four-dimensional Young diagrams, and so on.

\subsection{Examples}
\label{sec:youngexamples}

\paragraph{Intervals.}  For \(D=(4,10)\), \(p_1=[2,2,3,3]\), the nodes of
\(V_1\) have \(2,2,3,3\) children, so their Young diagrams are
\(\{1,2\},\{1,2\},\{1,2,3\},\{1,2,3\}\).  The exponent of
\(x^{(1)}_q-x^{(1)}_p\) is \(2\) for every pair except \((p,q)=(3,4)\), where it
is \(3\), in agreement with the determinant computed in
\cite{KempRamgoolam2026} and derived in Appendix~\ref{app:D410}.

\paragraph{Two-dimensional Young diagrams.}  For \(D=(2,4,7)\),
\(p_1=[2,2]\), \(c_2=[2,1,2,2]\) (Figure~\ref{fig:k2}), node \(1\in V_1\) has
children with \(2\) and \(1\) children, so \(\lambda(1)=(2,1)\), and node
\(2\in V_1\) has children with \(2\) and \(2\) children, so
\(\lambda(2)=(2,2)\).  The intersection has \(3\) cubes.  The nodes of \(V_2\)
have intervals \(\{1,2\},\{1\},\{1,2\},\{1,2\}\).  Hence
\begin{equation}
\det\cM=
\bigl(x^{(1)}_2-x^{(1)}_1\bigr)^3
\bigl(x^{(2)}_2-x^{(2)}_1\bigr)
\bigl(x^{(2)}_4-x^{(2)}_3\bigr)^2
\bigl(x^{(3)}_2-x^{(3)}_1\bigr)
\bigl(x^{(3)}_5-x^{(3)}_4\bigr)
\bigl(x^{(3)}_7-x^{(3)}_6\bigr).
\end{equation}

\begin{figure}[ht]
\centering
\begin{tikzpicture}
\node[rt] (r) at (0,1.6) {};
\node[gn] (a1) at (1.3,0.5) {$1$};
\node[gn] (a2) at (1.3,2.7) {$2$};
\node[gn,minimum size=5mm] (b1) at (2.6,0) {$1$};
\node[gn,minimum size=5mm] (b2) at (2.6,1.1) {$2$};
\node[gn,minimum size=5mm] (b3) at (2.6,2.2) {$3$};
\node[gn,minimum size=5mm] (b4) at (2.6,3.3) {$4$};
\draw (r)--(a1) (r)--(a2) (a1)--(b1) (a1)--(b2) (a2)--(b3) (a2)--(b4);
\foreach \y/\n in {-0.25/1,0.25/2,1.1/3,1.95/4,2.45/5,3.05/6,3.55/7}
  \node[circle,draw,inner sep=0pt,minimum size=2.5mm] (t\n) at (3.8,\y) {};
\draw (b1)--(t1) (b1)--(t2) (b2)--(t3) (b3)--(t4) (b3)--(t5) (b4)--(t6) (b4)--(t7);
\foreach \x/\l in {1.3/V_1,2.6/V_2,3.8/V_3} \node[font=\small] at (\x,4.2) {$\l$};
\draw[->,black!60] (5.20,0.60) -- (7.20,0.60) node[right,font=\scriptsize]{$d_2$};
\draw[->,black!60] (5.20,0.60) -- (5.20,2.60) node[above,font=\scriptsize]{$d_3$};
\filldraw[fill=YAtop,draw=black!80] (5.20,0.60) rectangle (5.90,1.30);
\filldraw[fill=YAtop,draw=black!80] (5.20,1.30) rectangle (5.90,2.00);
\filldraw[fill=YAtop,draw=black!80] (5.90,0.60) rectangle (6.60,1.30);
\node[font=\small] at (6.00,0.15) {$\cY(1)$, $\lambda=(2,1)$};
\draw[->,black!60] (8.30,0.60) -- (10.30,0.60) node[right,font=\scriptsize]{$d_2$};
\draw[->,black!60] (8.30,0.60) -- (8.30,2.60) node[above,font=\scriptsize]{$d_3$};
\filldraw[fill=YBtop,draw=black!80] (8.30,0.60) rectangle (9.00,1.30);
\filldraw[fill=YBtop,draw=black!80] (8.30,1.30) rectangle (9.00,2.00);
\filldraw[fill=YBtop,draw=black!80] (9.00,0.60) rectangle (9.70,1.30);
\filldraw[fill=YBtop,draw=black!80] (9.00,1.30) rectangle (9.70,2.00);
\node[font=\small] at (9.10,0.15) {$\cY(2)$, $\lambda=(2,2)$};
\draw[->,black!60] (11.40,0.60) -- (13.40,0.60) node[right,font=\scriptsize]{$d_2$};
\draw[->,black!60] (11.40,0.60) -- (11.40,2.60) node[above,font=\scriptsize]{$d_3$};
\filldraw[fill=YCtop,draw=black!80] (11.40,0.60) rectangle (12.10,1.30);
\filldraw[fill=YCtop,draw=black!80] (11.40,1.30) rectangle (12.10,2.00);
\filldraw[fill=YCtop,draw=black!80] (12.10,0.60) rectangle (12.80,1.30);
\node[font=\small] at (12.20,0.15) {$\cY(1)\cap\cY(2)$};
\end{tikzpicture}
\caption{Left: the graph with \(D=(2,4,7)\), \(p_1=[2,2]\),
\(c_2=[2,1,2,2]\).  Right: the Young diagrams of the two nodes of \(V_1\), with
cubes corresponding to the points \((d_2,d_3)\).  Node \(1\) has children with \(2\) and \(1\)
children, so \(\lambda(1)=(2,1)\); node \(2\) has children with \(2\) and \(2\)
children, so \(\lambda(2)=(2,2)\).  The intersection has \(3\) cubes, the
exponent of \(x^{(1)}_2-x^{(1)}_1\).}
\label{fig:k2}
\end{figure}

\paragraph{Three-dimensional Young diagrams.}  For \(D=(2,4,7,12)\),
\(p_1=[2,2]\), \(c_2=[2,1,2,2]\), \(c_3=[2,3,1,2,2,1,1]\), the nodes of
\(V_3\) have \(2,3,1,2,2,1,1\) children.  The nodes of \(V_2\) therefore have
\(\lambda=(2,2,1)\), \((1)\), \((2,2)\), \((2)\).  Stacking these
(Figure~\ref{fig:k3stack}), node \(1\in V_1\) has height array
\(\begin{smallmatrix}2&1&1\\1&1\end{smallmatrix}\) and node \(2\in V_1\) has
\(\begin{smallmatrix}2&1\\2&1\end{smallmatrix}\), with rows \(d_3=1,2\) and
columns \(d_4=1,2,3\).  Their intersection has
\(\begin{smallmatrix}2&1\\1&1\end{smallmatrix}\), with \(5\) cubes
(Figure~\ref{fig:k3overlap}).  Table~\ref{tab:D24712} lists all sibling pairs
of this graph.  In every case the overlap volume agrees with the exponent
found by factorising the determinant directly.

\begin{figure}[ht]
\centering
\begin{tikzpicture}
\node[font=\small,anchor=west] at (-1.2,0.90) {node $1\in V_1$:};
\filldraw[fill=YAtop,draw=black!80,line width=0.3pt] (1.200,0.000) -- (1.676,-0.275) -- (1.200,-0.550) -- (0.724,-0.275) -- cycle;
\filldraw[fill=YAtop,draw=black!80,line width=0.3pt] (1.676,-0.275) -- (2.153,-0.550) -- (1.676,-0.825) -- (1.200,-0.550) -- cycle;
\filldraw[fill=YAtop,draw=black!80,line width=0.3pt] (0.724,-0.275) -- (1.200,-0.550) -- (0.724,-0.825) -- (0.247,-0.550) -- cycle;
\filldraw[fill=YAtop,draw=black!80,line width=0.3pt] (2.153,-0.550) -- (2.629,-0.825) -- (2.153,-1.100) -- (1.676,-0.825) -- cycle;
\filldraw[fill=YAtop,draw=black!80,line width=0.3pt] (1.200,-0.550) -- (1.676,-0.825) -- (1.200,-1.100) -- (0.724,-0.825) -- cycle;
\node[font=\scriptsize] at (1.2,-1.50) {child, $\lambda=(2,2,1)$};
\node at (3.0,-0.60) {$+$};
\filldraw[fill=YBtop,draw=black!80,line width=0.3pt] (4.600,0.000) -- (5.076,-0.275) -- (4.600,-0.550) -- (4.124,-0.275) -- cycle;
\node[font=\scriptsize] at (4.6,-1.50) {child, $\lambda=(1)$};
\draw[-{Stealth},thick] (6.1,-0.60) -- node[above,font=\scriptsize]{stack} (7.4,-0.60);
\draw[->,black!60] (9.300,0.000) -- (8.062,-0.715) node[left,font=\scriptsize]{$d_3$};
\draw[->,black!60] (9.300,0.000) -- (11.015,-0.990) node[right,font=\scriptsize]{$d_4$};
\draw[->,black!60] (9.300,0.000) -- (9.300,1.430) node[above,font=\scriptsize]{$d_2$};
\filldraw[fill=YAleft,draw=black!80,line width=0.3pt] (8.824,-0.275) -- (9.300,-0.550) -- (9.300,0.000) -- (8.824,0.275) -- cycle;
\filldraw[fill=YAright,draw=black!80,line width=0.3pt] (9.776,-0.275) -- (9.300,-0.550) -- (9.300,0.000) -- (9.776,0.275) -- cycle;
\filldraw[fill=YAtop,draw=black!80,line width=0.3pt] (9.300,0.550) -- (9.776,0.275) -- (9.300,0.000) -- (8.824,0.275) -- cycle;
\filldraw[fill=YBleft,draw=black!80,line width=0.3pt] (8.824,0.275) -- (9.300,0.000) -- (9.300,0.550) -- (8.824,0.825) -- cycle;
\filldraw[fill=YBright,draw=black!80,line width=0.3pt] (9.776,0.275) -- (9.300,0.000) -- (9.300,0.550) -- (9.776,0.825) -- cycle;
\filldraw[fill=YBtop,draw=black!80,line width=0.3pt] (9.300,1.100) -- (9.776,0.825) -- (9.300,0.550) -- (8.824,0.825) -- cycle;
\filldraw[fill=YAleft,draw=black!80,line width=0.3pt] (9.300,-0.550) -- (9.776,-0.825) -- (9.776,-0.275) -- (9.300,0.000) -- cycle;
\filldraw[fill=YAright,draw=black!80,line width=0.3pt] (10.253,-0.550) -- (9.776,-0.825) -- (9.776,-0.275) -- (10.253,0.000) -- cycle;
\filldraw[fill=YAtop,draw=black!80,line width=0.3pt] (9.776,0.275) -- (10.253,0.000) -- (9.776,-0.275) -- (9.300,0.000) -- cycle;
\filldraw[fill=YAleft,draw=black!80,line width=0.3pt] (8.347,-0.550) -- (8.824,-0.825) -- (8.824,-0.275) -- (8.347,0.000) -- cycle;
\filldraw[fill=YAright,draw=black!80,line width=0.3pt] (9.300,-0.550) -- (8.824,-0.825) -- (8.824,-0.275) -- (9.300,0.000) -- cycle;
\filldraw[fill=YAtop,draw=black!80,line width=0.3pt] (8.824,0.275) -- (9.300,0.000) -- (8.824,-0.275) -- (8.347,0.000) -- cycle;
\filldraw[fill=YAleft,draw=black!80,line width=0.3pt] (9.776,-0.825) -- (10.253,-1.100) -- (10.253,-0.550) -- (9.776,-0.275) -- cycle;
\filldraw[fill=YAright,draw=black!80,line width=0.3pt] (10.729,-0.825) -- (10.253,-1.100) -- (10.253,-0.550) -- (10.729,-0.275) -- cycle;
\filldraw[fill=YAtop,draw=black!80,line width=0.3pt] (10.253,0.000) -- (10.729,-0.275) -- (10.253,-0.550) -- (9.776,-0.275) -- cycle;
\filldraw[fill=YAleft,draw=black!80,line width=0.3pt] (8.824,-0.825) -- (9.300,-1.100) -- (9.300,-0.550) -- (8.824,-0.275) -- cycle;
\filldraw[fill=YAright,draw=black!80,line width=0.3pt] (9.776,-0.825) -- (9.300,-1.100) -- (9.300,-0.550) -- (9.776,-0.275) -- cycle;
\filldraw[fill=YAtop,draw=black!80,line width=0.3pt] (9.300,0.000) -- (9.776,-0.275) -- (9.300,-0.550) -- (8.824,-0.275) -- cycle;
\node[font=\scriptsize] at (9.3,-1.50) {$\cY$, 6 cubes};
\node[font=\small,anchor=west] at (-1.2,-3.00) {node $2\in V_1$:};
\filldraw[fill=YAtop,draw=black!80,line width=0.3pt] (1.200,-3.900) -- (1.676,-4.175) -- (1.200,-4.450) -- (0.724,-4.175) -- cycle;
\filldraw[fill=YAtop,draw=black!80,line width=0.3pt] (0.724,-4.175) -- (1.200,-4.450) -- (0.724,-4.725) -- (0.247,-4.450) -- cycle;
\filldraw[fill=YAtop,draw=black!80,line width=0.3pt] (1.676,-4.175) -- (2.153,-4.450) -- (1.676,-4.725) -- (1.200,-4.450) -- cycle;
\filldraw[fill=YAtop,draw=black!80,line width=0.3pt] (1.200,-4.450) -- (1.676,-4.725) -- (1.200,-5.000) -- (0.724,-4.725) -- cycle;
\node[font=\scriptsize] at (1.2,-5.40) {child, $\lambda=(2,2)$};
\node at (3.0,-4.50) {$+$};
\filldraw[fill=YBtop,draw=black!80,line width=0.3pt] (4.600,-3.900) -- (5.076,-4.175) -- (4.600,-4.450) -- (4.124,-4.175) -- cycle;
\filldraw[fill=YBtop,draw=black!80,line width=0.3pt] (4.124,-4.175) -- (4.600,-4.450) -- (4.124,-4.725) -- (3.647,-4.450) -- cycle;
\node[font=\scriptsize] at (4.6,-5.40) {child, $\lambda=(2)$};
\draw[-{Stealth},thick] (6.1,-4.50) -- node[above,font=\scriptsize]{stack} (7.4,-4.50);
\draw[->,black!60] (9.300,-3.900) -- (8.062,-4.615) node[left,font=\scriptsize]{$d_3$};
\draw[->,black!60] (9.300,-3.900) -- (11.015,-4.890) node[right,font=\scriptsize]{$d_4$};
\draw[->,black!60] (9.300,-3.900) -- (9.300,-2.470) node[above,font=\scriptsize]{$d_2$};
\filldraw[fill=YAleft,draw=black!80,line width=0.3pt] (8.824,-4.175) -- (9.300,-4.450) -- (9.300,-3.900) -- (8.824,-3.625) -- cycle;
\filldraw[fill=YAright,draw=black!80,line width=0.3pt] (9.776,-4.175) -- (9.300,-4.450) -- (9.300,-3.900) -- (9.776,-3.625) -- cycle;
\filldraw[fill=YAtop,draw=black!80,line width=0.3pt] (9.300,-3.350) -- (9.776,-3.625) -- (9.300,-3.900) -- (8.824,-3.625) -- cycle;
\filldraw[fill=YBleft,draw=black!80,line width=0.3pt] (8.824,-3.625) -- (9.300,-3.900) -- (9.300,-3.350) -- (8.824,-3.075) -- cycle;
\filldraw[fill=YBright,draw=black!80,line width=0.3pt] (9.776,-3.625) -- (9.300,-3.900) -- (9.300,-3.350) -- (9.776,-3.075) -- cycle;
\filldraw[fill=YBtop,draw=black!80,line width=0.3pt] (9.300,-2.800) -- (9.776,-3.075) -- (9.300,-3.350) -- (8.824,-3.075) -- cycle;
\filldraw[fill=YAleft,draw=black!80,line width=0.3pt] (9.300,-4.450) -- (9.776,-4.725) -- (9.776,-4.175) -- (9.300,-3.900) -- cycle;
\filldraw[fill=YAright,draw=black!80,line width=0.3pt] (10.253,-4.450) -- (9.776,-4.725) -- (9.776,-4.175) -- (10.253,-3.900) -- cycle;
\filldraw[fill=YAtop,draw=black!80,line width=0.3pt] (9.776,-3.625) -- (10.253,-3.900) -- (9.776,-4.175) -- (9.300,-3.900) -- cycle;
\filldraw[fill=YAleft,draw=black!80,line width=0.3pt] (8.347,-4.450) -- (8.824,-4.725) -- (8.824,-4.175) -- (8.347,-3.900) -- cycle;
\filldraw[fill=YAright,draw=black!80,line width=0.3pt] (9.300,-4.450) -- (8.824,-4.725) -- (8.824,-4.175) -- (9.300,-3.900) -- cycle;
\filldraw[fill=YAtop,draw=black!80,line width=0.3pt] (8.824,-3.625) -- (9.300,-3.900) -- (8.824,-4.175) -- (8.347,-3.900) -- cycle;
\filldraw[fill=YBleft,draw=black!80,line width=0.3pt] (8.347,-3.900) -- (8.824,-4.175) -- (8.824,-3.625) -- (8.347,-3.350) -- cycle;
\filldraw[fill=YBright,draw=black!80,line width=0.3pt] (9.300,-3.900) -- (8.824,-4.175) -- (8.824,-3.625) -- (9.300,-3.350) -- cycle;
\filldraw[fill=YBtop,draw=black!80,line width=0.3pt] (8.824,-3.075) -- (9.300,-3.350) -- (8.824,-3.625) -- (8.347,-3.350) -- cycle;
\filldraw[fill=YAleft,draw=black!80,line width=0.3pt] (8.824,-4.725) -- (9.300,-5.000) -- (9.300,-4.450) -- (8.824,-4.175) -- cycle;
\filldraw[fill=YAright,draw=black!80,line width=0.3pt] (9.776,-4.725) -- (9.300,-5.000) -- (9.300,-4.450) -- (9.776,-4.175) -- cycle;
\filldraw[fill=YAtop,draw=black!80,line width=0.3pt] (9.300,-3.900) -- (9.776,-4.175) -- (9.300,-4.450) -- (8.824,-4.175) -- cycle;
\node[font=\scriptsize] at (9.3,-5.40) {$\cY$, 6 cubes};
\end{tikzpicture}
\caption{The three-dimensional Young diagrams of the two nodes of \(V_1\) for
\(D=(2,4,7,12)\), \(p_1=[2,2]\), \(c_2=[2,1,2,2]\), \(c_3=[2,3,1,2,2,1,1]\),
built by stacking the two-dimensional Young diagrams of their children
\eqref{eq:stacking}.  The floor coordinates are \(d_3\) and \(d_4\), the
height is \(d_2\).  The bottom layer (blue) consists of the floor points covered
by at least one child, the second layer (orange) of those covered by two.}
\label{fig:k3stack}
\end{figure}

\begin{figure}[ht]
\centering
\begin{tikzpicture}
\draw[->,black!60] (0.000,0.000) -- (-1.351,-0.780) node[left,font=\scriptsize]{$d_3$};
\draw[->,black!60] (0.000,0.000) -- (1.871,-1.080) node[right,font=\scriptsize]{$d_4$};
\draw[->,black!60] (0.000,0.000) -- (0.000,1.560) node[above,font=\scriptsize]{$d_2$};
\filldraw[fill=YAleft,draw=black!80,line width=0.3pt] (-0.520,-0.300) -- (0.000,-0.600) -- (0.000,0.000) -- (-0.520,0.300) -- cycle;
\filldraw[fill=YAright,draw=black!80,line width=0.3pt] (0.520,-0.300) -- (0.000,-0.600) -- (0.000,0.000) -- (0.520,0.300) -- cycle;
\filldraw[fill=YAtop,draw=black!80,line width=0.3pt] (0.000,0.600) -- (0.520,0.300) -- (0.000,0.000) -- (-0.520,0.300) -- cycle;
\filldraw[fill=YAleft,draw=black!80,line width=0.3pt] (-0.520,0.300) -- (0.000,0.000) -- (0.000,0.600) -- (-0.520,0.900) -- cycle;
\filldraw[fill=YAright,draw=black!80,line width=0.3pt] (0.520,0.300) -- (0.000,0.000) -- (0.000,0.600) -- (0.520,0.900) -- cycle;
\filldraw[fill=YAtop,draw=black!80,line width=0.3pt] (0.000,1.200) -- (0.520,0.900) -- (0.000,0.600) -- (-0.520,0.900) -- cycle;
\filldraw[fill=YAleft,draw=black!80,line width=0.3pt] (0.000,-0.600) -- (0.520,-0.900) -- (0.520,-0.300) -- (0.000,0.000) -- cycle;
\filldraw[fill=YAright,draw=black!80,line width=0.3pt] (1.039,-0.600) -- (0.520,-0.900) -- (0.520,-0.300) -- (1.039,0.000) -- cycle;
\filldraw[fill=YAtop,draw=black!80,line width=0.3pt] (0.520,0.300) -- (1.039,0.000) -- (0.520,-0.300) -- (0.000,0.000) -- cycle;
\filldraw[fill=YAleft,draw=black!80,line width=0.3pt] (-1.039,-0.600) -- (-0.520,-0.900) -- (-0.520,-0.300) -- (-1.039,0.000) -- cycle;
\filldraw[fill=YAright,draw=black!80,line width=0.3pt] (0.000,-0.600) -- (-0.520,-0.900) -- (-0.520,-0.300) -- (0.000,0.000) -- cycle;
\filldraw[fill=YAtop,draw=black!80,line width=0.3pt] (-0.520,0.300) -- (0.000,0.000) -- (-0.520,-0.300) -- (-1.039,0.000) -- cycle;
\filldraw[fill=YAleft,draw=black!80,line width=0.3pt] (0.520,-0.900) -- (1.039,-1.200) -- (1.039,-0.600) -- (0.520,-0.300) -- cycle;
\filldraw[fill=YAright,draw=black!80,line width=0.3pt] (1.559,-0.900) -- (1.039,-1.200) -- (1.039,-0.600) -- (1.559,-0.300) -- cycle;
\filldraw[fill=YAtop,draw=black!80,line width=0.3pt] (1.039,0.000) -- (1.559,-0.300) -- (1.039,-0.600) -- (0.520,-0.300) -- cycle;
\filldraw[fill=YAleft,draw=black!80,line width=0.3pt] (-0.520,-0.900) -- (0.000,-1.200) -- (0.000,-0.600) -- (-0.520,-0.300) -- cycle;
\filldraw[fill=YAright,draw=black!80,line width=0.3pt] (0.520,-0.900) -- (0.000,-1.200) -- (0.000,-0.600) -- (0.520,-0.300) -- cycle;
\filldraw[fill=YAtop,draw=black!80,line width=0.3pt] (0.000,0.000) -- (0.520,-0.300) -- (0.000,-0.600) -- (-0.520,-0.300) -- cycle;
\node[font=\small] at (0,-1.7) {$\cY(1)$: $\begin{smallmatrix}2&1&1\\1&1\end{smallmatrix}$};
\draw[->,black!60] (4.600,0.000) -- (3.249,-0.780) node[left,font=\scriptsize]{$d_3$};
\draw[->,black!60] (4.600,0.000) -- (6.471,-1.080) node[right,font=\scriptsize]{$d_4$};
\draw[->,black!60] (4.600,0.000) -- (4.600,1.560) node[above,font=\scriptsize]{$d_2$};
\filldraw[fill=YBleft,draw=black!80,line width=0.3pt] (4.080,-0.300) -- (4.600,-0.600) -- (4.600,0.000) -- (4.080,0.300) -- cycle;
\filldraw[fill=YBright,draw=black!80,line width=0.3pt] (5.120,-0.300) -- (4.600,-0.600) -- (4.600,0.000) -- (5.120,0.300) -- cycle;
\filldraw[fill=YBtop,draw=black!80,line width=0.3pt] (4.600,0.600) -- (5.120,0.300) -- (4.600,0.000) -- (4.080,0.300) -- cycle;
\filldraw[fill=YBleft,draw=black!80,line width=0.3pt] (4.080,0.300) -- (4.600,0.000) -- (4.600,0.600) -- (4.080,0.900) -- cycle;
\filldraw[fill=YBright,draw=black!80,line width=0.3pt] (5.120,0.300) -- (4.600,0.000) -- (4.600,0.600) -- (5.120,0.900) -- cycle;
\filldraw[fill=YBtop,draw=black!80,line width=0.3pt] (4.600,1.200) -- (5.120,0.900) -- (4.600,0.600) -- (4.080,0.900) -- cycle;
\filldraw[fill=YBleft,draw=black!80,line width=0.3pt] (4.600,-0.600) -- (5.120,-0.900) -- (5.120,-0.300) -- (4.600,0.000) -- cycle;
\filldraw[fill=YBright,draw=black!80,line width=0.3pt] (5.639,-0.600) -- (5.120,-0.900) -- (5.120,-0.300) -- (5.639,0.000) -- cycle;
\filldraw[fill=YBtop,draw=black!80,line width=0.3pt] (5.120,0.300) -- (5.639,0.000) -- (5.120,-0.300) -- (4.600,0.000) -- cycle;
\filldraw[fill=YBleft,draw=black!80,line width=0.3pt] (3.561,-0.600) -- (4.080,-0.900) -- (4.080,-0.300) -- (3.561,0.000) -- cycle;
\filldraw[fill=YBright,draw=black!80,line width=0.3pt] (4.600,-0.600) -- (4.080,-0.900) -- (4.080,-0.300) -- (4.600,0.000) -- cycle;
\filldraw[fill=YBtop,draw=black!80,line width=0.3pt] (4.080,0.300) -- (4.600,0.000) -- (4.080,-0.300) -- (3.561,0.000) -- cycle;
\filldraw[fill=YBleft,draw=black!80,line width=0.3pt] (3.561,0.000) -- (4.080,-0.300) -- (4.080,0.300) -- (3.561,0.600) -- cycle;
\filldraw[fill=YBright,draw=black!80,line width=0.3pt] (4.600,0.000) -- (4.080,-0.300) -- (4.080,0.300) -- (4.600,0.600) -- cycle;
\filldraw[fill=YBtop,draw=black!80,line width=0.3pt] (4.080,0.900) -- (4.600,0.600) -- (4.080,0.300) -- (3.561,0.600) -- cycle;
\filldraw[fill=YBleft,draw=black!80,line width=0.3pt] (4.080,-0.900) -- (4.600,-1.200) -- (4.600,-0.600) -- (4.080,-0.300) -- cycle;
\filldraw[fill=YBright,draw=black!80,line width=0.3pt] (5.120,-0.900) -- (4.600,-1.200) -- (4.600,-0.600) -- (5.120,-0.300) -- cycle;
\filldraw[fill=YBtop,draw=black!80,line width=0.3pt] (4.600,0.000) -- (5.120,-0.300) -- (4.600,-0.600) -- (4.080,-0.300) -- cycle;
\node[font=\small] at (4.6,-1.7) {$\cY(2)$: $\begin{smallmatrix}2&1\\2&1\end{smallmatrix}$};
\draw[->,black!60] (9.200,0.000) -- (7.849,-0.780) node[left,font=\scriptsize]{$d_3$};
\draw[->,black!60] (9.200,0.000) -- (11.071,-1.080) node[right,font=\scriptsize]{$d_4$};
\draw[->,black!60] (9.200,0.000) -- (9.200,1.560) node[above,font=\scriptsize]{$d_2$};
\filldraw[fill=YCleft,draw=black!80,line width=0.3pt] (8.680,-0.300) -- (9.200,-0.600) -- (9.200,0.000) -- (8.680,0.300) -- cycle;
\filldraw[fill=YCright,draw=black!80,line width=0.3pt] (9.720,-0.300) -- (9.200,-0.600) -- (9.200,0.000) -- (9.720,0.300) -- cycle;
\filldraw[fill=YCtop,draw=black!80,line width=0.3pt] (9.200,0.600) -- (9.720,0.300) -- (9.200,0.000) -- (8.680,0.300) -- cycle;
\filldraw[fill=YCleft,draw=black!80,line width=0.3pt] (8.680,0.300) -- (9.200,0.000) -- (9.200,0.600) -- (8.680,0.900) -- cycle;
\filldraw[fill=YCright,draw=black!80,line width=0.3pt] (9.720,0.300) -- (9.200,0.000) -- (9.200,0.600) -- (9.720,0.900) -- cycle;
\filldraw[fill=YCtop,draw=black!80,line width=0.3pt] (9.200,1.200) -- (9.720,0.900) -- (9.200,0.600) -- (8.680,0.900) -- cycle;
\filldraw[fill=YCleft,draw=black!80,line width=0.3pt] (9.200,-0.600) -- (9.720,-0.900) -- (9.720,-0.300) -- (9.200,0.000) -- cycle;
\filldraw[fill=YCright,draw=black!80,line width=0.3pt] (10.239,-0.600) -- (9.720,-0.900) -- (9.720,-0.300) -- (10.239,0.000) -- cycle;
\filldraw[fill=YCtop,draw=black!80,line width=0.3pt] (9.720,0.300) -- (10.239,0.000) -- (9.720,-0.300) -- (9.200,0.000) -- cycle;
\filldraw[fill=YCleft,draw=black!80,line width=0.3pt] (8.161,-0.600) -- (8.680,-0.900) -- (8.680,-0.300) -- (8.161,0.000) -- cycle;
\filldraw[fill=YCright,draw=black!80,line width=0.3pt] (9.200,-0.600) -- (8.680,-0.900) -- (8.680,-0.300) -- (9.200,0.000) -- cycle;
\filldraw[fill=YCtop,draw=black!80,line width=0.3pt] (8.680,0.300) -- (9.200,0.000) -- (8.680,-0.300) -- (8.161,0.000) -- cycle;
\filldraw[fill=YCleft,draw=black!80,line width=0.3pt] (8.680,-0.900) -- (9.200,-1.200) -- (9.200,-0.600) -- (8.680,-0.300) -- cycle;
\filldraw[fill=YCright,draw=black!80,line width=0.3pt] (9.720,-0.900) -- (9.200,-1.200) -- (9.200,-0.600) -- (9.720,-0.300) -- cycle;
\filldraw[fill=YCtop,draw=black!80,line width=0.3pt] (9.200,0.000) -- (9.720,-0.300) -- (9.200,-0.600) -- (8.680,-0.300) -- cycle;
\node[font=\small] at (9.2,-1.7) {$\cY(1)\cap\cY(2)$: $\begin{smallmatrix}2&1\\1&1\end{smallmatrix}$};
\end{tikzpicture}
\caption{The overlap of the two three-dimensional Young diagrams of
Figure~\ref{fig:k3stack}.  Below each diagram is its array of heights
(rows \(d_3=1,2\), columns \(d_4=1,2,3\)).  The heights of the intersection
are the entrywise minima, and its volume \(5\) is the exponent of
\(x^{(1)}_2-x^{(1)}_1\).}
\label{fig:k3overlap}
\end{figure}

\begin{table}[ht]
\centering
\renewcommand{\arraystretch}{1.25}
\begin{tabular}{c|c|c|c|c}
layer & siblings \(p<q\) & \(\cY(p)\) & \(\cY(q)\) & \(|\cY(p)\cap\cY(q)|\) = exponent\\\hline
1 & \(1,2\) & \(\begin{smallmatrix}2&1&1\\1&1\end{smallmatrix}\) & \(\begin{smallmatrix}2&1\\2&1\end{smallmatrix}\) & 5\\
2 & \(1,2\) & \((2,2,1)\) & \((1)\) & 1\\
2 & \(3,4\) & \((2,2)\) & \((2)\) & 2\\
3 & \(1,2\) & \(\{1,2\}\) & \(\{1,2,3\}\) & 2\\
3 & \(4,5\) & \(\{1,2\}\) & \(\{1,2\}\) & 2\\
3 & \(6,7\) & \(\{1\}\) & \(\{1\}\) & 1\\
4 & all six pairs & -- & -- & 1
\end{tabular}
\caption{All sibling pairs of the graph \(D=(2,4,7,12)\), \(p_1=[2,2]\),
\(c_2=[2,1,2,2]\), \(c_3=[2,3,1,2,2,1,1]\).  Layer-\(1\) diagrams are given by
height arrays, layer-\(2\) diagrams by partitions, layer-\(3\) diagrams by
intervals.  The last column was also obtained independently from the order of
vanishing of the determinant as \(x^{(i)}_q\to x^{(i)}_p\).}
\label{tab:D24712}
\end{table}

\begin{figure}[ht]
\centering
\begin{tikzpicture}[scale=0.8]
\draw[->] (0,0)--(3.2,0) node[right,font=\small] {$d_2$};
\draw[->] (0,0)--(0,3.2) node[above,font=\small] {$d_3$};
\draw[fill=blue!10] (0,0) rectangle (1,1) (0,1) rectangle (1,2);
\node[font=\small] at (1.5,-0.7) {$\cY(1)$, \(\lambda=(1,1)\)};
\begin{scope}[xshift=6cm]
\draw[->] (0,0)--(3.2,0) node[right,font=\small] {$d_2$};
\draw[->] (0,0)--(0,3.2) node[above,font=\small] {$d_3$};
\draw[fill=blue!10] (0,0) rectangle (1,1) (1,0) rectangle (2,1);
\node[font=\small] at (1.5,-0.7) {$\cY(2)$, \(\lambda=(2)\)};
\end{scope}
\end{tikzpicture}
\caption{The Young diagrams \eqref{eq:youngprofile} of the two nodes of
\(V_1\) in Remark~\ref{rem:L3}, with cubes corresponding to the points \((d_2,d_3)\).  Node
\(1\) is admissible for \((d_2,d_3)=(1,2)\) but not \((2,1)\); node \(2\) the
reverse.  The objects are not nested, so neither order of the two nodes makes
every admissible set an initial segment.  Their overlap is one cube, the
exponent of \(x^{(1)}_2-x^{(1)}_1\).}
\label{fig:L3ferrers}
\end{figure}

\begin{remark}[Nesting and triangularity]\label{rem:nested}
A node \(u\in V_i\) is admissible for continuation data
\((d_{i+1},\ldots,d_L)\) exactly when this tuple is a point of \(\cY(u)\).
Hence the condition of Remark~\ref{rem:nontri}, that every admissible set be
an initial segment of the children, holds when \(\cY(u)\supseteq\cY(u')\) for
all siblings \(u<u'\).  For \(L=2\) the Young diagrams of the nodes of \(V_1\)
are the intervals \(\{1,\ldots,p_{1,a}\}\), and nesting is the condition
\(p_{1,1}\ge p_{1,2}\ge\cdots\) of Remark~\ref{rem:nontri}.  In the graph of
Remark~\ref{rem:L3}, the Young diagrams of the two nodes of \(V_1\) are those
of the partitions \((1,1)\) and \((2)\) (Figure~\ref{fig:L3ferrers}).  They are
not nested in either order, which is why, for generic eigenvalues, no
labelling makes the evaluation matrix triangular there.  Their overlap is one cube, the exponent of
\(x^{(1)}_2-x^{(1)}_1\) found in Remark~\ref{rem:L3}.
\end{remark}

\begin{remark}\label{rem:young-operations}
Proposition~\ref{prop:overlap} is a reformulation of the count
\eqref{eq:exptrunc}.  What remains to be developed is an intrinsic theory of
these Young diagrams, including the operations on them induced by grafting and
truncating degeneracy graphs.
\end{remark}

\section{Projectors in the monomial basis}
\label{sec:projectors}

The motivating question of \cite{KempRamgoolam2026} is the change of basis
between the projector basis of the algebra and monomials in the
generators.  In this section we recall the algebra, deduce from
Theorem~\ref{thm:KR} that the graph monomials form a basis, and give an
explicit procedure for inverting the evaluation matrix, and hence for computing
the coefficients of the projectors in the monomial basis.

\subsection{The algebra and its generators}
\label{sec:algebra}

The final nodes label the projector basis of a commutative
associative semisimple algebra
\begin{equation}
\cA_L
=
\bigoplus_{T\in V_L}\C P_T,
\qquad
P_TP_{T'}=\delta_{T,T'}P_T ,
\qquad \sum_{T\in V_L}P_T=1,
\label{eq:CASS}
\end{equation}
denoted \(\cA\) in \cite{KempRamgoolam2026}.  For a node \(u\in V_i\), let
\(P_u^{(i)}\) be the sum of the \(P_T\) over the final nodes \(T\) descending
from \(u\), so that \(P_u^{(i)}=\sum_{w\in\Ch_{i+1}(u)}P_w^{(i+1)}\).  The
generators are
\begin{equation}
C_i=\sum_{u\in V_i}x_u^{(i)}P_u^{(i)},\qquad 1\le i\le L,
\label{eq:CiCASS}
\end{equation}
and the subalgebras generated by \(C_1\), by \(C_1,C_2\), \ldots, by
\(C_1,\ldots,C_L\) have projector bases labelled by the nodes of \(V_1\),
\(V_2\), \ldots, \(V_L\).  Siblings have distinct eigenvalues, since the
children of a node are distinguished precisely by the eigenvalue of the next
generator.  For a polynomial \(P\),
\begin{equation}
P(C_1,\ldots,C_L)\,P_T=P(T)\,P_T ,
\label{eq:polyacts}
\end{equation}
where \(P(T)\) is the value at the final node \(T\), as in
Section~\ref{sec:evalNewton}.

Sections~\ref{sec:proof} and~\ref{sec:young} use only the graph and the eigenvalues.  Every tower of
commutative semisimple algebras generated by \(C_1,\ldots,C_L\) defines such a
graph \cite{KempRamgoolam2026}, so the results apply to any such tower, not
only to the symmetric group algebras which motivated \cite{KempRamgoolam2026}.

\subsection{The monomial basis}
\label{sec:monomial-basis}

Let \(\cM\) be the graph monomial evaluation matrix \eqref{eq:Mmatrix}, with
rows and columns labelled by final nodes in path order.  By \eqref{eq:polyacts} and
\(\sum_{T'}P_{T'}=1\),
\begin{equation}
m_T(C_1,\ldots,C_L)=\sum_{T'\in V_L}\cM_{T',T}\,P_{T'} .
\label{eq:monomial-projector-expansion}
\end{equation}
By Theorem~\ref{thm:KR}, \(\det\cM\) is a product of powers of differences of
sibling eigenvalues, and since siblings have distinct eigenvalues
(Section~\ref{sec:algebra}) it is nonzero.  Hence the graph monomials
\(m_T(C_1,\ldots,C_L)\) form a basis of \(\cA_L\), as conjectured in
\cite{KempRamgoolam2026}, and
\begin{equation}
P_T=\sum_{S\in V_L}\bigl(\cM^{-1}\bigr)_{S,T}\,m_{S}(C_1,\ldots,C_L).
\label{eq:projector-monomial-expansion}
\end{equation}
The \(T\)-th column of \(\cM^{-1}\) gives the coefficients of \(P_T\) in the
monomial basis.

\subsection{Inverting the evaluation matrix}
\label{sec:inverse-evaluation}

Let \(U\) be the change of basis \eqref{eq:Umatrix} from graph monomials to
graph Newton polynomials, and \(\cN\) the Newton evaluation matrix
\eqref{eq:Nmatrix}.  As in Section~\ref{sec:graphNewton},
\begin{equation}
\cN=\cM U,\qquad \cM^{-1}=U\cN^{-1}.
\label{eq:inverse-evaluation-factorisation}
\end{equation}
Equivalently, \(N_T(C_1,\ldots,C_L)=\sum_{T'}\cN_{T',T}P_{T'}\): the columns of
\(\cN\) expand the Newton polynomials in projectors, and inverting \(\cN\)
expresses the projectors in terms of Newton polynomials, which are then
expanded in monomials using \(U\).  By Theorem~\ref{thm:KR} both matrices are
invertible, and both inversions can be made explicit.

\paragraph{The unitriangular change of basis.}  Writing \(U=1+K_U\), the
matrix \(K_U\) is strictly upper triangular when the final nodes are listed by
the total degree of \(m_T\) (Section~\ref{sec:graphNewton}), hence nilpotent, and
\begin{equation}
U^{-1}=\sum_{r=0}^{D_L-1}(-K_U)^r .
\label{eq:inverse-unitriangular-change}
\end{equation}

\paragraph{The triangular case.}  When every admissible set is an initial
segment of the children (Remark~\ref{rem:nontri}), \(\cN\) is lower triangular.
Writing \(\cN=(1+K_{\cN})\,\cN_{\rm diag}\) with \(\cN_{\rm diag}\) the diagonal
part and \(K_{\cN}\) strictly lower triangular,
\begin{equation}
\cN^{-1}=\cN_{\rm diag}^{-1}\Bigl(\sum_{r=0}^{D_L-1}(-K_{\cN})^r\Bigr),
\qquad
\cM^{-1}=U\,\cN_{\rm diag}^{-1}\Bigl(\sum_{r=0}^{D_L-1}(-K_{\cN})^r\Bigr).
\label{eq:inverse-monomial-evaluation}
\end{equation}
Entrywise, the powers of a strictly triangular matrix are sums over monotone
chains of final nodes, so the coefficients in
\eqref{eq:projector-monomial-expansion} are finite alternating sums over such
chains.

\paragraph{The general case.}  In general \(\cN\) is not triangular
(Section~\ref{sec:evalNewton}), but the proof of Theorem~\ref{thm:newtondet} showed that
\(\cN=W\cN'\), with \(\cN'\) block triangular, \eqref{eq:NPfactor}.  The same
structure inverts it.  By \eqref{eq:inverse-evaluation-factorisation} and
\(N_T(C_1,\ldots,C_L)=\sum_{T'}\cN_{T',T}P_{T'}\),
\begin{equation}
P_T=\sum_{S\in V_L}\bigl(\cN^{-1}\bigr)_{S,T}\,N_{S}(C_1,\ldots,C_L),
\label{eq:projector-Newton-expansion}
\end{equation}
so it suffices to compute the columns of \(\cN^{-1}\).  For \(L=1\), \(\cN\) is
the lower triangular Newton matrix of Section~\ref{sec:proof}, and its inverse is the
matrix of divided differences \(Q_{v_0}=W_{v_0}^{-1}\) of Remark~\ref{rem:divdiff} for the root,
\((\cN^{-1})_{k,j}=1/\prod_{m\le k,\,m\ne j}(x^{(1)}_j-x^{(1)}_m)\) for \(j\le k\)
and \(0\) otherwise, the coefficients in \eqref{eq:divdiff}.  For
\(L\ge2\), \eqref{eq:NPfactor} gives
\begin{equation}
\cN^{-1}=\cN'^{-1}W^{-1}=\cN'^{-1}Q ,
\label{eq:Einverse-general}
\end{equation}
where \(Q=W^{-1}\) is block diagonal, with one block \(Q_q=W_q^{-1}\) for the
children of each \(q\in V_{L-1}\): the matrix of the divided differences
\eqref{eq:divdiff} (Remark~\ref{rem:divdiff}).  Here \(\cN'\) is block triangular after ordering by
\(d=d_L\), with the Newton evaluation matrices of the final-layer-excised
graphs \(\Gamma_{[d]}\) as diagonal blocks.  Thus the determinant is computed
from the Newton expansion, and the inverse from its closed form, the divided
differences.
Theorem~\ref{thm:KR} guarantees that all the matrices inverted here are
invertible.  Explicitly, the column \(c=\cN^{-1}e_T\), where \(e_T\) is the unit
vector of the final node \(T\), is computed as follows.  As in
Corollary~\ref{cor:factor}, label each final node by the pair \((q,d)\), where
\(q=v_{L-1}(T)\) and \(T\) is the \(d\)-th child of \(q\), write
\(c_{q,d}=c_T\), and let \(d_{\max}\) be, as in the proof of
Theorem~\ref{thm:newtondet}, the largest number of children of a node of
\(V_{L-1}\).

\begin{enumerate}
\item \emph{Divided differences among siblings (the matrix \(Q\)).}  For each
  \(q\in V_{L-1}\), with \(n\) children, compute \(g(q,d)=\Delta_{q,d}\,e_T\)
  for \(1\le d\le n\), where \(e_T\) is regarded as the function on \(V_L\)
  equal to \(1\) at \(T\) and \(0\) elsewhere.  If \(T\) is the \(k\)-th child
  of \(q_0\), and \(\xi_1,\xi_2,\ldots\) are the eigenvalues of \(C_L\) on the
  children of \(q_0\), then \(g(q,d)=0\) unless \(q=q_0\) and \(d\ge k\), and
  \(g(q_0,d)=\prod_{m\le d,\,m\ne k}(\xi_k-\xi_m)^{-1}\).
\item \emph{The entries of \(\cN'\) above the diagonal blocks.}  For \(d<d'\),
  \(q\in S^{(L-1)}_{[d]}\) and \(q'\in S^{(L-1)}_{[d']}\), let \(T\) be the
  final node labelled \((q',d')\), and compute
  \(\cN'_{(q,d),T}=\Delta_{q,d}N_{T}\), the \(d\)-th Newton coefficient of
  \(N_{T}\) on the children of \(q\).  (For \(d>d'\) these
  entries vanish, and for \(d=d'\) they equal \(\cN(\Gamma_{[d]})_{q,q'}\), by
  the proof of Theorem~\ref{thm:newtondet}.)
\item \emph{Block back-substitution.}  For \(d=d_{\max},d_{\max}-1,\ldots,1\)
  in turn, with the \(c_{q',d'}\), \(d'>d\), already known, set
  \begin{equation}
  h_d(q)=g(q,d)-\sum_{d'>d}\ \sum_{q'\in S^{(L-1)}_{[d']}}
  \cN'_{(q,d),(q',d')}\,c_{q',d'},
  \qquad q\in S^{(L-1)}_{[d]},
  \end{equation}
  and determine the \(c_{q',d}\) from the diagonal block,
  \begin{equation}
  \sum_{q'\in S^{(L-1)}_{[d]}}\cN(\Gamma_{[d]})_{q,q'}\;c_{q',d}=h_d(q),
  \qquad q\in S^{(L-1)}_{[d]} .
  \label{eq:block-back-substitution}
  \end{equation}
  The matrix of this system is the Newton evaluation matrix of
  \(\Gamma_{[d]}\), which has \(L-1\) layers; it is inverted by the same
  procedure.
\end{enumerate}
The recursion ends at one layer, where the inverse is the matrix of divided
differences.  Finally \(\cM^{-1}=U\cN^{-1}\): expanding the \(N_{S}\) in
\eqref{eq:projector-Newton-expansion} in graph monomials gives the
coefficients in \eqref{eq:projector-monomial-expansion}.

The only divisions in the procedure are by differences of eigenvalues of
siblings, in step 1 and in the corresponding steps for the graphs
\(\Gamma_{[d]}\), whose siblings are siblings in the original graph.  Since
\(U\) has polynomial entries, the coefficients of the projectors in the
monomial basis are polynomials in the eigenvalues divided by products of
differences of sibling eigenvalues, consistent with Theorem~\ref{thm:KR}.  We
have implemented the procedure and checked, in exact arithmetic, that it
reproduces \(\cM^{-1}\) obtained by direct inversion for the graphs
\(D=(2,3)\), \((2,5)\) with \(p_1=[2,3]\), \((4,10)\), \((2,3,4)\), \((2,4,7)\) and
\((2,4,7,12)\), all but \((2,5)\) treated in this paper.

\paragraph{Example.}  For the graph of Figure~\ref{fig:L2}, write
\(x_k=x^{(1)}_k\) and \(y_k=x^{(2)}_{T_k}\).  The graph monomials are
\(1,C_1,C_2\) and the Newton polynomials are \(N_{T_1}=1\),
\(N_{T_2}=C_1-x_1\), \(N_{T_3}=C_2-y_2\), so that, with rows and columns in
the order \(T_1,T_2,T_3\),
\begin{equation}
U=\begin{pmatrix}1&-x_1&-y_2\\0&1&0\\0&0&1\end{pmatrix}.
\end{equation}  The final nodes
are \(T_1=(1,1)\), \(T_2=(2,1)\), \(T_3=(2,2)\) and \(d_{\max}=2\).  We compute
the column of \(\cN^{-1}\) for \(T_3\), and hence \(P_{T_3}\).
\begin{enumerate}
\item \(T_3\) is the second child of node \(2\), so \(g(1,1)=g(2,1)=0\) and
  \(g(2,2)=1/(y_3-y_2)\).  This is the difference quotient in
  \eqref{eq:rowopexample}.
\item The entries of \(\cN'\) above the diagonal blocks are
  \(\cN'_{(q,1),T_3}=N_{T_3}(\text{first child of }q)\), equal to \(y_1-y_2\) for
  \(q=1\) and \(0\) for \(q=2\): the top right entry of \eqref{eq:E23after}.
\item For \(d=2\), the diagonal block is the \(1\times1\) matrix \((1)\) of
  \(\Gamma_{[2]}\), so \(c_{T_3}=g(2,2)=1/(y_3-y_2)\).  For \(d=1\),
  \begin{equation}
  h_1(1)=0-(y_1-y_2)\,c_{T_3}=-\frac{y_1-y_2}{y_3-y_2},\qquad
  h_1(2)=0 ,
  \end{equation}
  and the diagonal block is the \(2\times2\) Newton--Vandermonde matrix of
  \(\Gamma_{[1]}\) in \eqref{eq:E23after}, whose inverse is the matrix of
  divided differences: \(c_{T_1}=h_1(1)\) and
  \(c_{T_2}=\bigl(h_1(2)-h_1(1)\bigr)/(x_2-x_1)\).
\end{enumerate}
By \eqref{eq:projector-Newton-expansion},
\begin{equation}
P_{T_3}=\frac{(C_2-y_2)-(y_1-y_2)+(y_1-y_2)\dfrac{C_1-x_1}{x_2-x_1}}{y_3-y_2}
=\frac{(C_2-y_2)-(y_1-y_2)\,\dfrac{C_1-x_2}{x_1-x_2}}{y_3-y_2}.
\end{equation}
The same steps for \(T_1\) give \(g(1,1)=1\) and all other \(g=0\), hence
\(c_{T_3}=0\), \(c_{T_1}=1\), \(c_{T_2}=-1/(x_2-x_1)\) and
\(P_{T_1}=(C_1-x_2)/(x_1-x_2)\).  Finally
\(P_{T_2}=1-P_{T_1}-P_{T_3}\).  These are the expansions
\eqref{eq:projector-monomial-expansion} of the three projectors in the
monomial basis \(1,C_1,C_2\).

\section{Conclusions and Perspectives}
We have proved the Kemp–Ramgoolam determinant conjecture \cite{KempRamgoolam2026} for degeneracy graphs with any number of layers (Theorem~\ref{thm:KR}). The determinant of the graph monomial evaluation matrix factorises into powers of differences of sibling eigenvalues, with exponents determined by the branching structure (equation~\eqref{eq:exptrunc}). Since distinct children of a node are distinguished by distinct eigenvalues of the next generator (Section~\ref{sec:algebra}), every factor in the determinant formula is non-zero, so the graph monomials form a basis of the algebra (Proposition~\ref{prop:bijection} and Section~\ref{sec:monomial-basis}). This establishes the basis property beyond the one- and two-layer cases shown in \cite{KempRamgoolam2026}, for the full class of algebras described by these graphs.

The method employs Newton polynomial analogues for graphs (equation~\eqref{eq:graphNewton}). The leading part of these polynomials defines a unitriangular transformation from the monomials of the graph (equations~\eqref{eq:leadingterm}--\eqref{eq:changebasisdet}), so that the determinant of the monomial evaluation matrix equals that of the Newton evaluation matrix, which is the product of its diagonal entries (Theorem~\ref{thm:newtondet}). The proof expands the rows of siblings in the Newton basis of their eigenvalues (Lemma~\ref{lem:newton}): the Newton evaluation matrix factorises as a block diagonal matrix of one-layer Newton--Vandermonde matrices times a matrix which, by a degree argument (Lemma~\ref{lem:degree}), is block triangular. The diagonal blocks correspond to subgraphs that arise from deleting the last layer and keeping only those nodes that meet some fixed child condition (Definition~\ref{def:final-layer-excised}, Lemma~\ref{lem:restrict} and Corollary~\ref{cor:factor}). This yields an induction on the number of layers, which applies also when the Newton evaluation matrix is not triangular. The same structure provides a recursive inverse, in which the divided differences invert the Newton expansion among siblings (Remark~\ref{rem:divdiff}), and thus explicit expansions for the projector basis in the monomial basis (Section~\ref{sec:inverse-evaluation}, equations~\eqref{eq:projector-Newton-expansion}--\eqref{eq:block-back-substitution}).

We have also given a geometric interpretation of the determinant exponents. The admissible continuation data of a node form a higher-dimensional Young diagram (Definition~\ref{def:Y}), and the exponent associated with two siblings is the volume of the intersection of their diagrams (Proposition~\ref{prop:overlap}). Intervals, ordinary Young diagrams and plane partitions arise at successive depths (Section~\ref{sec:lowdepths}). This description expresses the multiplicities in the determinant through the overlap of the branching data below the two nodes. It also suggests studying how these diagrams change under operations on degeneracy graphs (Remark~\ref{rem:young-operations}).

Behind both the proof and the geometric description lies a simple structural fact: the Vandermonde determinant is the basic building block. The graph determinant is a product of Vandermonde determinants of eigenvalues of siblings, one for each set of siblings and each value of the continuation data (Corollary~\ref{cor:vdm}). In the proof, each of these factors is the determinant of a one-layer Newton--Vandermonde matrix (Remark~\ref{rem:vdm}): either a block of the factor produced by the Newton expansion among siblings at some step of the induction (Lemma~\ref{lem:newton}), or the Newton evaluation matrix of one of the one-layer graphs at which the induction ends. The two-fold tensor product of algebras generated by single elements, whose evaluation matrix is the Kronecker product of two Vandermonde matrices, is the simplest special case. Degeneracy graphs are much more general: the branching varies from node to node, the evaluation matrix is no longer a Kronecker product, and the Newton evaluation matrix is in general not triangular, yet the determinant keeps the structure of a product of Vandermonde determinants.

Our proof is purely linear-algebraic: it uses the graph Newton polynomials
and Newton expansions of the rows of the evaluation matrix, and does not rely on the theory
of Gröbner bases.  Nevertheless there is a natural connection.  Let
\(X\subset\C^L\) be the set of joint eigenvalue tuples of \(C_1,\ldots,C_L\),
one for each final node, and \(I(X)\) the ideal of polynomials vanishing on
\(X\).  Then \(t_i\mapsto C_i\) induces
\(\cA_L\cong\C[t_1,\ldots,t_L]/I(X)\), and by Theorem~\ref{thm:KR} the graph
monomials form a basis of this quotient.  Their exponent set is closed under
decreasing any coordinate, as is the set of standard monomials of a Gröbner
basis of \(I(X)\) \cite{CLO}; for the lexicographic order the standard
monomials can be constructed combinatorially from \(X\)
\cite{CerliencoMureddu,Ceria}.  The relation between these constructions is
left for future work.

Moreover, we would like to extend our method to the case of Brauer and walled Brauer algebras. The determinant theorem (Theorem~\ref{thm:KR}) will be immediately applicable to the commutative semisimple subalgebras of those algebras provided that some selected family of commuting elements separates the necessary labels. A family of Jucys–Murphy elements along with branching paths is a natural candidate for this programme. The construction of degeneracy graphs would provide us with monomial bases and expansions of projectors in terms of spectral information.

Our main objective in these applications is the construction of matrix units. We plan to use the theorem to obtain the projectors on the diagonal and then combine them with suitable elements connecting paths in the same simple block. If \(\lambda\) labels such a block and \(T,S\) label its paths, the aim is to construct operators satisfying
\begin{equation}
E^{\lambda}_{TS}E^{\mu}_{UV}
=\delta_{\lambda\mu}\delta_{SU}E^{\lambda}_{TV},
\qquad E^{\lambda}_{TT}=P^{\lambda}_{T}.
\end{equation}
The projector expansions (equations~\eqref{eq:projector-monomial-expansion} and \eqref{eq:projector-Newton-expansion}) supply the diagonal part of this construction. The remaining task is to choose the connecting elements and fix their normalisations consistently. We expect this approach to be useful for explicit calculations in tensor representations of Brauer and walled Brauer algebras, including applications in quantum information. We will first consider semisimple settings; extending the construction to non-semisimple specialisations will require additional information beyond the joint eigenvalues.

\section*{Note added}

As we were reaching the final stages of the write-up of this paper, we received
communication of an independent proof of the determinant and monomial-basis
conjectures of Kemp and Ramgoolam \cite{KempRamgoolam2026} by Haobo Ma and
Wenlin Zhang, formalized in Lean~4
(\url{https://github.com/the-omega-institute/degeneracy-graph-lean}).

\section*{Acknowledgments}

We thank Garreth Kemp for the collaboration
\cite{KempRamgoolam2026} in which the degeneracy graphs, the monomial bases and
the determinant conjectures proved here were formulated.  It is a pleasure to
acknowledge useful conversations with Yang-Hui He, Vishnu Jejjala, Garreth Kemp, Yang Lei, Brien Nolan, Denjoe O'Connor and Adrian Padellaro.  We
acknowledge useful conversations with ChatGPT (OpenAI) and Claude (Anthropic),
which helped in conceiving and developing the proof, in identifying and repairing a gap in an
earlier version of the triangularity argument, and in checking the examples
and exponent formulae computationally.  SR is supported by the Science and Technology Facilities Council
(STFC) Consolidated Grant ST/T000686/1 ``Amplitudes, strings and duality''
and the Royal Society International Exchanges Grant
IEC\textbackslash NSFC\textbackslash 242376.  The work of MS is carried out under IRA
Programme, project no.\ FENG.02.01-IP.05-0006/23, financed by the FENG program
2021-2027, Priority FENG.02, Measure FENG.02.01., with the support of the FNP.

\appendix

\section{Further examples of the block triangular reduction}
\label{app:examples}

In this appendix we work through two examples of \cite{KempRamgoolam2026}
whose Newton evaluation matrices are not triangular.  The first is spelled out
step by step, following the treatment of the graph of Figure~\ref{fig:L2} in
Section~\ref{sec:evalNewton}.  The second is worked out to the same level of detail,
with each step referred back to Lemma~\ref{lem:restrict}, Corollary~\ref{cor:factor},
Lemmas~\ref{lem:newton} and~\ref{lem:degree}, and the proof of
Theorem~\ref{thm:newtondet}.

\subsection{The running example \texorpdfstring{\(D=(2,4,7)\)}{D=(2,4,7)}}
\label{app:D247}

The graph has \(D=(2,4,7)\), \(p_1=[2,2]\), \(c_2=[2,1,2,2]\), and is drawn in
Figure~\ref{fig:D247app}, with the labels and eigenvalues used below.  Node \(1\in V_1\) has children \(1,2\in V_2\)
and node \(2\in V_1\) has children \(3,4\in V_2\).  In \(V_3\), node \(1\in V_2\)
has children \(T_1<T_2\), node \(2\) has the single child \(T_3\), node \(3\) has
children \(T_4<T_5\), and node \(4\) has children \(T_6<T_7\).  To keep the
matrices readable we write
\begin{equation}
x_k=x^{(1)}_k,\qquad y_k=x^{(2)}_k,\qquad z_k=x^{(3)}_{T_k},\qquad
\alpha=x_2-x_1 .
\end{equation}

\begin{figure}[ht]
\centering
\begin{tikzpicture}[yscale=1.1]
\node[rt] (r) at (0,2.9) {};
\node[gn] (a1) at (2.0,1.2) {$1$};
\node[gn] (a2) at (2.0,4.55) {$2$};
\node[gn] (b1) at (4.0,0.45) {$1$};
\node[gn] (b2) at (4.0,2.0) {$2$};
\node[gn] (b3) at (4.0,3.55) {$3$};
\node[gn] (b4) at (4.0,5.55) {$4$};
\foreach \k/\y in {1/0,2/0.9,3/2.0}
  \node[gn,bad] (t\k) at (6.0,\y) {$T_\k$};
\foreach \k/\y in {4/3.1,5/4.0,6/5.1,7/6.0}
  \node[gn] (t\k) at (6.0,\y) {$T_\k$};
\draw (r)--(a1) (a1)--(b1) (a1)--(b2) (a2)--(b3)
  (b1)--(t1) (b1)--(t2) (b2)--(t3) (b3)--(t4) (b3)--(t5) (b4)--(t6);
\draw[pth] (r)--(a2) (a2)--(b4) (b4)--(t7);
\node[below=1mm,font=\small] at (a1.south) {$x_1$};
\node[above=1mm,font=\small] at (a2.north) {$x_2$};
\foreach \k in {1,2,3,4} \node[above=0.5mm,font=\scriptsize] at (b\k.north) {$y_\k$};
\foreach \k in {1,...,7} \node[right=1mm,font=\small] at (t\k.east) {$z_\k$};
\foreach \x/\l in {2.0/V_1,4.0/V_2,6.0/V_3} \node[font=\small] at (\x,6.9) {$\l$};
\end{tikzpicture}
\caption{The graph \(D=(2,4,7)\), \(p_1=[2,2]\), \(c_2=[2,1,2,2]\) of
Figure~\ref{fig:k2}, with the final nodes labelled and the eigenvalues
\(x_k=x^{(1)}_k\), \(y_k=x^{(2)}_k\), \(z_k=x^{(3)}_{T_k}\).  The path of
\(T_7\) is drawn thick.  Its rank data are \((d_2,d_3)=(2,2)\), and
\(R_1(2,2)=\{2\}\), so \(N_{T_7}=(C_2-y_3)(C_3-z_6)\) has no factor
\(C_1-x_1\) and, for generic eigenvalues, does not vanish at the shaded final nodes \(T_1,T_2,T_3\),
which descend from node \(1\in V_1\).}
\label{fig:D247app}
\end{figure}

\paragraph{Step 1: rank data, graph monomials and graph Newton polynomials.}
Reading each path from right to left as in Section~\ref{sec:admchld} gives
\begin{equation}
{\renewcommand{\arraystretch}{1.3}\begin{array}{c|c|c|c|l}
T & \text{path }v_1\to v_2 & (a,d_2,d_3) & m_T & N_T\\\hline
T_1 & 1\to1 & (0,1,1) & 1 & 1\\
T_2 & 1\to1 & (0,1,2) & C_3 & C_3-z_1\\
T_3 & 1\to2 & (0,2,1) & C_2 & C_2-y_1\\
T_4 & 2\to3 & (1,1,1) & C_1 & C_1-x_1\\
T_5 & 2\to3 & (1,1,2) & C_1C_3 & (C_1-x_1)(C_3-z_4)\\
T_6 & 2\to4 & (1,2,1) & C_1C_2 & (C_1-x_1)(C_2-y_3)\\
T_7 & 2\to4 & (0,2,2) & C_2C_3 & (C_2-y_3)(C_3-z_6)
\end{array}}
\end{equation}
The last row is the one which breaks triangularity.  For \(T_7\),
\((d_2,d_3)=(2,2)\), and by \eqref{eq:R1} and \eqref{eq:SLm1}--\eqref{eq:Si},
\begin{equation}
R_1(2,2)=S^{(1)}_{[2,2]}
=\bigl\{u\in V_1:\ |\Ch_2(u)\cap S^{(2)}_{[2]}|\ge2\bigr\},
\end{equation}
where
\begin{equation}
S^{(2)}_{[2]}=\bigl\{w\in V_2:\ |\Ch_3(w)|\ge2\bigr\}=\{1,3,4\}.
\end{equation}
Node \(1\in V_1\) has children \(\{1,2\}\), of which only \(1\) lies in
\(S^{(2)}_{[2]}\); node \(2\in V_1\) has children \(\{3,4\}\), both in
\(S^{(2)}_{[2]}\).  Hence \(R_1(2,2)=\{2\}\): node \(1\) is an earlier sibling
of \(v_1(T_7)=2\) but is not admissible, so \(N_{T_7}\) has no factor
\(C_1-x_1\), and its rank datum \(a\) is \(0\).

\paragraph{Step 2: the Newton evaluation matrix.}  With rows \(T'\) and columns
\(N_T\) in path order,
\begin{equation}
{\footnotesize\setlength{\arraycolsep}{2pt}
\cN=
\begin{pmatrix}
1&0&0&0&0&0&(y_1-y_3)(z_1-z_6)\\
1&z_2-z_1&0&0&0&0&(y_1-y_3)(z_2-z_6)\\
1&z_3-z_1&y_2-y_1&0&0&0&(y_2-y_3)(z_3-z_6)\\
1&z_4-z_1&y_3-y_1&\alpha&0&0&0\\
1&z_5-z_1&y_3-y_1&\alpha&\alpha(z_5-z_4)&0&0\\
1&z_6-z_1&y_4-y_1&\alpha&\alpha(z_6-z_4)&\alpha(y_4-y_3)&0\\
1&z_7-z_1&y_4-y_1&\alpha&\alpha(z_7-z_4)&\alpha(y_4-y_3)&(y_4-y_3)(z_7-z_6)
\end{pmatrix}
\begin{array}{l}
\leftarrow T_1\\ \leftarrow T_2\\ \leftarrow T_3\\ \leftarrow T_4\\
\leftarrow T_5\\ \leftarrow T_6\\ \leftarrow T_7
\end{array}}
\label{eq:N247}
\end{equation}
The only entries above the diagonal are in the column of \(N_{T_7}\), in the
rows of \(T_1,T_2,T_3\): these are the final nodes descending from node
\(1\in V_1\), at which \(N_{T_7}\) generically does not vanish because node \(1\) is not
admissible for \(T_7\) (the shaded nodes of Figure~\ref{fig:D247app}).

\paragraph{Step 3: Newton expansion among the siblings of layer 3.}  The
nodes of \(V_2\) with more than one child are \(1,3,4\), with children
\(T_1<T_2\), \(T_4<T_5\) and \(T_6<T_7\).  As in \eqref{eq:rowopexample}, we
expand the rows of each pair of siblings in the Newton basis \(1,\ \xi-\xi_1\)
of their eigenvalues of \(C_3\) (Lemma~\ref{lem:newton}):
\begin{equation}
\mathbf r_{T_2}=\mathbf r_{T_1}+(z_2-z_1)\,\Delta_{1,2},\qquad
\mathbf r_{T_5}=\mathbf r_{T_4}+(z_5-z_4)\,\Delta_{3,2},\qquad
\mathbf r_{T_7}=\mathbf r_{T_6}+(z_7-z_6)\,\Delta_{4,2},
\end{equation}
where \(\mathbf r_{T'}\) is the row of \(T'\) in \eqref{eq:N247}.  The rows of
\(T_1,T_3,T_4,T_6\) are kept, as the Newton coefficients \(\Delta_{q,1}\), and
the new rows \(\Delta_{1,2},\Delta_{3,2},\Delta_{4,2}\) replace the rows of
\(T_2,T_5,T_7\).  Thus \(\cN=W\cN'\) as in \eqref{eq:NPfactor}, where the
blocks of \(W\) for the nodes \(1,3,4\in V_2\) are the one-layer
Newton--Vandermonde matrices \(\bigl(\begin{smallmatrix}1&0\\1&z_2-z_1\end{smallmatrix}\bigr)\),
\(\bigl(\begin{smallmatrix}1&0\\1&z_5-z_4\end{smallmatrix}\bigr)\),
\(\bigl(\begin{smallmatrix}1&0\\1&z_7-z_6\end{smallmatrix}\bigr)\), and the
block for node \(2\) is \((1)\); hence \(\det W=(z_2-z_1)(z_5-z_4)(z_7-z_6)\).  We now list rows and columns by the position
\(d_3\) of the final node among its siblings: first the final nodes with
\(d_3=1\), namely \(T_1,T_3,T_4,T_6\), the first children of the nodes
\(1,2,3,4\in V_2\), then those with \(d_3=2\), namely \(T_2,T_5,T_7\), the
second children of the nodes \(1,3,4\).  The nodes of \(V_2\) have \(2,1,2,2\)
children, so there are \(d_{\max}=2\) diagonal blocks, of sizes \(4\) and \(3\):
the parts of the conjugate of the partition \((2,2,2,1)\) of \(7\).  The result
is
\begin{equation}
{\footnotesize\setlength{\arraycolsep}{2pt}
\cN'=
\left(\begin{array}{cccc|ccc}
1&0&0&0&0&0&(y_1-y_3)(z_1-z_6)\\
1&y_2-y_1&0&0&z_3-z_1&0&(y_2-y_3)(z_3-z_6)\\
1&y_3-y_1&\alpha&0&z_4-z_1&0&0\\
1&y_4-y_1&\alpha&\alpha(y_4-y_3)&z_6-z_1&\alpha(z_6-z_4)&0\\ \hline
0&0&0&0&1&0&y_1-y_3\\
0&0&0&0&1&\alpha&0\\
0&0&0&0&1&\alpha&y_4-y_3
\end{array}\right)
\begin{array}{l}
\leftarrow T_1\\ \leftarrow T_3\\ \leftarrow T_4\\ \leftarrow T_6\\
\leftarrow \Delta_{1,2}\\ \leftarrow \Delta_{3,2}\\ \leftarrow \Delta_{4,2}
\end{array}}
\label{eq:N247after}
\end{equation}
with columns \(N_{T_1},N_{T_3},N_{T_4},N_{T_6}\,|\,N_{T_2},N_{T_5},N_{T_7}\).
The entries can be read off from the degree property, Lemma~\ref{lem:degree},
as in the proof of Theorem~\ref{thm:newtondet}.  The columns \(N_{T_1},N_{T_3},
N_{T_4},N_{T_6}\) do not involve \(C_3\), so they take equal values on
siblings, and their Newton coefficients \(\Delta_{q,2}\) vanish: this is the
zero block in the lower left.  The columns \(N_{T_2},N_{T_5},N_{T_7}\) are
linear in \(C_3\), with coefficients of \(C_3\) equal to \(1\), \(C_1-x_1\) and
\(C_2-y_3\), and \(\Delta_{q,2}\) returns these coefficients evaluated on the
path to \(q\).

\paragraph{Step 4: the diagonal blocks.}  The upper left block, labelled by
\(d_3=1\), has rows and columns labelled by \(S^{(2)}_{[1]}=\{1,2,3,4\}\).  It
is the Newton evaluation matrix of \(\Gamma_{[1]}\), the graph obtained by
deleting \(V_3\), which is the graph \(D=(2,4)\), \(p_1=[2,2]\) of
Figure~\ref{fig:D24}; indeed, with \(x^{(2)}_{T_k}\) replaced by \(y_k\), the
block is the matrix of Section~\ref{sec:evalNewton} for that graph.  It is
lower triangular, with determinant
\begin{equation}
\det\cN(\Gamma_{[1]})=\alpha^2\,(y_2-y_1)(y_4-y_3).
\end{equation}
The lower right block, labelled by \(d_3=2\), has rows and columns labelled by
\(S^{(2)}_{[2]}=\{1,3,4\}\), the nodes of \(V_2\) with at least two children.
It is the Newton evaluation matrix of \(\Gamma_{[2]}\): node \(1\in V_1\) keeps
the single child \(1\), and node \(2\in V_1\) keeps both children \(3<4\).
This is the graph \(D=(2,3)\), \(p_1=[1,2]\) of Figure~\ref{fig:L2}, and with
\(x^{(2)}_{T_1},x^{(2)}_{T_2},x^{(2)}_{T_3}\) replaced by \(y_1,y_3,y_4\) the
block is the matrix \eqref{eq:E23}.  It is not triangular, and the recursion is
applied to it once more: the Newton expansion \eqref{eq:rowopexample} reduces
it to the block triangular form \eqref{eq:E23after}, and
\begin{equation}
\det\cN(\Gamma_{[2]})=\alpha\,(y_4-y_3).
\end{equation}

\paragraph{Step 5: the determinant.}  Combining the steps,
\begin{equation}
\begin{aligned}
\det\cN
&=(z_2-z_1)(z_5-z_4)(z_7-z_6)\,\det\cN(\Gamma_{[1]})\,\det\cN(\Gamma_{[2]})\\
&=\alpha^3\,(y_2-y_1)(y_4-y_3)^2\,(z_2-z_1)(z_5-z_4)(z_7-z_6),
\end{aligned}
\end{equation}
which is the determinant given in Section~\ref{sec:youngexamples}.  The
exponent \(3\) of \(\alpha=x^{(1)}_2-x^{(1)}_1\) is assembled as \(2+1\) from the
two diagonal blocks.  In the language of Section~\ref{sec:young}, the points
of \(\cY(1)\cap\cY(2)\) are \((d_2,d_3)=(1,1),(2,1),(1,2)\): the two points
with \(d_3=1\) are accounted for by \(\Gamma_{[1]}\), and the point with
\(d_3=2\) by \(\Gamma_{[2]}\).  Similarly, the factor \(y_4-y_3\) appears once
in each block, matching \(|\cY(3)\cap\cY(4)|=2\) for the intervals
\(\cY(3)=\cY(4)=\{1,2\}\).  All the matrices above have been checked in exact
arithmetic.

\subsection{The example \texorpdfstring{\(D=(4,10)\)}{D=(4,10)}}
\label{app:D410}

Take \(D=(4,10)\), \(p_1=[2,2,3,3]\): the nodes \(1,2,3,4\in V_1\) have
children \(T_1<T_2\), \(T_3<T_4\), \(T_5<T_6<T_7\) and \(T_8<T_9<T_{10}\)
(Figure~\ref{fig:D410}).  This is a two-layer graph, \(L=2\), so the
recursion in the proof of Theorem~\ref{thm:newtondet} has a single step: removing layer
\(2\) leaves one-layer graphs, whose Newton evaluation matrices are the
ordinary Newton--Vandermonde matrices.  The new feature compared with
Appendix~\ref{app:D247} is that nodes \(3\) and \(4\) have three children, so
the Newton expansion among siblings involves quadratic basis polynomials, and
the matrix splits into three diagonal blocks instead of two.  We write
\begin{equation}
x_k=x^{(1)}_k,\qquad y_k=x^{(2)}_{T_k},\qquad
x_{m,k}=x_m-x_k,\qquad y_{m,k}=y_m-y_k ,
\end{equation}
so that, for instance, \(y_{10,8}=y_{10}-y_8\).  Since siblings have distinct
eigenvalues (Section~\ref{sec:algebra}), \(x_{m,k}\neq0\) for \(m\neq k\),
and \(y_{m,k}\neq0\) whenever \(T_m,T_k\) are siblings.

\begin{figure}[ht]
\centering
\begin{tikzpicture}[yscale=0.85]
\node[rt] (r) at (0,4.0) {};
\foreach \k/\y in {1/0.4,2/2.3,3/4.6,4/7.3} \node[gn] (a\k) at (2.0,\y) {$\k$};
\foreach \k/\y/\c in {1/0/T_1,2/0.8/T_2,3/1.9/T_3,4/2.7/T_4,5/3.8/T_5,6/4.6/T_6,7/5.4/T_7,8/6.5/T_8,9/7.3/T_9,10/8.1/T_{10}}
  \node[gn,minimum size=5.5mm,font=\scriptsize] (t\k) at (4.2,\y) {$\c$};
\foreach \k in {1,2,3,4} \node[gn,bad,minimum size=5.5mm,font=\scriptsize] at (t\k) {$T_\k$};
\draw (r)--(a1) (r)--(a2) (a1)--(t1) (a1)--(t2) (a2)--(t3) (a2)--(t4)
  (a3)--(t5) (a3)--(t6) (a4)--(t8) (a4)--(t9);
\draw[pth] (r)--(a3) (a3)--(t7) (r)--(a4) (a4)--(t10);
\foreach \k in {1,2,3,4} \node[left=1.5mm,font=\small] at (a\k.north west) {$x_\k$};
\foreach \k in {1,...,10} \node[right=1mm,font=\scriptsize] at (t\k.east) {$y_{\k}$};
\node[font=\small] at (2.0,8.9) {$V_1$};
\node[font=\small] at (4.2,8.9) {$V_2$};
\end{tikzpicture}
\caption{The graph with \(D=(4,10)\), \(p_1=[2,2,3,3]\), with the eigenvalues
\(x_k=x^{(1)}_k\) of \(C_1\) on the nodes of \(V_1\) and \(y_k=x^{(2)}_{T_k}\)
of \(C_2\) on the final nodes.  The paths of \(T_7\) and \(T_{10}\), the third
children of nodes \(3\) and \(4\), are drawn thick.  Only nodes \(3,4\) have
three children, so the graph Newton polynomials \(N_{T_7}\) and
\(N_{T_{10}}\) have no factors \(C_1-x_1\), \(C_1-x_2\), and, for generic eigenvalues, they do not
vanish at the shaded final nodes \(T_1,\ldots,T_4\).}
\label{fig:D410}
\end{figure}

\paragraph{Step 1: admissible sets, rank data, graph monomials and graph
Newton polynomials.}  For \(L=2\) the only sets \eqref{eq:SLm1} are
\(S^{(1)}_{[d_2]}\), the nodes of \(V_1\) with at least \(d_2\) children, and
the admissible set at layer \(1\) is \(R_1(d_2)=S^{(1)}_{[d_2]}\) by
\eqref{eq:R1}.  The numbers of children are \(2,2,3,3\), so
\begin{equation}
R_1(1)=R_1(2)=\{1,2,3,4\},\qquad R_1(3)=\{3,4\},\qquad R_1(d_2)=\emptyset\ \ (d_2\ge4).
\label{eq:R1D410}
\end{equation}
The rank data are read off as in Section~\ref{sec:admchld}: \(d_2\) is the
position of \(T\) among its siblings, and \(a+1\) is the position of
\(v_1(T)\) in \(R_1(d_2)\).  For \(T_1,\ldots,T_6,T_8,T_9\) the node
\(v_1(T)\) is the \(k\)-th element of \(R_1(d_2)=\{1,2,3,4\}\) when
\(v_1(T)=k\), so \(a=v_1(T)-1\).  For the third children \(T_7\) and
\(T_{10}\), \(d_2=3\) and \(R_1(3)=\{3,4\}\): node \(3\) is its first element
and node \(4\) its second, so \(a=0\) for \(T_7\) and \(a=1\) for \(T_{10}\).
The graph Newton polynomials \eqref{eq:graphNewton} have a factor
\(C_1-x_w\) for each \(w\in R_1(d_2)\) with \(w<v_1(T)\), and a factor
\(C_2-y_w\) for each earlier sibling \(w\) of \(T\):
\begin{equation}
{\renewcommand{\arraystretch}{1.3}\begin{array}{c|c|c|c|l}
T & v_1(T) & (a,d_2) & m_T & N_T\\\hline
T_1 & 1 & (0,1) & 1 & 1\\
T_2 & 1 & (0,2) & C_2 & C_2-y_1\\
T_3 & 2 & (1,1) & C_1 & C_1-x_1\\
T_4 & 2 & (1,2) & C_1C_2 & (C_1-x_1)(C_2-y_3)\\
T_5 & 3 & (2,1) & C_1^2 & (C_1-x_1)(C_1-x_2)\\
T_6 & 3 & (2,2) & C_1^2C_2 & (C_1-x_1)(C_1-x_2)(C_2-y_5)\\
T_7 & 3 & (0,3) & C_2^2 & (C_2-y_5)(C_2-y_6)\\
T_8 & 4 & (3,1) & C_1^3 & (C_1-x_1)(C_1-x_2)(C_1-x_3)\\
T_9 & 4 & (3,2) & C_1^3C_2 & (C_1-x_1)(C_1-x_2)(C_1-x_3)(C_2-y_8)\\
T_{10} & 4 & (1,3) & C_1C_2^2 & (C_1-x_3)(C_2-y_8)(C_2-y_9)
\end{array}}
\label{eq:tableD410}
\end{equation}
In each row the leading term of \(N_T\) is \(m_T\), as in
\eqref{eq:leadingterm}.  The ten graph monomials are those of
\eqref{eq:Mset}: since \(|S^{(1)}_{[1]}|=|S^{(1)}_{[2]}|=4\) and
\(|S^{(1)}_{[3]}|=2\),
\(\Monom(1)=\{1,C_1,C_1^2,C_1^3\}\), \(\Monom(2)=\{1,C_1,C_1^2,C_1^3\}\times C_2\)
and \(\Monom(3)=\{1,C_1\}\times C_2^2\).

\paragraph{Step 2: the Newton evaluation matrix.}  With rows \(T'\) and
columns \(N_T\) in path order, \(\cN_{T',T}=N_T(T')\) is obtained by
substituting \(C_1\to x_{v_1(T')}\), \(C_2\to y_{T'}\) in the last column of
\eqref{eq:tableD410}:
\begin{equation}
\resizebox{\textwidth}{!}{$\displaystyle
\setlength{\arraycolsep}{3pt}
\cN=
\begin{pmatrix}
1&0&0&0&0&0&y_{1,5}y_{1,6}&0&0&x_{1,3}y_{1,8}y_{1,9}\\
1&y_{2,1}&0&0&0&0&y_{2,5}y_{2,6}&0&0&x_{1,3}y_{2,8}y_{2,9}\\
1&y_{3,1}&x_{2,1}&0&0&0&y_{3,5}y_{3,6}&0&0&x_{2,3}y_{3,8}y_{3,9}\\
1&y_{4,1}&x_{2,1}&x_{2,1}y_{4,3}&0&0&y_{4,5}y_{4,6}&0&0&x_{2,3}y_{4,8}y_{4,9}\\
1&y_{5,1}&x_{3,1}&x_{3,1}y_{5,3}&x_{3,1}x_{3,2}&0&0&0&0&0\\
1&y_{6,1}&x_{3,1}&x_{3,1}y_{6,3}&x_{3,1}x_{3,2}&x_{3,1}x_{3,2}y_{6,5}&0&0&0&0\\
1&y_{7,1}&x_{3,1}&x_{3,1}y_{7,3}&x_{3,1}x_{3,2}&x_{3,1}x_{3,2}y_{7,5}&y_{7,5}y_{7,6}&0&0&0\\
1&y_{8,1}&x_{4,1}&x_{4,1}y_{8,3}&x_{4,1}x_{4,2}&x_{4,1}x_{4,2}y_{8,5}&y_{8,5}y_{8,6}&x_{4,1}x_{4,2}x_{4,3}&0&0\\
1&y_{9,1}&x_{4,1}&x_{4,1}y_{9,3}&x_{4,1}x_{4,2}&x_{4,1}x_{4,2}y_{9,5}&y_{9,5}y_{9,6}&x_{4,1}x_{4,2}x_{4,3}&x_{4,1}x_{4,2}x_{4,3}y_{9,8}&0\\
1&y_{10,1}&x_{4,1}&x_{4,1}y_{10,3}&x_{4,1}x_{4,2}&x_{4,1}x_{4,2}y_{10,5}&y_{10,5}y_{10,6}&x_{4,1}x_{4,2}x_{4,3}&x_{4,1}x_{4,2}x_{4,3}y_{10,8}&x_{4,3}y_{10,8}y_{10,9}
\end{pmatrix}
\begin{array}{l}
\leftarrow T_1\\ \leftarrow T_2\\ \leftarrow T_3\\ \leftarrow T_4\\
\leftarrow T_5\\ \leftarrow T_6\\ \leftarrow T_7\\ \leftarrow T_8\\
\leftarrow T_9\\ \leftarrow T_{10}
\end{array}$}
\label{eq:N410}
\end{equation}
The diagonal entries are the values \eqref{eq:diagonal}, e.g.\
\(\cN_{T_{10},T_{10}}=x_{4,3}\,y_{10,8}\,y_{10,9}\): one factor for the single
earlier element \(3\) of \(R_1(3)=\{3,4\}\), and one for each of the two
earlier siblings \(T_8,T_9\).

The nonzero entries above the diagonal are the eight entries of the columns of
\(N_{T_7}\) and \(N_{T_{10}}\) in the rows of \(T_1,\ldots,T_4\).  The
mechanism is that of the paragraph before Section~\ref{sec:general}.  If
\(T'\) precedes \(T\) and their paths first differ at layer \(1\), then
\(N_T(T')=0\) as soon as \(v_1(T')\in R_1(d_2(T))\).  For \(d_2(T)\le2\) the
admissible set is all of \(V_1\), so this always applies; for \(T_7\) and
\(T_{10}\), \(d_2=3\), and \(v_1(T')\in\{1,2\}\) is not admissible.  If the
paths first differ at layer \(2\), i.e.\ \(T'\) is an earlier sibling of \(T\),
then \(R_2=\Ch_2(v_1)\) by \eqref{eq:RL} and \(N_T(T')=0\) always; this is why,
for instance, the column of \(N_{T_7}\) vanishes in the rows of \(T_5,T_6\).
In addition, \(N_{T_{10}}\) contains \(C_1-x_3\), so its column vanishes at
all three children of node \(3\).

As in \eqref{eq:E23}, for generic eigenvalues every row of \eqref{eq:N410}
has at least two nonzero entries, so no reordering of the rows and columns of
this fixed matrix makes it triangular.  Relabelling the graph is a different
operation: it changes the admissible sets, and hence the graph Newton
polynomials themselves.  By Remark~\ref{rem:nontri}, if the nodes of \(V_1\)
were labelled with \(p_{1,1}\ge p_{1,2}\ge\cdots\), i.e.\ with the two nodes
with three children first, the resulting Newton evaluation matrix would be
lower triangular; with the increasing convention of \cite{KempRamgoolam2026}
it is not.

\paragraph{Step 3: the final-layer-excised graphs and the factorisation of
\(N_T\) (Lemma~\ref{lem:restrict} and Corollary~\ref{cor:factor}).}  Before
doing the Newton expansion, we identify the blocks it will produce.  By
Definition~\ref{def:final-layer-excised}, \(\Gamma_{[d]}\) is obtained by
deleting \(V_2\) and keeping the nodes of \(V_1\) with at least \(d\)
children, \(S^{(1)}_{[d]}\):
\begin{equation}
\Gamma_{[1]}=\Gamma_{[2]}:\ \text{one layer, nodes }1,2,3,4,
\qquad
\Gamma_{[3]}:\ \text{one layer, nodes }3,4 .
\end{equation}
In \(\Gamma_{[3]}\) the surviving nodes are relabelled \(1,2\) in the same
order, but they keep their eigenvalues \(x_3,x_4\).  Since \(L=2\),
Lemma~\ref{lem:restrict} has content only for \(i=L-1=1\), where
\eqref{eq:Rrestrict} says that the admissible set \(R_1(d)\) of \(\Gamma\) is
the whole final layer of \(\Gamma_{[d]}\), i.e.\ the set of children of the
root in \(\Gamma_{[d]}\).  This is \eqref{eq:R1D410}.  The graph Newton
polynomials of the one-layer graphs \(\Gamma_{[d]}\) are ordinary Newton
polynomials in \(C_1\):
\begin{equation}
\begin{aligned}
\Gamma_{[1]},\ \Gamma_{[2]}:&\quad N_1=1,\ \ N_2=C_1-x_1,\ \ N_3=(C_1-x_1)(C_1-x_2),\\
&\quad N_4=(C_1-x_1)(C_1-x_2)(C_1-x_3),\\
\Gamma_{[3]}:&\quad N_3=1,\ \ N_4=C_1-x_3,
\end{aligned}
\label{eq:NGammaD410}
\end{equation}
where we label the nodes by their names \(q\in V_1\) in \(\Gamma\).
Corollary~\ref{cor:factor} labels each final node \(T\) by the pair
\((q,d)=(v_1(T),d_2(T))\), and writes \(N_T\) as \(N_q(\Gamma_{[d]})\) times
the product over the earlier siblings of \(T\):
\begin{equation}
{\renewcommand{\arraystretch}{1.3}\begin{array}{c|c|l|l}
T & (q,d) & N_q(\Gamma_{[d]}) & \prod_{w<T,\,w\in\Ch_2(q)}(C_2-y_w)\\\hline
T_1,\,T_3,\,T_5,\,T_8 & (q,1) & N_q(\Gamma_{[1]}) & 1\\
T_2 & (1,2) & 1 & C_2-y_1\\
T_4 & (2,2) & C_1-x_1 & C_2-y_3\\
T_6 & (3,2) & (C_1-x_1)(C_1-x_2) & C_2-y_5\\
T_9 & (4,2) & (C_1-x_1)(C_1-x_2)(C_1-x_3) & C_2-y_8\\
T_7 & (3,3) & 1 & (C_2-y_5)(C_2-y_6)\\
T_{10} & (4,3) & C_1-x_3 & (C_2-y_8)(C_2-y_9)
\end{array}}
\label{eq:factorD410}
\end{equation}
which reproduces the last column of \eqref{eq:tableD410}.  In particular the
rank datum \(a\) of \(T\) in \(\Gamma\) is the rank datum of \(q\) in
\(\Gamma_{[d]}\): \(a=1\) for \(T_{10}\) because node \(4\) is the second node
of \(\Gamma_{[3]}\), although it is the fourth node of \(\Gamma\).

The factorisation is what makes the Newton expansion work.  On the three
children of node \(q=3\), say, the generator \(C_1\) takes the single value
\(x_3\), so every column \(T'\mapsto N_T(T')\), restricted to
\(T'\in\{T_5,T_6,T_7\}\), is a polynomial in the single variable \(\xi=y_{T'}\):
a polynomial of degree at most \(d'-1\), where \((q',d')\) is the label of
\(T\), whose coefficient of \(\xi^{d'-1}\) is the number
\(N_{q'}(\Gamma_{[d']})\) evaluated at \(C_1=x_3\).  For example, on the
children of node \(3\), the column of \(N_{T_{10}}\) is
\((x_3-x_3)(\xi-y_8)(\xi-y_9)=0\), the column of \(N_{T_9}\) is
\(x_{3,1}x_{3,2}\,x_{3,3}(\xi-y_8)=0\), and the column of \(N_{T_6}\) is
\(x_{3,1}x_{3,2}(\xi-y_5)\), of degree \(1\) with leading coefficient
\(N_3(\Gamma_{[2]})(x_3)=x_{3,1}x_{3,2}\).

\paragraph{Step 4: Newton expansion among siblings.}
Following the proof of Theorem~\ref{thm:newtondet}, for each \(q\in V_1\) we
expand the rows of its children in the Newton basis of their eigenvalues of
\(C_2\) (Lemma~\ref{lem:newton}).  For the nodes \(q=1,2\), with two children, this reads
\begin{equation}
\mathbf r_{T_1}=\Delta_{1,1},\quad
\mathbf r_{T_2}=\Delta_{1,1}+y_{2,1}\,\Delta_{1,2},\qquad
\mathbf r_{T_3}=\Delta_{2,1},\quad
\mathbf r_{T_4}=\Delta_{2,1}+y_{4,3}\,\Delta_{2,2},
\end{equation}
exactly as in Appendix~\ref{app:D247}, where \(\mathbf r_{T'}\) is the row of
\(T'\) in \eqref{eq:N410}.  For \(q=3\), with children \(T_5<T_6<T_7\), the
Newton basis is \(1,\ \xi-y_5,\ (\xi-y_5)(\xi-y_6)\), and
\begin{equation}
\mathbf r_{T_5}=\Delta_{3,1},\qquad
\mathbf r_{T_6}=\Delta_{3,1}+y_{6,5}\,\Delta_{3,2},\qquad
\mathbf r_{T_7}=\Delta_{3,1}+y_{7,5}\,\Delta_{3,2}+y_{7,5}y_{7,6}\,\Delta_{3,3};
\label{eq:Newton410}
\end{equation}
similarly for \(q=4\) with \(T_8<T_9<T_{10}\) and \(y_8,y_9,y_{10}\).  Thus
\(\cN=W\cN'\), as in \eqref{eq:NPfactor}, with
\(W=W_1\oplus W_2\oplus W_3\oplus W_4\),
\begin{equation}
W_1=\begin{pmatrix}1&0\\ 1&y_{2,1}\end{pmatrix},
\end{equation}
\(W_2\) the same with \(y_3,y_4\) in place of \(y_1,y_2\), and \(W_3\), \(W_4\)
given by \eqref{eq:Wn3} with \((\xi_1,\xi_2,\xi_3)=(y_5,y_6,y_7)\) and
\((y_8,y_9,y_{10})\) respectively.  By Lemma~\ref{lem:newton},
\begin{equation}
\det W=y_{2,1}\,y_{4,3}\,y_{6,5}y_{7,5}y_{7,6}\,y_{9,8}y_{10,8}y_{10,9}.
\label{eq:detWD410}
\end{equation}
These are precisely the eight differences of eigenvalues of siblings in the
final layer, each to the first power.  Solving \eqref{eq:Newton410} for the
new rows gives the divided differences \eqref{eq:divdiff123} of
Remark~\ref{rem:divdiff}, for instance
\(\Delta_{3,2}=(\mathbf r_{T_6}-\mathbf r_{T_5})/y_{6,5}\); this inverse
\(Q=W^{-1}\) is needed only for the inversion of
Section~\ref{sec:inverse-evaluation}.

\paragraph{Step 5: the block triangular matrix \(\cN'\).}  We now list
the rows \((q,d)\) of \(\cN'\), and the columns \(N_T\) with \(T\) labelled by
\((q',d')\), by increasing \(d\), and within each \(d\) by increasing \(q\):
\begin{equation}
\begin{aligned}
d=1:&\quad \text{rows }\Delta_{1,1},\Delta_{2,1},\Delta_{3,1},\Delta_{4,1},
\qquad \text{columns }N_{T_1},N_{T_3},N_{T_5},N_{T_8},\\
d=2:&\quad \text{rows }\Delta_{1,2},\Delta_{2,2},\Delta_{3,2},\Delta_{4,2},
\qquad \text{columns }N_{T_2},N_{T_4},N_{T_6},N_{T_9},\\
d=3:&\quad \text{rows }\Delta_{3,3},\Delta_{4,3},
\qquad\qquad\qquad\quad\ \ \text{columns }N_{T_7},N_{T_{10}} .
\end{aligned}
\end{equation}
The same permutation is applied to rows and columns, which does not change
the determinant.  The result is
\begin{equation}
\resizebox{\textwidth}{!}{$\displaystyle\setlength{\arraycolsep}{2.5pt}
\cN'=
\left(\begin{array}{cccc|cccc|cc}
1&0&0&0&0&0&0&0&y_{1,5}y_{1,6}&x_{1,3}y_{1,8}y_{1,9}\\
1&x_{2,1}&0&0&y_{3,1}&0&0&0&y_{3,5}y_{3,6}&x_{2,3}y_{3,8}y_{3,9}\\
1&x_{3,1}&x_{3,1}x_{3,2}&0&y_{5,1}&x_{3,1}y_{5,3}&0&0&0&0\\
1&x_{4,1}&x_{4,1}x_{4,2}&x_{4,1}x_{4,2}x_{4,3}&y_{8,1}&x_{4,1}y_{8,3}&x_{4,1}x_{4,2}y_{8,5}&0&y_{8,5}y_{8,6}&0\\
\hline
0&0&0&0&1&0&0&0&y_1+y_2-y_5-y_6&x_{1,3}(y_1+y_2-y_8-y_9)\\
0&0&0&0&1&x_{2,1}&0&0&y_3+y_4-y_5-y_6&x_{2,3}(y_3+y_4-y_8-y_9)\\
0&0&0&0&1&x_{3,1}&x_{3,1}x_{3,2}&0&0&0\\
0&0&0&0&1&x_{4,1}&x_{4,1}x_{4,2}&x_{4,1}x_{4,2}x_{4,3}&y_8+y_9-y_5-y_6&0\\
\hline
0&0&0&0&0&0&0&0&1&0\\
0&0&0&0&0&0&0&0&1&x_{4,3}
\end{array}\right)
\begin{array}{l}
\leftarrow \Delta_{1,1}\\ \leftarrow \Delta_{2,1}\\ \leftarrow \Delta_{3,1}\\ \leftarrow \Delta_{4,1}\\
\leftarrow \Delta_{1,2}\\ \leftarrow \Delta_{2,2}\\ \leftarrow \Delta_{3,2}\\ \leftarrow \Delta_{4,2}\\
\leftarrow \Delta_{3,3}\\ \leftarrow \Delta_{4,3}
\end{array}$}
\label{eq:N410after}
\end{equation}
with columns \(N_{T_1},N_{T_3},N_{T_5},N_{T_8}\,|\,N_{T_2},N_{T_4},N_{T_6},N_{T_9}\,|\,N_{T_7},N_{T_{10}}\).
Every entry follows from Step 3 and the degree property, Lemma~\ref{lem:degree},
without solving \eqref{eq:Newton410} for the new rows; this is the computation in the proof of
Theorem~\ref{thm:newtondet}, which we now follow block by block.  Write
\(\cN'_{(q,d),T}=\Delta_{q,d}N_T\), with \(T\) labelled by \((q',d')\).  On the
children of \(q\), the column of \(N_T\) is a polynomial in \(\xi\) of degree
at most \(d'-1\) with coefficient \(N_{q'}(\Gamma_{[d']})(q)\) of
\(\xi^{d'-1}\) (Step 3).

\emph{Blocks below the diagonal, \(d>d'\): zero.}  Here the degree is at most
\(d'-1<d-1\), and Lemma~\ref{lem:degree} gives \(\Delta_{q,d}N_T=0\).  For \(d=2\),
\(d'=1\), this is the statement that the columns of
\(N_{T_1},N_{T_3},N_{T_5},N_{T_8}\) do not involve \(C_2\), so they take equal
values on siblings, as in Appendix~\ref{app:D247}.  The new case is \(d=3\),
\(d'\in\{1,2\}\): the columns of \(N_{T_2},N_{T_4},N_{T_6},N_{T_9}\) are linear
in \(C_2\), and the Newton expansion of a linear function stops at the basis
polynomial \(\xi-\xi_1\).  For example, on the children of node \(3\) the
column of \(N_{T_4}\) is
\(x_{3,1}(\xi-y_3)=x_{3,1}y_{5,3}+x_{3,1}(\xi-y_5)\), so that
\(\Delta_{3,1}N_{T_4}=x_{3,1}y_{5,3}\), \(\Delta_{3,2}N_{T_4}=x_{3,1}\) and
\(\Delta_{3,3}N_{T_4}=0\).

\emph{Diagonal blocks, \(d=d'\): the Newton evaluation matrices of
\(\Gamma_{[d]}\).}  Here Lemma~\ref{lem:degree} returns the leading coefficient,
\(\Delta_{q,d}N_T=N_{q'}(\Gamma_{[d]})(q)=\cN(\Gamma_{[d]})_{q,q'}\).  For
\(d=1\) this is just the value of \(N_{q'}(\Gamma_{[1]})\) at \(C_1=x_q\), and
the block is the \(4\times4\) Newton--Vandermonde matrix of
\(x_1,x_2,x_3,x_4\).  For \(d=2\), the columns of \(N_{T_2},N_{T_4},N_{T_6},
N_{T_9}\) are linear in \(C_2\) with coefficients \(1,\ C_1-x_1,\
(C_1-x_1)(C_1-x_2),\ (C_1-x_1)(C_1-x_2)(C_1-x_3)\), by \eqref{eq:factorD410},
and evaluating these at \(C_1=x_q\) gives the same \(4\times4\) matrix again,
since \(\Gamma_{[2]}=\Gamma_{[1]}\).  For \(d=3\) the columns of \(N_{T_7}\) and
\(N_{T_{10}}\) are quadratic in \(C_2\) with coefficients \(1\) and
\(C_1-x_3\), the graph Newton polynomials of \(\Gamma_{[3]}\) in
\eqref{eq:NGammaD410}; evaluated at \(C_1=x_3,x_4\) they give the \(2\times2\)
Newton--Vandermonde matrix of \(x_3,x_4\).  Explicitly, on the children of
node \(4\) the column of \(N_{T_{10}}\) is
\(x_{4,3}(\xi-y_8)(\xi-y_9)=x_{4,3}\,\pi_{4,3}(\xi)\), which is already in Newton
form, so \(\Delta_{4,3}N_{T_{10}}=x_{4,3}\).

\emph{Blocks above the diagonal, \(d<d'\).}  These entries do not enter the
determinant, but they are needed for the inversion of
Section~\ref{sec:inverse-evaluation}.  For \(d=1\), \(\Delta_{q,1}N_T\) is the
value of \(N_T\) at the first child of \(q\), so the first four rows of
\eqref{eq:N410after} are the rows of \(T_1,T_3,T_5,T_8\) in \eqref{eq:N410},
with the columns permuted.  For \(d=2\), \(d'=3\), the columns are those of
\(N_{T_7}\) and \(N_{T_{10}}\).  By \eqref{eq:factorD410}, on the children of
\(q\) each of them is a quadratic polynomial in \(\xi\),
\begin{equation}
h(\xi)=c\,(\xi-\rho_1)(\xi-\rho_2),\qquad
\begin{cases}
c=1,\ \ \rho_1=y_5,\ \rho_2=y_6 & \text{for }N_{T_7},\\
c=x_q-x_3,\ \ \rho_1=y_8,\ \rho_2=y_9 & \text{for }N_{T_{10}},
\end{cases}
\end{equation}
where \(x_q\) is the eigenvalue of \(C_1\) on \(q\).  Let \(\xi_1,\xi_2\) be the
eigenvalues of \(C_2\) on the first two children of \(q\).  Rewriting \(h\) in
the Newton basis \(1,\ \xi-\xi_1,\ (\xi-\xi_1)(\xi-\xi_2)\),
\begin{equation}
h(\xi)=h(\xi_1)+c\,(\xi_1+\xi_2-\rho_1-\rho_2)\,(\xi-\xi_1)
+c\,(\xi-\xi_1)(\xi-\xi_2),
\label{eq:dd2quadratic}
\end{equation}
so that \(\Delta_{q,2}N_T=c\,(\xi_1+\xi_2-\rho_1-\rho_2)\).  (For \(q=1,2\),
with two children, the last term vanishes on the children; for \(q=3,4\) its
coefficient is \(\Delta_{q,3}N_T=c\), as in the diagonal block above.)  With
\((\xi_1,\xi_2)=(y_1,y_2),(y_3,y_4),(y_5,y_6),(y_8,y_9)\) for \(q=1,2,3,4\),
this gives the last two columns of the middle rows of \eqref{eq:N410after}.
This is step 2 of the inversion procedure of
Section~\ref{sec:inverse-evaluation}.  Some of these entries vanish, e.g.\
\(\Delta_{3,2}N_{T_7}=0\) because \((\xi_1,\xi_2)=(\rho_1,\rho_2)=(y_5,y_6)\),
and \(\Delta_{3,2}N_{T_{10}}=\Delta_{4,2}N_{T_{10}}=0\) because \(c=0\) for
\(q=3\) and \(\xi_1+\xi_2=\rho_1+\rho_2\) for \(q=4\), but nothing in the
argument uses this.

The block sizes are \(|S^{(1)}_{[d]}|=4,4,2\).  The numbers of children of the
nodes of \(V_1\), arranged as a partition of \(D_2=10\), are \((3,3,2,2)\),
whose conjugate is \((4,4,2)\), as stated in the proof of
Theorem~\ref{thm:newtondet}.

\paragraph{Step 6: the determinant.}  Since \(\cN=W\cN'\), and the block
triangular form \eqref{eq:N410after} is reached by a simultaneous reordering
of rows and columns, \eqref{eq:detWD410} gives
\begin{equation}
\begin{aligned}
\det\cN
&=\det W\,\det\cN(\Gamma_{[1]})\,\det\cN(\Gamma_{[2]})\,\det\cN(\Gamma_{[3]})\\
&=y_{2,1}\,y_{4,3}\,(y_{6,5}y_{7,5}y_{7,6})\,(y_{9,8}y_{10,8}y_{10,9})
\times\Bigl(\prod_{1\le k<m\le4}x_{m,k}\Bigr)\times\Bigl(\prod_{1\le k<m\le4}x_{m,k}\Bigr)
\times x_{4,3}.
\end{aligned}
\label{eq:detD410}
\end{equation}
Here the recursion stops after one step, because the graphs \(\Gamma_{[d]}\)
have one layer and their Newton evaluation matrices are already triangular.
This agrees with the product of the diagonal entries of \eqref{eq:N410},
\begin{equation}
1\cdot y_{2,1}\cdot x_{2,1}\cdot x_{2,1}y_{4,3}\cdot x_{3,1}x_{3,2}\cdot
x_{3,1}x_{3,2}y_{6,5}\cdot y_{7,5}y_{7,6}\cdot x_{4,1}x_{4,2}x_{4,3}\cdot
x_{4,1}x_{4,2}x_{4,3}y_{9,8}\cdot x_{4,3}y_{10,8}y_{10,9},
\end{equation}
as asserted by \eqref{eq:diagonalproduct}.  The two ways of grouping the
factors are those of the end of the proof of Theorem~\ref{thm:newtondet}: the
\(y\)-factors of the diagonal entries are collected by \(\det W\), and the
\(x\)-factors of the diagonal entry of \(T\) labelled \((q,d)\) form the
diagonal entry \(\cN(\Gamma_{[d]})_{q,q}\).

The exponent of \(x_{m,k}\) is \(2\), except for \(x_{4,3}\), whose exponent
is \(3\).  In the language of Section~\ref{sec:young}, the Young diagrams of
the nodes of \(V_1\) are the intervals
\(\cY(1)=\cY(2)=\{1,2\}\), \(\cY(3)=\cY(4)=\{1,2,3\}\), and the exponent
\(|\cY(p)\cap\cY(q)|=\min(n_p,n_q)\) of \eqref{eq:minchildren} counts the
diagonal blocks containing the factor: each point \(d_2=d\) of
\(\cY(p)\cap\cY(q)\) is the statement that both \(p\) and \(q\) survive in
\(\Gamma_{[d]}\), and so contribute \(x_{q,p}\) to its Newton--Vandermonde
determinant.  For \((p,q)=(3,4)\) the three points \(d_2=1,2,3\) correspond to
the three blocks; for the other pairs only the blocks \(d_2=1,2\) contribute.
Since the monomial and Newton evaluation matrices have the same determinant
\eqref{eq:changebasisdet},
\begin{equation}
\det\cM=y_{2,1}\,y_{4,3}\,y_{6,5}y_{7,5}y_{7,6}\,y_{9,8}y_{10,8}y_{10,9}
\Bigl(\prod_{1\le k<m\le4}x_{m,k}\Bigr)^{2}\,x_{4,3},
\end{equation}
in agreement with \cite{KempRamgoolam2026} and with Theorem~\ref{thm:KR}.
The matrices \eqref{eq:N410} and \eqref{eq:N410after}, the identity
\(\cN=W\cN'\), and the determinants have been checked in exact arithmetic.

\end{document}